%%%%%%%%%%%%%%%%%%%%%%% file typeinst.tex %%%%%%%%%%%%%%%%%%%%%%%%%
%
% This is the LaTeX source for the instructions to authors using
% the LaTeX document class 'llncs.cls' for contributions to
% the Lecture Notes in Computer Sciences series.
% http://www.springer.com/lncs       Springer Heidelberg 2006/05/04
%
% It may be used as a template for your own input - copy it
% to a new file with a new name and use it as the basis
% for your article.
%
% NB: the document class 'llncs' has its own and detailed documentation, see
% ftp://ftp.springer.de/data/pubftp/pub/tex/latex/llncs/latex2e/llncsdoc.pdf
%
%%%%%%%%%%%%%%%%%%%%%%%%%%%%%%%%%%%%%%%%%%%%%%%%%%%%%%%%%%%%%%%%%%%

\documentclass[a4paper]{llncs}

\usepackage{lmodern}%[lighttt]
\usepackage[utf8]{inputenc}
\usepackage[T1]{fontenc}
\usepackage{amssymb}
\usepackage{amsmath}
\setcounter{tocdepth}{3}
\usepackage{graphicx}
\usepackage{multicol}
\usepackage{color}
\usepackage{hyperref}
\usepackage{listings}
\usepackage{fancyvrb}
\usepackage{mathtools}
\usepackage{algpseudocode}
\usepackage{semantic}
\usepackage{chngcntr}
\usepackage{url}
\usepackage{paralist}
\usepackage[absolute]{textpos}

\usepackage[papersize={15.17cm, 23.29cm}, textwidth=12.2cm,
 textheight=19.3cm, centering]{geometry} %simulate LNCS geometry

%\urldef{\mailsa}\path|{alfred.hofmann, ursula.barth, ingrid.haas, frank.holzwarth,|
%\urldef{\mailsb}\path|anna.kramer, leonie.kunz, christine.reiss, nicole.sator,|
%\urldef{\mailsc}\path|erika.siebert-cole, peter.strasser, lncs}@springer.com|    

\newcommand{\keywords}[1]{\par\addvspace\baselineskip
\noindent\keywordname\enspace\ignorespaces#1}

\lstset{numbers=left, basicstyle=\scriptsize\ttfamily,escapeinside={(*}{*)},tabsize=2,frame=single}

\hyphenation{Web-Kit Java-Script}

\begin{document}
\setlength{\TPHorizModule}{\linewidth}
\setlength{\TPVertModule}{\TPHorizModule}
\textblockorigin{8.5mm}{3mm} 
%\textblockorigin{42mm}{10mm} 
\begin{textblock}{1}(0,0)
  \fbox{\parbox{\textwidth}{\scriptsize\copyright Springer-Verlag,
      2014.  This is the author's version of the work. It is posted
      here by permission of Springer-Verlag for  
      your personal use. Not for redistribution. The final 
      publication is published in the proceedings of the 3rd
      Conference on Principles of Security and Trust (POST '14),
      and is available at 
      \href{http://www.link.springer.com}{link.springer.com} 
%   (\href{http://dx.doi.org/10.1007/978-3-642-11747-3_4}{DOI})
}}
\end{textblock}

\mainmatter  % start of an individual contribution

% first the title is needed
%\title{Lecture Notes in Computer Science:\\Authors' Instructions
%for the Preparation\\of Camera-Ready
%Contributions\\to LNCS/LNAI/LNBI Proceedings}
\title{Information Flow Control in WebKit's JavaScript Bytecode}

% a short form should be given in case it is too long for the running head
\titlerunning{Information Flow Control in WebKit's JavaScript Bytecode}

% the name(s) of the author(s) follow(s) next
%
% NB: Chinese authors should write their first names(s) in front of
% their surnames. This ensures that the names appear correctly in
% the running heads and the author index.
%
% \author{Alfred Hofmann%
% \thanks{Please note that the LNCS Editorial assumes that all authors have used
% the western naming convention, with given names preceding surnames. This determines
% the structure of the names in the running heads and the author index.}%
% \and Ursula Barth\and Ingrid Haas\and Frank Holzwarth\and\\
% Anna Kramer\and Leonie Kunz\and Christine Rei\ss\and\\
% Nicole Sator\and Erika Siebert-Cole\and Peter Stra\ss er}
%
\author{Abhishek Bichhawat\inst{1} \and Vineet Rajani\inst{2} \and
  Deepak Garg\inst{2} \and Christian Hammer\inst{1}}

%\authorrunning{Lecture Notes in Computer Science: Authors' Instructions}
% (feature abused for this document to repeat the title also on left hand pages)
\authorrunning{}

% the affiliations are given next; don't give your e-mail address
% unless you accept that it will be published
% \institute{Springer-Verlag, Computer Science Editorial,\\
% Tiergartenstr. 17, 69121 Heidelberg, Germany\\
% \mailsa\\
% \mailsb\\
% \mailsc\\
% \url{http://www.springer.com/lncs}}
\institute{Saarland University, Germany\and MPI-SWS, Germany}

%
% NB: a more complex sample for affiliations and the mapping to the
% corresponding authors can be found in the file "llncs.dem"
% (search for the string "\mainmatter" where a contribution starts).
% "llncs.dem" accompanies the document class "llncs.cls".
%

%\toctitle{Lecture Notes in Computer Science}
%\tocauthor{Authors' Instructions}
\maketitle

\newcommand{\CH}[1]{\textcolor{red}{CH: #1}}

\newtheorem{mydef}{Definition}
\newtheorem{myLemma}{Lemma}
\newtheorem{myThm}{Theorem}
\newtheorem{myaxiom}{Axiom}

\newcommand{\TODO}[1]{\textcolor{red}{TODO: #1}}

\newcommand {\mV} {\mathcal{V}}
\newcommand {\mL} {\mathcal{L}}
\newcommand {\mH} {\mathcal{H}}
\newcommand {\mE} {\mathcal{E}}

\newcommand {\Succ} {\mathit{Succ}}
\newcommand {\Left} {\mathit{Left}}
\newcommand {\Right} {\mathit{Right}}
\newcommand {\src} {\mathit{src}}
\newcommand {\dst} {\mathit{dst}}
\newcommand {\isIPD} {\mathit{isIPD}}
\newcommand {\IPD} {\mathit{IPD}}
\newcommand {\ipd} {\mathit{ipd}}
\newcommand {\cf} {\mathit{CF}}
\newcommand {\cond} {\mathit{cond}}
\newcommand {\push} {\mathit{push}}
\newcommand {\false} {\mathit{false}}
\newcommand {\true} {\mathit{true}}
\newcommand {\val} {\mathit{value}}
\newcommand {\base} {\mathit{base}}
\newcommand {\Empty} {\mathit{empty}}
\newcommand {\eV} {\mathit{excValue}}
\newcommand {\SC} {\mathit{sc}}
\newcommand {\LO} {\Lambda}
\newcommand{\CFG}{\mathit{CFG}}

\newcommand {\veq}{\approx^{\beta}_L}
\newcommand {\sigeq}{\simeq^{\beta}_L}
\newcommand {\teq}{\approx^{\beta}_L}
\newcommand {\req}{\cong^{\beta}_L}
\newcommand {\sceq}{\sim^{\beta}_L}
\newcommand {\lsd}{\Gamma(!\sigma(\dst))}

\newcommand{\Figure}[1]{Fig.~\ref{#1}}

%!TEX root = ifc-post14.tex

\begin{abstract}
%Information flow control is getting this revived interest in the
%context of javascript and web. Many new approaches for studying
%non-interference have been studied, but none of them clearly bridges
%this gap ...

Websites today routinely combine JavaScript from multiple sources,
both trusted and untrusted. Hence, JavaScript security is of paramount
importance. A specific interesting problem is information flow control
(IFC) for JavaScript. In this paper, we develop, formalize and
implement a dynamic IFC mechanism for the JavaScript engine of a
production Web browser (specifically, Safari's WebKit engine). Our IFC
mechanism works at the level of JavaScript \emph{bytecode} and hence
leverages years of industrial effort on optimizing both the source to
bytecode compiler and the bytecode interpreter.  We track both
explicit and implicit flows and observe only moderate
overhead. Working with bytecode results in new challenges including
the extensive use of unstructured control flow in bytecode (which
complicates lowering of program context taints), unstructured
exceptions (which complicate the matter further) and the need to make
IFC analysis permissive. We explain how we address these challenges,
formally model the JavaScript bytecode semantics and our
instrumentation, prove the standard property of
termination-insensitive non-interference, and present experimental
results on an optimized prototype.

  \keywords{Dynamic information flow control, JavaScript bytecode,
    taint tracking, control flow graphs, immediate post-dominator
    analysis}
\end{abstract}

%!TEX root = ifc-post14.tex

\section{Introduction}
JavaScript (JS) is an indispensable part of the modern Web. More than
95\% of all websites use JS for browser-side computation in
Web applications~\cite{richards11ECOOP}. Aggregator websites
(e.g.,~news portals) integrate content from various mutually untrusted
sources.  Online mailboxes display context-sensitive advertisements.
All these components are glued together with JS. The dynamic
nature of JS permits easy inclusion of external libraries and
third-party code, and encourages a variety of code injection attacks,
which may lead to integrity violations. Confidentiality violations
like information stealing are possible wherever third-party code is
loaded directly into another web page~\cite{jang10CCS}.  Loading
third-party code into separate iframes protects the main frame by the
same-origin policy, but hinders interaction that mashup pages
crucially rely on and does not guarantee absence of
attacks~\cite{oopsla13}.  \emph{Information flow control} (IFC) is an
elegant solution for such problems. It ensures security even in the
presence of untrusted and buggy code.  IFC for JS differs from
traditional IFC as JS is extremely
dynamic~\cite{oopsla13,richards11ECOOP}, which makes sound static
analysis difficult.

Therefore, research on IFC for JS has focused on dynamic
techniques. These techniques may be grouped into four broad
categories. First, one may build an IFC-enabled, custom interpreter
for JS source~\cite{csf12,bello13PhD}. This turns out to be extremely
slow and requires additional code annotations to handle
semi-structured control flow like exceptions, return-in-the-middle,
break and continue.  Second, we could use a black-box technique,
wherein an off-the-shelf JS interpreter is wrapped in a monitor. This
is nontrivial, but doable with only moderate overhead and has been
implemented in secure multi-ex\-e\-cu\-tion
(SME)\cite{SME,flowfox}. However, because SME is a black-box
technique, it is not clear how it can be generalized beyond
\emph{non-interference}~\cite{goguen} 
% (security policy for IFC) 
to handle \emph{declassification}~\cite{declass1,declass2}.
% (regulating information release). 
Third, some variant of inline reference monitoring (IRM) might inline
taint tracking with the client code. Existing security systems for JS
with IRM require subsetting the language in order to prevent dynamic
features that can invalidate the monitoring process.  Finally, it is
possible to instrument the runtime system of an existing JS engine,
either an interpreter or a just-in-time compiler (JIT), to monitor the
program on-the-fly. While this requires adapting the respective
runtime, it incurs only moderate overhead because it retains other
optimizations within the runtime and is resilient to subversion
attacks.

In this work, we opt for the last approach. We instrument a production
JS engine to track taints dynamically and enforce 
%a variant of non-interference -- 
\emph{term\-in\-ation-in\-sens\-itive
  non-inter\-ference}~\cite{volpano}. Specifically, we instrument the
bytecode interpreter in WebKit, the JS engine used in Safari and other
open-source browsers.  The major benefit of working in the bytecode
interpreter as opposed to source is that we retain the benefits of
these years of engineering efforts in optimizing the production
interpreter and the source to bytecode compiler.
% We observe two main benefits of working in the bytecode
% interpreter as opposed to source:
% %
% \begin{compactitem}
% \item %Small de-sugared language with better performance: 
%   JS bytecode offers a de-sugared but expressive language
%   (Section~\ref{sec:lang}). Years of engineering efforts have made the
%   production interpreter and source to bytecode compiler fast, and we
%   retain the benefits of these optimizations.
% \item %Non-source formats: 
%   With the upcoming non-source
%   formats~\cite{JSZap} for transmitting JS to browsers,
%   source-level IFC becomes infeasible. However, as bytecode
%   is the internal representation of our browser, enforcing IFC
%   at this level prevails.
% %  ultimately, non-source formats have to be compiled before
% %  execution in the browsers.
% %\item %Resilient to changes: 
% %  The compilation process to bytecode may allow new language
% %  features of upcoming JS versions to be expressed in the
% %  same bytecodes, therefore reducing the overhead to maintain
% %  the instrumentation in the browser.
% %  With continuous changes being proposed by
% %  the ECMA committee, JS as a language is a moving target and hence does
% %  not provide a sustainable ground for our goals.
% \end{compactitem}

We describe the key challenges that arise in dynamic IFC for JS
bytecode (as opposed to JS source), present our formal model of the
bytecode, the WebKit JS interpreter and our instrumentation, present
our correctness theorem, and list experimental results from a
preliminary evaluation with an optimized prototype running in
Safari. In doing so, our work significantly advances the
state-of-the-art in IFC for JS. Our main contributions are:
\begin{compactitem}
\item We formally model WebKit's bytecode syntax and semantics, our
  instrumentation for IFC analysis and prove non-interference. As far
  as we are aware, this is the first formal model of bytecode of an
  in-production JS engine. This is a nontrivial task because WebKit's
  bytecode language is large (147 bytecodes) and we built the model
  through a careful and thorough understanding of approximately 20,000
  lines of actual interpreter code.\footnote{Unlike some prior work,
    we are not interested in modeling semantics of JS specified by the
    ECMAScript standard. Our goal is to remain faithful to the
    production bytecode interpreter. Our formalization is based on
    WebKit build $\#$r122160, which was the last build when we started
    our work.}
\item Using ideas from prior work~\cite{just11PLASTIC}, we use
  on-the-fly intra-procedural static analysis of immediate
  post-dominators to restrict overtainting, even with bytecode's
  pervasive unstructured conditional jumps.  We extend the prior work
  to deal with exceptions. Our technique covers all unstructured
  control flow in JS (including break and continue), without requiring
  additional code annotations of prior work~\cite{bello13PhD} and
  improves permissiveness.
\item To make IFC execution more permissive, we propose and implement
  a byte\-code-specific variant of the \emph{permissive-upgrade} check~\cite{plas10}.
%  of Austin and Flanagan~\cite{plas10}.
\item We implement our complete IFC mechanism in WebKit and observe
  moderate overheads.
\end{compactitem}

\paragraph{Limitations}
We list some limitations of our work to clarify its scope. Although
our instrumentation covers all WebKit bytecodes, we have not yet
instrumented or modeled native JS methods, including those that
manipulate the Document Object Model (DOM). This is ongoing work,
beyond the scope of this paper. Like some prior work~\cite{csf12}, our
sequential non-interference theorem covers only single invocations of
the JS interpreter. In reality, JS is reactive. The interpreter is
invoked every time an event (like a mouse click) with a handler occurs
and these invocations share state through the DOM. We expect that
generalizing to \emph{reactive non-interference}~\cite{rni} will not
require any instrumentation beyond what we already plan to do for the
DOM.  Finally, we do not handle JIT-compilation as it is considerably
more engineering effort. JIT can be handled by inlining our IFC
mechanism through a bytecode transformation.
%%%%% Removed statement about getter-setters.
%Finally, our current formalization does not cover
%getter-setter properties, even though our instrumentation
%does. Covering getter-setter properties in the formal model is ongoing
%work that requires some minor modifications to our proofs.

Due to lack of space, several proofs and details of the model have
been omitted from this paper. They can be found in the technical
appendix (Section~\ref{sec:appendix}).

\section{Related Work}
\label{sec:related}

Three classes of research are closely related to our work:
formalization of JS semantics, IFC for
dynamic languages, and formal models of Web browsers. Maffeis et
al.~\cite{taly} present a formal semantics for the entire ECMA-262
specification, the foundation for JS 3.0.  Guha et
al.~\cite{Guha} present the semantics of a core language which models
the essence of JS and argue that all of JS 3.0 can be
translated to that core. S5~\cite{S5} extends~\cite{Guha} to include
accessors and eval. Our work goes one step further and formalizes the
core language of a production JS engine (WebKit), which is
generated by the source-to-bytecode compiler included in
WebKit. Recent work by Bodin et al.~\cite{popl14} presents a Coq
formalization of ECMAScript Edition 5 along with an extracted
executable interpreter for it. This is a formalization of the English
ECMAScript specification whereas we formalize the JS bytecode
implemented in a real Web browser.

%%%%   IFC
Information flow control is an active area of security research. With
the widespread use of JS, research in dynamic techniques for
IFC has regained momentum.  Nonetheless, static analyses are not
completely futile. Guarn\-ieri et al.~\cite{guarnieri11ISSTA} present
a static abstract interpretation for tracking taints in JS. However,
the omnipresent \texttt{eval} construct is not supported and this
approach does not take implicit flows into account.  Chugh et
al. propose a staged information flow approach for
JS~\cite{stagedIFC}. They perform server-side static policy
checks on statically available code and generate residual
policy-checks that must be applied to dynamically loaded code.  This
approach is limited to certain JS constructs excluding dynamic
features like dynamic field access or the \texttt{with} construct.

%%%%% Austin and Flanagan
Austin and Flanagan~\cite{plas09} propose purely dynamic IFC for
dynamically-typed languages like JS. They use the
no-sensitive-upgrade (NSU) check~\cite{zdancewic02PhD} to handle
implicit flows. Their per\-mis\-sive-upgrade strategy~\cite{plas10} is
%%%%% RC - Changed from allowing to permissive in the next line
more permissive than NSU but retains ter\-mi\-na\-tion-insensitive
non-interference. We build on the permissive-upgrade strategy.  Just
et al.~\cite{just11PLASTIC} present dynamic IFC for JS bytecode with
static analysis to determine implicit flows precisely even in the
presence of semi-unstructured control flow like \texttt{break} and
\texttt{continue}. Again, NSU is leveraged to prevent implicit
flows. Our overall ideas for dealing with unstructured control flow
are based on this work. In contrast to this paper, there was no
formalization of the bytecodes, no proof of correctness, and implicit
flow due to exceptions was ignored.

%%%%% Sabelfeld
Hedin and Sabelfeld propose a dynamic IFC approach for a language
which models the core features of JS~\cite{csf12}, but they
ignore JS's constructs for semi-struc\-tured control flow
like \texttt{break} and \texttt{continue}. Their approach leverages a dynamic type
system for JS source. To improve permissiveness, their
subsequent work~\cite{esorics12} uses testing. It detects security
violations due to branches that have not been executed and injects
annotations to prevent these in subsequent runs. A further extension
introduces annotations to deal with semi-structured control
flow~\cite{bello13PhD}. Our approach relies on analyzing CFGs and
does not require annotations.

% \CH{Abhishek: Do we need to mention the JS work in ACSAC' 09 JS browser extension}

%%%%Secure multiexecution and Flowfox
Secure multi-execution (SME)~\cite{SME} is another approach to enforcing
non-interference at runtime. Conceptually, one executes the same code
once for each security level (like low and high) with the following
constraints: high inputs are replaced by default values for the low
execution, and low outputs are permitted only in the low execution.
%Since an output instruction is just
%executed once, effectively the result of running a non-interferent
%script is the same under normal and under multi-execution
%semantics. 
This modification of the semantics forces even unsafe scripts to
adhere to non-interference. FlowFox~\cite{flowfox} demonstrates SME in
the context of Web browsers. Executing a script
multiple times can be prohibitive for a security lattice with multiple
levels. 
%%%%% RC - Doesnt run two executions in parallel 
% Their runtime overhead is only similar to our optimized
% instrumentation if they run two executions in parallel (which requires
% more hardware resources.) 
Further, all writes to the DOM are
considered publicly visible output, while tainting allows persisting a
security label on DOM elements. 
%%%%% RC - Removed the statement from AF and added here
It is also unclear how declassification may be integrated into SME.  
Austin and Flanagan~\cite{austin12POPL} introduce a
notion of faceted values to simulate multiple executions in one
run. They keep $n$ values for every variable corresponding to $n$
security levels. All the values are used for computation as the
program proceeds but the mechanism enforces non-interference by
restricting the leak of high values to low observers.
%%%%% RC - Removed text for AF not supporting declassification
% Similar to standard
% secure multi-execution, no security notions other than classical non-interference
% are supported. In particular, it is unclear how to integrate
% declassification into secure multi-execution.

Browsers work reactively; input is fed to an event queue that is processed
over time. Input to one event can produce output that influences the input
to a subsequent event.
%%%%Reactive NI
Bohannon et al.~\cite{rni} present a formalization of a reactive system
%as consumer and producer states
and compare several definitions of reactive non-interference. Bielova
et al.~\cite{rnib} extend reactive non-interference to a browser model
based on SME. This is currently the only approach that supports
reactive non-interference for JS. We will extend our work to the
reactive setting as the next step.
%Having formalized the JS interpreter the
%next step for us now is to adapt our dynamic taint tracking approach
%to enforce some variant of reactive non-interference of the likes in
%\cite{rni}.

%%%Featherweight
Finally, Featherweight Firefox~\cite{FWF} presents a formal model of a
browser based on a reactive model that resembles that of Bohannon et
al.~\cite{rni}.  It instantiates the consumer and producer states in
the model with actual browser objects like window, page, cookie store,
mode, connection, etc.
%In the space of formal models for web browsers, one of the prominent
%work is that of Featherweight Firefox~\cite{FWF}. This work focuses on
%modeling different components of the browser like window, page, node,
%connection, etc.
Our current work entirely focuses on the formalization of the
JS engine and taint tracking to monitor information leaks. We
believe these two approaches complement each other and plan to
integrate such a model into our future holistic enforcement mechanism
spanning JS, the DOM and other browser components.

%\TODO{~\cite{moore11CSF}}

%!TEX root = ifc-post14.tex

\section{Background}
\label{sec:background}

We provide a brief overview of basic concepts in dynamic enforcement
of information flow control (IFC).  In dynamic IFC, a language runtime
is instrumented to carry a security label or taint with every
value. The taint is an element of a pre-determined lattice and is an
upper bound on the security levels of all entities that have
influenced the computation that led to the value. For simplicity of
exposition, we use throughout this paper a three-point lattice
$\{L,H,\star\}$ ($L$ = low or public, $H$ = high or secret, $\star$ =
partially leaked secret), with $L \sqsubseteq H \sqsubseteq
\star$~\cite{plas10}. For now, readers may ignore $\star$. Our
instrumentation works over a more general powerset lattice, whose
individual elements are Web domains. We write $r^\ell$ for a value $r$
tagged with label~$\ell$.

Information flows can be categorized as \emph{explicit} and
\emph{implicit}~\cite{denning}.  Explicit flows arise as a result of
variables being assigned to others, or through primitive
operations. For instance, the statement \texttt{x = y + z} causes an
explicit flow from values in both \texttt{z} and \texttt{y} to
\texttt{x}. Explicit flows are handled in the runtime by updating the
label of the computed value (\texttt{x} in our example) with the least
upper bound of the labels of the operands in the computation
(\texttt{y}, \texttt{z} in our example).

Implicit flows arise from control dependencies. For example, in the
program \texttt{l = 0; if (h) \{l = 1;\}}, there is an implicit flow
from \texttt{h} to the final value of \texttt{l} (that value is
\texttt{1} iff \texttt{h} is \texttt{1}). To handle implicit flows,
dynamic IFC systems maintain the so-called $pc$ label (program-context
label), which is an upper bound on the labels of values that have
influenced the control flow thus far. In our last example, if the
value in \texttt{h} has label $H$, then $pc$ will be $H$ within the
\texttt{if} branch. After \texttt{l = 1} is executed, the final value
of \texttt{l} inherits not only the label of \texttt{1} (which is
$L$), but also of the $pc$; hence, that label is also $H$. This alone
does not prevent information leaks: When \texttt{h = 0}, \texttt{l}
ends with $\texttt{0}^L$; when \texttt{h = 1}, \texttt{l} ends with
$\texttt{1}^H$. Since $\texttt{0}^L$ and $\texttt{1}^H$ can be
distinguished by a public attacker, this program leaks the value of
\texttt{h} despite correct propagation of implicit taints. Formally,
the instrumented semantics so far fail the standard property of
\emph{non-interference}~\cite{goguen}.

This problem can be resolved through the well-known
\emph{no-sensitive-upgrade} (NSU) check~\cite{zdancewic02PhD,plas09},
which prohibits assignment to a
low-labeled variable when $pc$ is high. This recovers non-interference
if the adversary cannot observe program termination
(\emph{termination-insensitive non-interference}). In our
example, when \texttt{h = 0}, the program terminates with $\texttt{l}
= \texttt{0}^L$. When \texttt{h = 1}, the instruction \texttt{l = 1}
gets stuck due to NSU. These two outcomes are deemed observationally
equivalent for the low adversary, who cannot determine whether or
not the program has terminated in the second case. Hence, the program
is deemed secure.

Roughly, a
program is termination-insensitive non-interferent if any two
terminating runs of the program starting from low-equivalent heaps
(i.e., heaps that look equivalent to the adversary) end in
low-equivalent heaps. 
Like all sound dynamic
IFC approaches, our instrumentation renders any JS program
termination-insensitive non-interferent, at the cost of modifying
semantics of programs that leak information.

%%%%%%%%%%%%%%%%%%%%%%%%%%%%%%%%%%%%%%%%%%%%%%%%%%%%%%%%%%%%%%%%%%%%

%%%%%%%%%%%%%%%%%%%%%%%%%%%%%%%%%%%%%%%%%%%%%%%%%%%%%%%%%%%%%%%%%%%%

\section{Design, Challenges, Insights and Solutions}
\label{sec:design}

We implement dynamic IFC for JS in the widely used WebKit
engine by instrumenting WebKit's bytecode interpreter.  In WebKit,
bytecode is generated by a source-code compiler.  Our goal is to not
modify the compiler, but we are forced to make slight changes to it to
make it compliant with our instrumentation. The modification is
explained in Section~\ref{sec:impl}. Nonetheless, almost all our work
is limited to the bytecode interpreter.

WebKit's bytecode interpreter is a rather standard stack machine, with
several additional data structures for JS-specific features
like scope chains, variable environments, prototype chains and
function objects. Local variables are held in registers on the call
stack. Our instrumentation adds a label to all data structures,
including registers, object properties and scope chain pointers, adds
code to propagate explicit and implicit taints and implements a more
permissive variant of the NSU check. Our label is a word size bit-set
(currently 64 bits); each bit in the bit-set represents taint from a
distinct domain (like google.com). Join on labels is simply bitwise
or.

Unlike the ECMAScript specification of JS semantics, the
actual implementation does \emph{not} treat scope chains or variable
environments like ordinary objects. Consequently, we model and
instrument taint propagation on all these data structures
separately. Working at the low-level of the bytecode also leads to
several interesting conceptual and implementation issues in taint
propagation as well as interesting questions about the threat model,
all of which we explain in this section. Some of the issues are quite
general and apply beyond JS. For example, we combine our
dynamic analysis with a bit of static analysis to handle unstructured
control flow and exceptions.

\paragraph{Threat model and compiler assumptions}
We explain our high-level threat model. Following standard practice,
our adversary may observe all low-labeled values in the heap (more
generally, an adversary at level $\ell$ in a lattice can observe all
heap values with labels $\leq \ell$).
%% from the global object of JS. The global object is an
%% omnipresent object; in the browser, it represents the root of the
%% DOM, from where all access requests start.
However, we do not allow the adversary to directly observe internal
data structures like the call stack or scope chains. This is
consistent with actual interfaces in a browser that third-party
scripts can access.  In our non-interference proofs we must also show
low-equivalence of these internal data structures across two runs to
get the right induction invariants, but assuming that they are
inaccessible to the adversary allows more permissive program
execution, which we explain in Section~\ref{sec:challenges}.

The bytecode interpreter executes in a shared space with other browser
components, so we assume that those components do not leak information
over side channels, e.g., they do not copy heap data from secret to
public locations. This also applies to the compiler, but we do not
assume that the compiler is functionally correct. Trivial errors in
the compiler, e.g., omitting a bytecode could result in a leaky
program even when the source code has no information leaks. Because
our IFC works on the compiler's output, such compiler errors are not a
concern. Formally, we assume that the compiler is an unspecified
deterministic function of the program to compile and of the call
stack, but not of the heap.
%, but the compiler may not produce correct bytecode.
This assumption also matches how the compiler works within WebKit: It
needs access to the call stack and scope chain to optimize generated
bytecode.
%; in many cases, the compiler traverses the scope chain to
%determine where a specific variable occurs and generates a specialized
%bytecode to access it. 
However, the compiler never needs access to the heap.
%%%%% RC - timing and termination channel
We ignore information leaks due to other side channels like timing.

%Finally, we assume that all JS is eventually interpreted by the
%bytecode interpreter. Other code paths, like just-in-time (JIT) compilation,
%are assumed to be absent. In the implementation we disable JIT.

%%%%%%%%%%%%%%%%%%%%%%%%%%%%%%%%%%%%%%%%%%%%%%%%%%%%%%%%%%%%%%%%%%%%

%%%%%%%%%%%%%%%%%%%%%%%%%%%%%%%%%%%%%%%%%%%%%%%%%%%%%%%%%%%%%%%%%%%%

\subsection{Challenges and Solutions}
\label{sec:challenges}

IFC for JS is known to be difficult due to JS's highly dynamic
nature. Working with bytecode instead of source code makes IFC
harder. Nonetheless, solutions to many JS-specific IFC concerns
proposed in earlier work~\cite{csf12} also apply to our
instrumentation, sometimes in slightly modified form. For example, in
JS, every object has a fixed parent, called a prototype, which is
looked up when a property does not exist in the child. This can lead
to implicit flows: If an object is created in a high context (when the
$pc$ is high) and a field missing from it, but present in the
prototype, is accessed later in a low context, then there is an
implicit leak from the high $pc$.  This problem is avoided in both
source- and bytecode-level analysis in the same way: The ``prototype''
pointer from the child to the parent is labeled with the $pc$ where
the child is created, and the label of any value read from the parent
after traversing the pointer is joined with this label. Other
potential information flow problems whose solutions remain unchanged
between source- and bytecode-level analysis include implicit leaks
through function pointers and handling of
\texttt{eval}~\cite{just11PLASTIC,csf12}.

Working with bytecode both leads to
some interesting insights, which are, in some
cases, even applicable to source code analysis and other languages, and poses new
challenges. We
discuss some of these challenges and insights.

\paragraph{Unstructured control flow and CFGs}
To avoid overtainting $pc$ labels, an important goal in implicit flow
tracking is to determine when the influence of a control construct has
ended. For block-structured control flow limited to \texttt{if} and
\texttt{while} commands, this is straightforward: The effect of a
control construct ends with its lexical scope, e.g., in (\texttt{if
  (h) \{l = 1;\}; l = 2}), \texttt{h} influences the control flow at
\texttt{l = 1} but not at \texttt{l = 2}. This leads to a
straightforward $pc$ upgrading and downgrading strategy: One maintains
a \emph{stack} of $pc$ labels~\cite{zdancewic02PhD}; the effective
$pc$ is the top one. When entering a control flow construct like
\texttt{if} or \texttt{while}, a new $pc$ label, equal to the join of
labels of all values on which the construct's guard depends with the
previous effective $pc$, is pushed. When exiting the construct, the
label is popped.

Unfortunately, it is unclear how to extend this simple strategy to
non-block-structured control flow constructs such as exceptions,
\texttt{break}, \texttt{continue} and \texttt{return}-in-the-middle
for functions, all of which occur in JS. For example, consider
the program \texttt{l = 1; while(1) \{... if (h) \{break;\}; l =
  0; break;\}} with \texttt{h} labeled $H$. This program leaks the
value of \texttt{h} into \texttt{l}, but no assignment to \texttt{l}
appears in a block-scope guarded by \texttt{h}. Indeed, the $pc$
upgrading and downgrading strategy just described is ineffective for
this program. Prior work on source code IFC either omits some of these
constructs~\cite{csf12,acsac09}, or introduces additional classes of
labels to address these problems --- a label for
exceptions~\cite{csf12}, a label for each loop containing
\texttt{break} or \texttt{continue} and a label for each
function~\cite{bello13PhD}.  These labels are more restrictive than
needed, e.g., the code indicated by dots in the example above is
executed irrespective of the condition \texttt{h} in the first
iteration, and thus there is no need to raise the $pc$ before checking
that condition. Further, these labels are programmer annotations,
which we cannot support as we do not wish to modify the compiler.

Importantly, unstructured control flow is a \emph{very serious}
concern for us, because WebKit's bytecode has completely unstructured
branches like jump-if-false. In fact, all control flow, except
function calls, is unstructured in bytecode.

To solve this problem, we adopt a solution based on static analysis of
generated bytecode~\cite{denning82,just11PLASTIC}. We maintain a
control flow graph (CFG) of known bytecodes and for each branch node,
compute its immediate post-dominator (IPD). The IPD of a node is the
first instruction that will definitely be executed, no matter which
branch is taken.
%% %%%%% RC - Denning's work on IPDs
%% \textcolor{red}{In the presence of arbitrary gotos, the
%% necessity to resort to post-dominators for the intra-procedural case
%% has already been noted by Denning~\cite{denning82}.}
Our $pc$ upgrading and downgrading
strategy now extends to arbitrary control flow: When executing a
branch node, we push a new $pc$ label on the stack \emph{along with}
the node's IPD. When we actually reach the IPD, we pop the $pc$
label. In~\cite{Xin,Masri}, the authors prove that the IPD marks the
end of the scope of an operation and hence the security context of the
operation, so our strategy is sound. In our earlier example, the IPD
of \texttt{if(h) \ldots} is the end of the \texttt{while} loop because
of the first \texttt{break} statement, so when \texttt{h == 0}, the
assignment \texttt{l = 1} fails due to the NSU check and the program
is termination-insensitive non-interference secure.

JS requires dynamic code compilation. We are forced to extend the CFG
and to compute IPDs whenever code for either a function or an
\texttt{eval} is compiled. Fortunately, the IPD of a node in the CFG
lies either in the same function as the node or some function earlier
in the call-chain (the latter may happen due to exceptions), so
extending the CFG does not affect computation of IPDs of earlier
nodes. This also relies on the fact that code generated from
\texttt{eval} cannot alter the CFG of earlier functions in the call
stack~\cite{just11PLASTIC}. In the actual implementation, we optimize
the calculation of IPDs further by working only intra-procedurally, as
explained below. At the end, our IPD-based solution works for all
forms of unstructured control flow, including unstructured branches in
the bytecode, and semi-structured \texttt{break}, \texttt{continue},
\texttt{return}-in-the-middle and exceptions in the source code.
%% Our evaluation shows that the overhead of maintaining CFGs and
%% computing IPDs is tolerable and the benefits of working with
%% bytecode outweigh those of working with source.

\paragraph{Exceptions and synthetic exit nodes}
Maintaining a CFG in the presence of exceptions is expensive. An
exception-throwing node in a function that does not catch that
exception should have an outgoing control flow edge to the next
exception handler in the call-stack. This means that (a) the CFG is,
in general, inter-procedural, and (b) edges going out of a function
depend on its calling context, so IPDs of nodes in the function must
be computed \emph{every time the function is called}. Moreover, in the
case of recursive functions, the nodes must be replicated for every
call. This is rather expensive. Ideally, we would like to build the
function's CFG once when \emph{the function is compiled} and work
intra-procedurally (as we would had there been no exceptions). We
explain how we attain this goal in the sequel.\footnote{This problem
  and our solution are not particular to JS; they apply to
  dynamic IFC analysis in all languages with exceptions and
  functions.}

%% In the absence of exceptions, the IPD of a node lies in the same
%% function as the node and, hence, an \emph{intra-procedural} static
%% analysis suffices to compute IPDs. Exceptions break this useful
%% property: If a function may throw an exception, but the last exception
%% handler lies in an earlier function in the call-chain, then the IPD of
%% all nodes following the exception lie in the earlier function. Hence,
%% in general, computation of IPDs requires an \emph{interprocedural}
%% static analysis. However, adding a CFG edge from an exception-throwing
%% node to an exception handler requires traversing the call-stack to
%% find the handler and is, therefore, expensive. This cost must be paid
%% even if no exceptions actually occur at runtime, which looks pointless
%% because it is known that exceptions in actual JS are rare~\cite{...}.

In our design, every function that may throw an unhandled exception
has a special, \emph{synthetic exit node} (SEN), which is placed after
the regular return node(s) of the function. Every exception-throwing
node, whose exception will not be caught within the function, has an
outgoing edge to the SEN, which is traversed when the exception is
thrown. The semantics of SEN (described below) correctly transfer
control to the appropriate exception handler. By doing this, we
eliminate all cross-function edges and our CFGs become
intra-procedural. The CFG of a function can be computed when the
function is compiled and is never updated. (In our implementation, we
build two variants of the CFG, depending on whether or not there is an
exception handler in the call stack. This improves efficiency, as we
explain later.)

Control flows to the SEN when the function returns normally or when an
exception is thrown but not handled within the function. If no
unhandled exception occurred within the function, then the SEN
transfers control to the caller (we record whether or not an unhandled
exception occurred). If an unhandled exception occurred, then the SEN
triggers a special mechanism that searches the call stack backward for
the first appropriate exception handler and transfers control to
it. (In JS, exceptions are indistinguishable, so we need to
find only the first exception handler.) Importantly, we pop the
call-stack up to the frame that contains the first exception
handler but do \emph{not} pop the $pc$-stack, which ensures
%handler. Note that we do \emph{not} pop the $pc$-stack, which ensures
that all code up to the exception handler's IPD executes with the same
$pc$ as the SEN, which is indeed the semantics one would expect if we
had a CFG with cross-function edges for exceptions. This prevents
information leaks.

If a function does not handle a possible exception but there is an
exception handler on the call stack, then all bytecodes that could
potentially throw an exception have the SEN as one successor in the
CFG. Any branching bytecode will thus need to push to the $pc$-stack
according to the security label of its condition. However, we do
\emph{not} push a new $pc$-stack entry if the IPD of the current node
is the same as the IPD on the top of the $pc$-stack (this is just an
optimization) or if the IPD of the current node is the SEN, as in this
case the \emph{real} IPD, which is outside of this method, is already
on the $pc$-stack.  These semantics emulate the effect of having
cross-function exception edges.

For illustration, consider the following two functions \texttt{f} and
\texttt{g}. The $\diamond$ at the end of \texttt{g} denotes its
SEN. Note that there is an edge from \texttt{throw 9} to $\diamond$
because \texttt{throw 9} is not handled within~\texttt{g}. $\Box$
denotes the IPD of the handler \texttt{catch(e) \{ l = 1; \}}.

\noindent 
\begin{tabular}{@{}p{0.55\textwidth}p{0.35\textwidth}@{}}
\begin{tabular}[t]{@{}l@{}}
\texttt{function f() = \{} \\
~~\texttt{l = 0;} \\
~~\texttt{try \{ g(); \} catch(e) \{ l = 1; \}} \\
~~$\Box$ \texttt{return l;}\\
\texttt{\}}
\end{tabular} & 
\begin{tabular}[t]{@{}l@{}}
\texttt{function g() = \{}\\
~~\texttt{if (h) \{throw 9;\}}\\
~~\texttt{return 7;}\\
\texttt{\}} $\diamond$
\end{tabular}
\end{tabular}

It should be clear that in the absence of instrumentation, when
\texttt{f} is invoked with $pc = L$, the two functions together leak
the value of \texttt{h} (which is assumed to have label $H$) into the
return value of \texttt{f}. We show how our SEN mechanism prevents
this leak.  When invoking \texttt{g()} we do not know if there will be
an exception in this function. Depending on the outcome of this method
call, we will either jump to the exception handler or continue at
$\Box$. Based on that branch, we push the current pc and IPD $(L,
\Box)$ on the $pc$-stack.  When executing the condition \texttt{if
  (h)} we do \emph{not} push again, but merely update the top element
to $(H, \Box)$.  If \texttt{h == 0}, control reaches $\diamond$
without an exception but with $pc = H$ because the IPD of \texttt{if
  (h)} is $\diamond$. At this point, $\diamond$ returns control to
\texttt{f}, thus $pc = H$, but at $\Box$, $pc$ is lowered to $L$, so
\texttt{f} ends with the return value $\texttt{0}^L$. If \texttt{h ==
  1}, control reaches $\diamond$ with an unhandled exception. At this
point, following the semantics of SEN, we find the exception handler
\texttt{catch(e) \{ l = 1; \}} and invoke it with the same $pc$ as the
point of exception, i.e., $H$. Consequently, NSU prevents the
assignment \texttt{l = 1}, which makes the program
termination-insensitive non-interferent.

Because we do not wish to replicate the CFG of a function every time
it is called recursively, we need a method to distinguish the same
node corresponding to two different recursive calls on the
$pc$-stack. For this, when pushing an IPD onto the $pc$-stack, we pair
it with a pointer to the current call-frame. Since the call-frame
pointer is unique for each recursive call, the CFG node paired with
the call-frame identifies a unique merge point in the real control
flow graph.

In practice, even the intra-procedural CFG is quite dense because many
JS bytecodes can potentially throw exceptions and, hence, have
edges to the SEN. To avoid overtainting, we perform a crucial
common-case optimization: When there is no exception handler on the
call stack we do not create the SEN and the corresponding edges from
potentially exception-throwing bytecodes at all. This is safe as a
potentially thrown exception can only terminate the program instantly,
which satisfies termination-insensitive non-interference if we ensure
that the exception message is not visible to the attacker. Whether or
not an exception handler exists is easily tracked using a stack of
Booleans that mirrors the call-stack; in our design we overlay this
stack on the $pc$-stack by adding an extra Boolean field to each entry
of the $pc$-stack.
In summary, each entry of our $pc$-stack is a quadruple containing a
security label, a node in the intraprocedural CFG, a call-frame
pointer and a Boolean value. In combination with SENs, this design
allows us to work only with intraprocedural CFGs that are computed
when a function is compiled. This improves efficiency.

\paragraph{Permissive-upgrade check, with changes}
The standard NSU check halts program execution whenever an attempt is
made to assign a variable with a low-labeled value in a high $pc$. In
our earlier example, \texttt{l = 0; if (h) \{l = 1;\}}, assuming that
\texttt{h} stores a $H$-labeled value, program execution is halted at
the command \texttt{l = 1}. As Austin and Flanagan (AF in the sequel)
observe~\cite{plas10}, this may be overly restrictive when \texttt{l}
will not, in fact, have observable effects (e.g., \texttt{l} may be
overwritten by a constant immediately after \texttt{if (h) \{l =
  1;\}}). So, they propose propagating a special taint called $\star$
into \texttt{l} at the instruction \texttt{l = 1} and halting a
program when it tries to \emph{use} a value labeled $\star$ in a way
that will be observable (AF call this special taint $P$ for ``partially leaked'').
%\footnote{AF call this special taint $P$ for
%  ``partially leaked'', not $\star$. We call it $\star$ to avoid
%  confusion with program variables.}  
This idea, called the
\emph{permissive-upgrade} check, allows more program execution than
NSU would, so we adopt it. In fact, this additional permissiveness is
absolutely essential for us because the WebKit compiler often
generates dead assignments within branches, so execution would
pointlessly halt if standard NSU were used.

We differ from AF in \emph{what} constitutes a use of a value labeled
$\star$. As expected, AF treat occurrence of $\star$ in the guard of a
branch as a use. Thus, the program \texttt{l = 0; if (h) \{l = 1;\};
  if (l) \{l' = 2\}} is halted at the command \texttt{if (l)} when
\texttt{h == 1} because \texttt{l} obtains taint $\star$ at the
assignment \texttt{l = 1} (if the program is not halted, it leaks
\texttt{h} through \texttt{l'}). However, they allow $\star$-tainted
values to flow into the heap. Consider the program \texttt{l = 0; if
  (h) \{l = 1;\}; obj.a = l}. This program is insecure in our model:
The heap location \texttt{obj.a}, which is accessible to the
adversary, ends with $\texttt{0}^L$ when \texttt{h == 0} and with
$\texttt{1}^\star$ when \texttt{h == 1}. AF deem the program secure by
assuming that any value with label $\star$ is low-equivalent to any
other value (in particular, $\texttt{0}^L$ and $\texttt{1}^\star$ are
low-equivalent).  However, this definition of low-equivalence for
dynamic analysis is virtually impossible to enforce if the adversary
has access to the heap outside the language: After writing
$\texttt{0}^L$ to \texttt{obj.a} (for \texttt{h == 0}), a dynamic
analysis cannot determine that the alternate execution of the program
(for \texttt{h == 1}) \emph{would have} written a $\star$-labeled
value and, hence, cannot prevent the adversary from seeing
$\texttt{0}^L$.
%% Hence, in any practical implementation based on dynamic analysis
%% with the permissive-upgrade check, this program will leak
%% information.

Consequently, in our design, we use a modified permissive-upgrade
check, which we call the \emph{deferred NSU check}, wherein a program
is halted at any construct that may potentially flow a $\star$-labeled
value into the heap. This includes all branches whose guard contains a
$\star$-labeled value and any assignments whose target is a heap
location and whose source is $\star$-labeled. However, we do not
constrain flow of $\star$-labeled values in data structures that are
invisible to the adversary in our model, e.g., local registers and
variable environments. This design critically relies on treating
internal data structures differently from ordinary JS objects,
which is not the case, for instance, in the ECMAScript specification.

%!TEX root = ifc-post14.tex
\newcommand{\mC}{\mathcal{C}}

\section{Formal Model and IFC}\label{sec:lang}
% The language we describe here is a low level bytecode language used by 
% most of the WebKit based browsers. The language is formalized by 
% analyzing the JS interpreter of WebKit version \#:r122160. 
% The presented language is a real world compilation
% target for the source JS, and hence is rich enough to support
% various JS nuances directly or indirectly. Some of the
% features of the language includes support for: higher order state,
% closures, scope chain,  object and prototype chains, function
% constructors, dynamic code insertion, etc.

We formally model WebKit's JS bytecode and the semantics of
its bytecode interpreter with our instrumentation of dynamic IFC. We
prove termination-insensitive non-interference for programs executed
through our instrumented interpreter. We do not model the construction
of the CFG or computation of IPDs; these are standard. To keep
presentation accessible, we present our formal model at a somewhat
high-level of abstraction. Details are resolved in our technical
appendix.

%% Section~\ref{sec:syntax} describes our model of the
%% bytecode, the data structures used in the interpreter and our
%% instrumentation. In Section~\ref{sec:real}, we define a semantics of
%% bytecode interpretation augmented with our dynamic IFC that uses only
%% intraprocedural analysis with SENs. Finally, we prove non-interference
%% for the semantics in Section~\ref{sec:ni}.

%%%%%%%%%%%%%%%%%%%%%%%%%%%%%%%%%%%%%%%%%%%%%%%%%%%%%%%%%%%%%%%%%%

%%%%%%%%%%%%%%%%%%%%%%%%%%%%%%%%%%%%%%%%%%%%%%%%%%%%%%%%%%%%%%%%%%

\subsection{Bytecode and Data Structures}
\label{sec:syntax}

\begin{figure}[tb]
%\centering \includegraphics[width=\linewidth]{instruction}
\scalebox{0.72}{
%\begin{multicols}{2}
\setlength{\tabcolsep}{4pt}
\begin{tabular}{ll}
\begin{tabular}{l}
\begin{tabular}{llll}
%\hline
 ins &:= prim-ins & $|$ obj-ins \\
           & $|$ func-ins & $|$ scope-ins 
           & $|$ exc-ins\\
%\hline
\end{tabular}\\
\begin{tabular}{ll}
	%\hline
          prim-ins &:= prim dst:r src1:r src2:r \\
                           & $|$ mov dst:r src:r \\
                           & $|$ jfalse cond:r target:offset\\
                           & $|$ loop-if-less src1:r src2:r target:offset \\
                           & $|$ typeof dst:r src:r \\
                           & $|$ instanceof dst:r value:r cProt:r \\
		%\hline
\end{tabular}\\ 
\begin{tabular}{ll}
	%\hline
	  func-ins & := enter\\
			 & $|$ ret result:r\\
			 & $|$ end result:r\\
			 & $|$ call func:r args:n\\
                         & $|$ call-put-result res:r\\
			 & $|$ call-eval func:r args:n\\
                         & $|$ create-arguments dst:r\\
			 & $|$ new-func dst:r func:f\\
                         & $|$ create-activation dst:r\\
                         & $|$ construct func:r args:n\\
                         & $|$ create-this dst:r\\
		%\hline
\end{tabular} 
\end{tabular}
&
\begin{tabular}{l}
\begin{tabular}{ll}
	%\hline
	  obj-ins &:= new-object dst:r \\
			 & $|$ get-by-id dst:r base:r prop:id\\
			 & $|$ put-by-id base:r prop:id value:r direct:b\\
			 & $|$ del-by-id  dst:r base:r prop:id\\
			 & $|$ get-pnames dst:r base:r i:n size:n breaktarget:offset\\
			 & $|$ next-pname dst:r base:r i:n size:n
                         iter:n target:offset\\
                         & $|$ put-getter-setter base:r prop:id getter:r
                         setter:r \\
		%\hline
\end{tabular}\\
\begin{tabular}{ll}
	%\hline
	  scope-ins &:= 
  			    resolve dst:r prop:id\\
                           & $|$ resolve-skip dst:r prop:id
                           skip:n \\
                           & $|$ resolve-global dst:r prop:id
                           \\
                           & $|$ resolve-base dst:r prop:id
                           isStrict:bool\\
                           & $|$ resolve-with-base bDst:r pDst:r
                           prop:id\\
                           & $|$ get-scoped-var dst:r index:n
                           skip:n\\
                           & $|$ put-scoped-var index:n skip:n
                           value:r\\
                           & $|$ push-scope  scope:r\\
			   & $|$ pop-scope \\
			   & $|$ jmp-scope count:n target:offset\\
		%\hline
\end{tabular}\\
\begin{tabular}{ll}
%	\hline
	  exc-ins & := throw ex:r\\
                       & $|$ catch ex:r \\
%		\hline
\end{tabular}
\end{tabular}
\end{tabular}
}
%\end{multicols}

\vspace*{-3mm}
\caption{Instructions}
\vspace*{-4mm}\label{fig:inst}
\end{figure}

The version of WebKit we model uses a total of 147 bytecodes or
instructions, of which we model 69. 
%A 69th bytecode applies
%specifically to getter and setter properties of JS, which we
%have not yet modeled. 
The remaining 78 bytecodes are redundant from the perspective of
formal modeling because they are \emph{specializations} or wrappers on
other bytecodes to improve efficiency. The syntax of the 69 bytecodes
we model is shown in \Figure{fig:inst}. The bytecode \texttt{prim}
abstractly represents 34 primitive binary and unary (with just the
first two arguments) operations, all of which behave similarly. For
convenience, we divide the bytecodes into primitive instructions
(prim-ins), instructions related to objects and prototype chains
(obj-ins), instructions related to functions (func-ins), instructions
related to scope chains (scope-ins) and instructions related to
exceptions (exc-ins). A bytecode has the form $\mathit{\langle
  inst\_name~~ list\_of\_args \rangle}$. The arguments to the
instruction are of the form
$\langle$var$\rangle$:$\langle$type$\rangle$, where \emph{var} is the
variable name and \emph{type} is one of the following: r, n, bool, id,
prop and offset for register, constant integer, constant Boolean,
identifier, property name and jump offset value, respectively.

In WebKit, bytecode is organized into code blocks. Each code block is
a sequence of bytecodes with line numbers and corresponds to the
instructions for a function or an \texttt{eval} statement. A code
block is generated when a function is created or an \texttt{eval} is
executed. In our instrumentation, we perform control flow analysis on
a code block when it is created and in our formal model we abstractly
represent a code block as a CFG, written $\zeta$. Formally, a CFG is a
directed graph, whose nodes are bytecodes and whose edges represent
possible control flows. There are no cross-function edges. A CFG also
records the IPD of each node. IPDs are computed using an algorithm by
Lengauer and Tarjan~\cite{Lengauer} when the CFG is created. If the
CFG contains uncaught exceptions, we also create a SEN. For a CFG
$\zeta$ and a non-branching node $\iota \in \zeta$, $\Succ(\zeta,
\iota)$ denotes $\iota$'s unique successor. For a conditional
branching node $\iota$, $\Left(\zeta, \iota)$ and
$\Right(\zeta,\iota)$ denote successors when the condition is true and
false, respectively.

The bytecode interpreter is a standard stack machine, with support for
JS features like scope chains and prototype chains. The state
of the machine (with our instrumentation) is a quadruple
$\langle\iota, \theta, \sigma, \rho\rangle$, where $\iota$ represents
the current node that is being executed, $\theta$ represents the heap,
$\sigma$ represents the call-stack and $\rho$ is the $pc$-stack.

We assume an abstract, countable set ${\cal A} = \{a,b,\ldots\}$ of
heap locations, which are references to objects. The heap $\theta$ is
a partial map from locations to objects. An object $O$ may be:
\begin{compactitem}
\item An ordinary JS object $N = (\{p_i \mapsto v_i\}_{i=0}^n,
  \_\_proto\_\_ \mapsto a^{\ell_p}, \ell_s)$, containing
  properties named $p_0,\ldots,p_n$ that map to labeled values
  $v_0,\ldots,v_n$, a prototype field that points to a parent at heap
  location $a$, and two labels $\ell_p$ and $\ell_s$. $\ell_p$ records
  the $pc$ where the object was created.
  % or the $\_\_proto\_\_$ field altered,\footnote{Even though the
  %   EcmaScript specification says that
  %   this is an internal property only, it is available in almost all
  %   JS
  %   implementations.} i.e., the joins of the $pc$s where the
  % $\_\_proto\_\_$ field was set
  $\ell_s$ is the so-called structure label, which is an upper bound
  on all $pc$s that have influenced which fields exist in the
  object.\footnote{The $\_\_proto\_\_$ field is the parent of the
    object; it is not the same as the prototype field of a function
    object, which is an ordinary property. Also, in our actual model,
    fields $p_i$ map to more general property descriptors that also
    contain attributes along with labeled values. We elide attributes
    here to keep the presentation simple.}
\item A function object $F = (N, \zeta, \Sigma)$, where $N$ is an
  ordinary object, $\zeta$ is a CFG, which corresponds to the the
  function stored in the object, and $\Sigma$ is the scope chain
  (closing context) of the function.
\end{compactitem}

A labeled value $v = r^\ell$ is a value $r$ paired with a security
label $\ell$. A value $r$ in our model may be a heap location $a$ or a
JS primitive value $n$, which includes integers, Booleans, regular
expressions, arrays, strings and the special JS values
\texttt{undefined} and \texttt{null}.

%% A property descriptor $d$ is a tuple
%% $(v,writable,enumerable,configurable)$, where $v$ is a labeled value,
%% and the remaining fields are Boolean attributes of the descriptor,
%% which determine access rights on the property, but whose description
%% we elide here for simplicity. In the rest of this paper, we simply
%% assume that $d$ is a labeled value $v$. For a labeled value $r^\ell$,
%% $\ell$ is an upper bound on the security labels of all entities that
%% have influenced $r$. In particular, when $r$ is an ordinary object or
%% a function object, $\ell$ doubles as the so-called the \emph{structure
%%   label} of the object~\cite{...}, i.e., the label determining which
%% fields exist or do not exist in the object. For a function object,
%% $\ell$ is also an upper bound on the $pc$ in which $f$ was compiled.

The call-stack $\sigma$ contains one call-frame for each incomplete
function call. A call-frame $\mu$ contains an array of registers for
local variables, a CFG $\zeta$ for the function represented by the
call-frame, the return address (a node in the CFG of the previous
frame), and a pointer to a scope-chain that allows access to variables
in outer scopes. Additionally, each call-frame has an exception table
which maps each potentially exception-throwing bytecode in the
function to the exception handler within the function that surrounds
the bytecode; when no such exception handler exists, it points to the
SEN of the function (we conservatively assume that any unknown code
may throw an exception, so bytecodes \texttt{call} and \texttt{eval}
are exception-throwing for this purpose).
%% Even though in reality, the handler may lie in another function, in
%% our semantics it always lies within the same function (but may be
%% the SEN).
% The stack operations, \emph{push} and \emph{pop}, are defined for the call-stack.
$|\sigma|$ denotes the size of the call-stack and $!\sigma$ its top
frame. Each register contains a labeled value.
%% We assume that the CFG of a
%% function and the IPD of all the instructions
%% are known to us before we start executing the function. The procedure
%%  used to construct the CFG and determine the IPD is
%% similar to the one used in~\cite{just11PLASTIC}. It involves
%% performing static analysis on the function code for which the CFG is
%% being constructed. 

%% A variable $x$ consists of two fields: value and label. The function
%% $\Gamma(x)$ returns the label of the value contained in $x$ and the
%% function $\Upsilon(x)$ returns the value of $x$. A conditional of the
%% form $\mathit{if (a)~then~x~else~y}$ is represented as $a \implies
%% x\ \diamond\ y$.

A scope chain, $\Sigma$, is a sequence of scope chain nodes (SCNs),
denoted $S$, paired with labels. In WebKit, a scope chain node $S$ may
either be an object or a variable environment $V$, which is an array
of labeled values. Thus, $\Sigma \;::=\; (S_1, \ell_1) : \ldots :
(S_n, \ell_n)$ and $S \;::=\; O ~|~ V$ and $V \;::=\; v_1 : \ldots :
v_n$.

Each entry of the $pc$-stack $\rho$ is a triple $(\ell, \iota, p)$,
where $\ell$ is a security label, $\iota$ is a node in a CFG, and $p$
is a pointer to some call-frame on the call stack $\sigma$. (For
simplicity, we ignore a fourth Boolean field described in
Section~\ref{sec:challenges} in this presentation.) When we enter a new
control context, we push the new $pc$ $\ell$ together with the IPD
$\iota$ of the entry point of the control context and a pointer $p$ to
current call-frame. The pair $(\iota, p)$ uniquely identifies where
the control of the context ends; $p$ is necessary to distinguish the
same branch point in different recursive calls of the
function~\cite{just11PLASTIC}. In our semantics, we use the
meta-function $\isIPD$ to pop the stack. It takes the current
instruction, the current $pc$-stack and the call stack $\sigma$, and
returns a new $pc$-stack.
\begin{equation*}
 \isIPD(\iota, \rho, \sigma) := 
 \begin{cases}
   \rho.\mathit{pop()} & \mbox{if}~ !\rho = (\_, \iota, !\sigma)
\\ \rho & \mbox{otherwise}
 \end{cases}
\end{equation*} 

As explained in Section~\ref{sec:challenges}, as an optimization, we
push a new node $(\ell, \iota, \sigma)$ onto $\rho$ only when $(\iota,
\sigma)$ (the IPD) differs from the corresponding pair on the top of
the stack and, to handle exceptions correctly, we also require that
$\iota$ not be the SEN. Otherwise, we just join $\ell$ with the label
on the top of the stack. This is formalized in the function
$\rho.push(\ell, \iota, \sigma)$, whose obvious definition we elide.

If $x$ is a pair of any syntactic entity and a security label, we
write $\Upsilon(x)$ for the entity and $\Gamma(x)$ for the label. In
particular, for $v = r^\ell$, $\Upsilon(v) = r$ and $\Gamma(v) =
\ell$.

%%%%%%%%%%%%%%%%%%%%%%%%%%%%%%%%%%%%%%%%%%%%%%%%%%%%%%%%%%%%%%%%%%

%%%%%%%%%%%%%%%%%%%%%%%%%%%%%%%%%%%%%%%%%%%%%%%%%%%%%%%%%%%%%%%%%%

\subsection{Semantics and IFC with Intra-procedural CFGs}
\label{sec:real}

We now present the semantics, which faithfully models our
implementation using intra-procedural CFGs with SENs.  The semantics
is defined as a set of state transition rules that define the
judgment: $\langle \iota, \theta, \sigma, \rho\rangle \leadsto
\langle\iota', \theta', \sigma', \rho'\rangle$.
\Figure{fig:semantics:ideal} shows rules for selected bytecodes. For
reasons of space we omit rules for other bytecodes and formal
descriptions of some meta-function like $opCall$ that are used in the
rules. $C \Rightarrow A \diamond B$ is shorthand for a meta-level
(if ($C$) then $A$ else $B$).

%% In this
%% semantics, we assume that each function or \texttt{eval} has a
%% separate CFG with a SEN and that these CFGs are not
%% cross-connected. The CFG of a function and the IPDs of nodes in it are
%% are computed once when the function is compiled and used, without
%% modification, across all calls to the function.

\begin{figure}[tb]
%\centering
\input{./Selected-Semantics.tex}
%\includegraphics[width=\linewidth]{Semantics}
%\vspace*{-8mm}
\caption{Semantics, selected rules}
\vspace*{-5mm}\label{fig:semantics:ideal}
\end{figure}

\texttt{prim} reads the values from two registers $\texttt{src1}$ and
$\texttt{src}2$, performs a binary operation generically denoted by
$\oplus$ on the values and writes the result into the register
$\texttt{dst}$. $\texttt{dst}$ is assigned the join of the labels in
$\texttt{src1}$, $\texttt{src2}$ and the head of the $pc$-stack
($!\rho$). To implement deferred NSU (Section~\ref{sec:challenges}),
the existing label in \texttt{dst} is compared with the current
$pc$. If the label is lower than the $pc$, then the label of
$\texttt{dst}$ is joined with $\star$. Note that the premise $\rho' =
\isIPD(\iota', \rho, \sigma)$ pops an entry from the $pc$-stack if its
IPD matches the new program node $\iota'$. This premise occurs in all
semantic rules.

\texttt{jfalse} is a conditional jump. It skips \texttt{offset} number
of successive nodes in the CFG if the register \texttt{cond} contains
\texttt{false}, else it falls-through to the next node. Formally, the
node it branches to is either $\Right(\zeta, \iota)$ or
$\Left(\zeta,\iota)$, where $\zeta$ is the CFG in $!\sigma$. In
accordance with deferred NSU, the operation is performed only if
\texttt{cond} is not labeled $\star$. \texttt{jfalse} also starts a
new control context, so a new node is pushed on the top of the
$pc$-stack with a label that is the join of $\Gamma(\texttt{cond})$
and the current label on the top of the stack (unless the IPD of the
branch point is already on top of the stack or it is the SEN, in which
case we join the new label with the previous). Traversed from bottom
to top, the $pc$-stack always has monotonically non-decreasing labels.

\texttt{put-by-id} updates the property \texttt{prop} in the object
pointed to by register \texttt{base}. As explained in
Section~\ref{sec:challenges}, we allow this only if the value to be
written is not labeled $\star$. The flag \texttt{direct} states
whether or not to traverse the prototype chain in finding the
property; it is set by the compiler as an optimization. If the flag is
\texttt{true}, then the chain is not traversed (meta-function
\emph{putDirect} handles this case). If \texttt{direct} is
\texttt{false}, then the chain is traversed (meta-function
\emph{putIndirect}). Importantly, when the chain is traversed, the
resulting value is labeled with the join of
prototype labels $\ell_p$ and structure labels $\ell_s$ of all traversed
objects. This is standard and necessary to prevent implicit leaks
through the $\_\_proto\_\_$ pointers and structure changes to objects.

%% The basic functionality is to
%% search for the property in the object and its prototype chain, and
%% change it. If the property is not found, a new property for the
%% current object is created. Based on whether the property is in the
%% object itself (or needs to be created in the object itself) or in the
%% prototype chain of the object, it calls \emph{putDirect} and
%% \emph{putIndirect}, respectively.

\texttt{push-scope}, which corresponds to the start of the JS
construct \texttt{with(obj)}, pushes the object pointed to by the
register \texttt{scope} into the scope chain. Because pushing an
object into the scope chain can implicitly leak information from the
program context later, we also label all nodes in the scope-chain with
the $pc$'s at which they were added to the chain. Further, deferred
NSU applies to the scope chain pointer in the call-frame as it does to
all other registers.

%% converts \emph{scope} to object and pushes it onto the top of
%% the current scope-chain. The contents of the register \emph{scope} are
%% replaced by the created object. The scope-chain next pointer label is
%% set to the context label.

%% \emph{pscope} removes the top
%% node from the current scope-chain if
%% the scope-chain next pointer label is greater than or equal to the
%% context. 

%%%% interprocedural bytecodes
\texttt{call} invokes a function of the target object stored in the
register \texttt{func}. Due to deferred NSU, the call proceeds only if
$\Gamma(\texttt{func})$ is not $\star$. The call creates a new
call-frame and initializes arguments, the scope chain pointer
(initialized with the function object's $\Sigma$ field), CFG and the
return node in the new frame. The CFG in the call-frame is copied from
the function object pointed to by \texttt{func}. All this is
formalized in the meta-function $opCall$, whose details we omit
here. Call is a branch instruction and it pushes a new label on the
$pc$-stack which is the join of the current $pc$,
$\Gamma(\texttt{func})$ and the structure label $\ell_f$ of the
function object (unless the IPD of the current node is the SEN or
already on the top of the $pc$-stack, in which case we join the new
$pc$-label with the previous).
%% A new entry contains the IPD of the current node, as well as the
%% current call-frame $\mC$, and a Boolean value $\mathcal{H}$, which
%% is \texttt{true} if there is an exception handler in-scope, and
%% \texttt{false} otherwise.
\texttt{call} also initializes the new registers' labels to the new
$pc$.  A separate bytecode, not shown here and executed first in the
called function, sets register values to
\texttt{undefined}. \texttt{eval} is similar to call but the code to
be executed is also compiled.

\texttt{ret} exits a function. It returns
control to the caller, as formalized in the meta-function $opRet$. The
return value is written to an interpreter variable ($\gamma$).
%denoted $\gamma$; this variable can be read by the next instruction in the caller.

\texttt{throw} throws an exception, passing the value in register
\texttt{ex} as argument to the exception handler. Our $pc$-stack push
semantics ensure that the exception handler, if any, is present in
the call-frame pointed to by the \emph{top} of the $pc$-stack. The
meta-function \emph{throwException} pops the call-stack up to this
call-frame and transfers control to the exception handler, by looking
it up in the exception table of the call-frame. The exception value in
the register \texttt{ex} is transferred to the handler through an
interpreter variable.

The semantics of other bytecodes have been described in
Section~\ref{sec:app-sem}. 

%% \subsection{Correctness of IFC}
%% \label{sec:ni}

\paragraph{Correctness of IFC}
We prove that our IFC analysis guarantees termination-insensitive
non-interfer\-ence~\cite{volpano}. Intuitively, this means that if a
program is run twice from two states that are observationally
equivalent for the adversary and both executions terminate, then the
two final states are also equivalent for the adversary. To state the
theorem formally, we formalize equivalence for various data structures
in our model. The only nonstandard data structure we use is the CFG,
but graph equality suffices for it. A well-known complication is that
low heap locations allocated in the two runs need not be identical. We
adopt the standard solution of parametrizing our definitions of
equivalence with a partial bijection $\beta$ between heap
locations. The idea is that two heap locations are related in the
partial bijection if they were created by corresponding allocations in
the two runs. We then define a rather standard relation $\langle
\iota_1, \theta_1, \sigma_1, \rho_1\rangle \sim^{\beta}_{\ell}
\langle\iota_2, \theta_2, \sigma_2, \rho_2\rangle$, which means that
the states on the left and right are equivalent to an observer at
level $\ell$, up to the bijection $\beta$ on heap locations. The
details have been presented in Section~\ref{sec:app-ni}.

\begin{myThm}[Termination-insensitive non-interference]
Suppose:\linebreak[5] (1) $\langle \iota_1, \theta_1, \sigma_1, \rho_1\rangle
\sim^\beta_\ell \langle\iota_2, \theta_2, \sigma_2, \rho_2\rangle$,
(2) $\langle \iota_1, \theta_1,\sigma_1,\rho_1\rangle \leadsto^{*}
\langle \texttt{end} ,\theta_1', [],[]\rangle$, and (3) $\langle
\iota_2, \theta_2,\sigma_2,\rho_2\rangle \leadsto^{*} \langle
\texttt{end},\theta_2', [],[]\rangle$.  Then, $\exists \beta'
\supseteq \beta$ such that $\theta_1' \sim^{\beta'}_\ell \theta_2'$.
\end{myThm}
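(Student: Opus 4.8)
The plan is to prove this by a standard step-indexed bisimulation argument over the instrumented transition relation $\leadsto$. The core invariant to maintain is the low-equivalence relation $\sim^\beta_\ell$ on machine states $\langle \iota, \theta, \sigma, \rho\rangle$, carefully set up so that it is preserved by single steps whenever both executions are at ``low'' program points (i.e., the top of the $pc$-stack is $\sqsubseteq \ell$). I would first fix the definitions of low-equivalence for each component: graph equality for CFGs; a $\beta$-indexed relation on heaps $\theta_1 \sim^\beta_\ell \theta_2$ requiring related locations to have low-equivalent objects (equal low-labeled property values up to $\beta$, equal $\ell_p,\ell_s,\_\_proto\_\_$ labels, and related CFG/$\Sigma$ for function objects); a frame-by-frame relation on call-stacks $\sigma_1 \sim^\beta_\ell \sigma_2$ comparing registers, return nodes, scope chains, exception tables; and an entry-by-entry relation on $pc$-stacks $\rho_1 \sim^\beta_\ell \rho_2$ (equal labels, equal CFG nodes, corresponding call-frame pointers). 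The crucial subtlety, as flagged in Section~\ref{sec:challenges}, is that values labeled $\star$ in adversary-invisible structures (registers, variable environments) need \emph{not} be related — only $\star$-freeness of heap-visible locations matters, which is exactly what the deferred NSU check enforces.

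Next I would establish the two key lemmas that drive the induction. The first is a \textbf{one-step lemma for low steps}: if $s_1 \sim^\beta_\ell s_2$, both have $\Gamma(!\rho) \sqsubseteq \ell$, and both take a step, then the resulting states are related under some $\beta' \supseteq \beta$ (the extension handles fresh allocations by \texttt{new-object}, \texttt{create-this}, etc.). This is proved by case analysis on the bytecode: for \texttt{prim}, \texttt{mov}, \texttt{get-by-id}, \texttt{put-by-id} one checks that equal low inputs give equal low outputs and that the join-based taint propagation keeps labels consistent; for \texttt{jfalse}, \texttt{loop-if-less}, \texttt{call}, \texttt{construct} one uses that a low-labeled condition/function has equal value across runs, so both executions branch the same way, reach the same $\iota'$, and push identical $pc$-stack entries; for \texttt{throw} and the SEN mechanism one verifies that the call-stack is popped identically and that \emph{not} popping the $pc$-stack keeps the two $pc$-stacks in lockstep. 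The second is a \textbf{high-steps lemma}: while $\Gamma(!\rho) \not\sqsubseteq \ell$ in one run, stepping cannot change any adversary-visible heap location (the $pc$-taint in \texttt{put-by-id}/\texttt{putDirect} raises the written label above $\ell$, and deferred NSU blocks any $\star$ write to the heap) and cannot change the low-visible parts of $\sigma$; combined with the fact — cited from \cite{Xin,Masri} — that the IPD marks the end of a control context, the high region in each run runs from a branch point to its IPD, so the two runs reconverge at matched program points with heaps still $\sim^\beta_\ell$ and $pc$-stacks of matched shape.

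Finally I would assemble the theorem by a \emph{pairing} induction on the combined length of the two terminating runs, using a measure that advances the run whose top-of-$pc$-stack is currently low (or, if both are high, advancing both until their respective IPDs, then re-synchronizing via the high-steps lemma). Since both runs are assumed to terminate in $\langle \texttt{end}, \theta_i', [], []\rangle$ — in particular with empty $pc$-stacks — the re-synchronization always completes, and at termination the low-equivalence of states specializes to $\theta_1' \sim^{\beta'}_\ell \theta_2'$ as required; termination-insensitivity is exactly what lets us discharge the case where one run gets stuck (on NSU/$\star$) or diverges while the other terminates. I expect the \textbf{main obstacle} to be the high-steps lemma in the presence of exceptions and SENs: one must show that an uncaught exception thrown under a high $pc$ transfers control to a handler whose IPD is already the matched entry on \emph{both} $pc$-stacks, so that the two runs — which may have taken structurally different paths through \texttt{call}/\texttt{ret}/\texttt{throw} — still reconverge; this requires a careful auxiliary invariant relating the exception tables, the call-frame pointers stored in $\rho$, and the SEN placement across the two runs, and is the place where the intra-procedural-CFG-with-SEN design must be shown to faithfully emulate the idealized cross-function-edge semantics.
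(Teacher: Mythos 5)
Your proposal matches the paper's proof essentially step for step: the paper defines $\beta$-indexed equivalences on values, objects, heaps, scope chains, call-frames, $pc$-stacks and call-stacks (with exactly the asymmetric treatment of $\star$ you describe), proves a one-step lockstep lemma for low-$pc$ steps (Supporting Lemma 1), a confinement lemma plus multi-step corollaries showing high-$pc$ steps preserve all low-visible state, and an IPD-based reconvergence lemma (Supporting Lemma 2), and then assembles the theorem by induction over the "epoch-trace" of low-$pc$ states — which is the same synchronization scheme as your pairing induction. The obstacle you flag (exceptions/SENs under a high $pc$) is indeed where the paper's \texttt{throw} cases do the work, using the invariant that the handler's call-frame is the one recorded in the top $pc$-stack entry.
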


\section{Implementation}
\label{sec:impl}

%%\subsection{Instrumentation and Optimization}
We instrumented WebKit's JS engine (JavaScriptCore) to
implement the IFC semantics of the previous section. Before a function
starts executing, we generate its CFG and calculate IPDs of its nodes
by static analysis of its bytecode.
%~\cite{just11PLASTIC}. 
We modify the source-to-bytecode compiler to emit a slightly
different, but functionally equivalent bytecode sequence for
\texttt{finally} blocks; this is needed for accurate computation of
IPDs. For evaluation purposes, we label each source script with the
script's domain of origin; each seen domain is dynamically allocated a
bit in our bit-set label. In general, our instrumentation terminates a
script that violates IFC. However, for the purpose of evaluating
overhead of our instrumentation, we ignore IFC violations in all
experiments described here.

We also implement and evaluate a variant of \emph{sparse
  labeling}~\cite{plas09} which optimizes the common case of
computations that mostly use local variables (registers in the
bytecode). Until a function reads a value from the heap with a label
different from the $pc$, we propagate taints only on heap-writes, but
not on in-register computations. Until that point, all registers are
assumed to be implicitly tainted with $pc$. This simple optimization
reduces the overhead incurred by taint tracking significantly in
microbenchmarks. For both the basic and optimized version, our
instrumentation adds approximately 4,500 lines of code to WebKit.

%This results in incorrect determination of IPDs for code having a
% \emph{finally} block. To mimic the behavior of \emph{finally} block
% and To correctly determine the IPDs for exception-throwing
% instructions, we %added two instructions to the original set of
% instructions. We also

%% We no not track labels
%% of local values in registers are not tracked until some variable is
%% read from the heap they are written to or read from the heap. Instead
%% we implicitly label them with the $pc$. When writing the value in the
%% heap, we add the appropriate label to indicate its context. When a
%% value is read from the heap, the associated label indicates if taint
%% tracking is required or not. 
%Proving the correctness of this approach is part of the future work.

Our baseline for evaluation is the uninstrumented interpreter with JIT
disabled. For comparison, we also include measurements with JIT
enabled. Our experiments are based on WebKit build $\#$r122160 running
in Safari 6.0.  The machine has a 3.2GHz Quad-core Intel Xeon
processor with 8GB RAM and runs Mac OS X version 10.7.4.

\begin{figure}[tb]
\centering
\includegraphics[height=6cm]{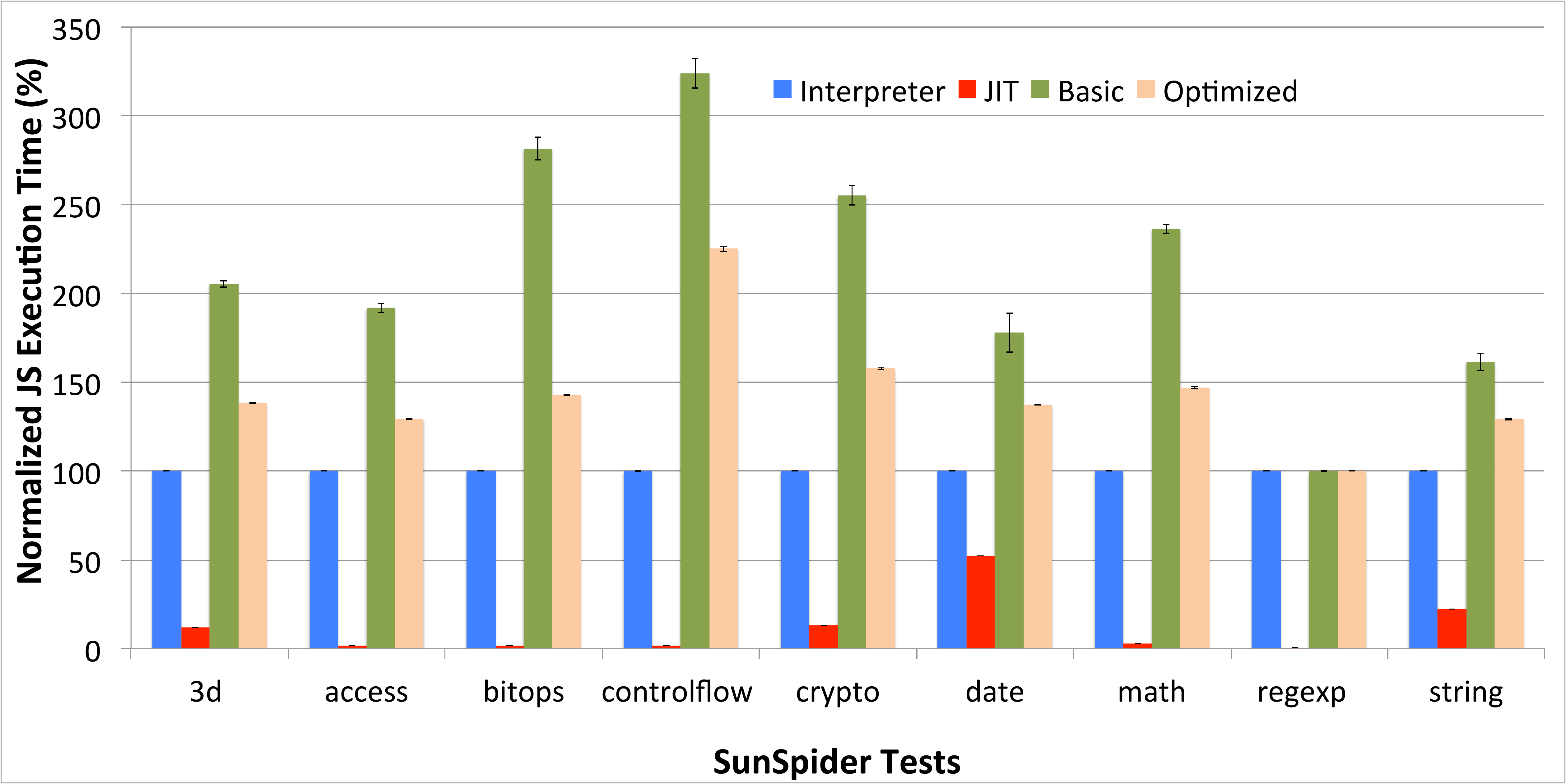}
\vspace*{-4mm}
\caption{Overheads of basic and optimized IFC in SunSpider benchmarks}
\vspace*{-4mm}\label{fig:ss-normalized}
\end{figure}

\paragraph{Microbenchmark}
We executed the standard SunSpider 1.0.1 JS benchmark suite on
the uninstrumented interpreter with JIT disabled and JIT enabled, and
on the basic and the optimized IFC instrumentations with JIT
disabled. Results are shown in Figure~\ref{fig:ss-normalized}. The
x-axis ranges over SunSpider tests and the y-axis shows the average
execution time, normalized to our baseline (uninstrumented interpreter
with JIT disabled) and averaged across 100 runs. Error bars are
standard deviations. Although the overheads of IFC vary from test to
test, the average overheads over our baseline are 121\% and 45\% for
basic IFC and optimized IFC, respectively. The test \emph{regexp} has
almost zero overhead because it spends most time in native code, which
we have not yet instrumented. We also note that, as expected, the
JIT-enabled configuration performs extremely well on the SunSpider
benchmarks.

%% for testing the instrumented JS by comparing the run
%% times of the tests in the three browsers. The SunSpider tests were
%% executed 100 times and the plot of
%% the average delays is presented in \Figure{fig:ss}. The X-axis
%% shows the different tests included in SunSpider, and the Y-axis
%% shows their average execution time in milliseconds. 
%% On average, a delay of about 121\% is observed with the basic
%% implementation. The optimization significantly reduces the overhead of taint
%% tracking and incurs about 45\% overhead.

%% An interesting result in the benchmark evaluation is that of
%% \emph{regexp}, which creates a regular expression object and performs
%% some pattern matching. The reason for nearly similar execution times 
%% for all the three runs is that most of the time is spent on performing
%% memory operations and other native code. 

\paragraph{Macrobenchmarks}
We measured the execution time of the intial JS on 9 popular English
language Websites.  We load each Website in Safari and measure the
total time taken to \emph{execute} the JS code without user
interaction. This excludes time for network communication and internal
browser events and establishes a very conservative baseline. The
results, normalized to our baseline, are shown in
\Figure{fig:macro}. Our overheads are all less than 42\% (with an
average of around 29\% in both instrumentations).  Interestingly, we
observe that our optimization is less effective on real websites
indicating that real JS accesses the heap more often than the
SunSpider tests. When compared to the amount of time it takes to fetch
a page over the network and to render it, these overheads are
negligible. Enabling JIT worsens performance compared to our baseline
indicating that, for the code executed here, JIT is not useful.

We also experimented with JSBench~\cite{JSbench}, a sophisticated
benchmark derived from JS code in the wild. The average
overhead on all JSBench tests (a total 23 iterations) is approximately
38\% for both instrumentations. The average time for running the
benchmark tests on the uninstrumented interpreter with JIT disabled
was about 636.11ms with a standard deviation of about 0.30\% of the
mean. The average time for running the same benchmark tests on the
instrumented interpreter and the optimized version was about 874.31ms
and 880.85ms respectively, with a standard deviation of about 4.09\% 
and 5.04\% of the mean in the two cases. 

%% We also compare our instrumentation with JSBench~\cite{JSbench}.  The
%% experiments involved running all JSBench tests for a total of 23
%% iterations.  And we observe a modest overhead of around 38\% on an
%% average in both the implementations when compared to the browser
%% running with uninstrumented interpreter.

%%%%% RC - comparsion with JIT
% \textcolor{red}{Although we do not modify JIT for information flow tracking we still
% compare with it just to get a sense of how well or worse it
% performs. On the micro benchmark side, as was expected JIT performs
% really well as the benchmark code is executed for multiple iterations
% which really offsets the cost of one time compilation overhead. But on
% the macro benchmark side, we observe that interpreter performs
% slightly better than JIT in most of the cases probably because of the
% nature of our experiment and the compilation doesn't really improve
% the execution time this case.}

%% The idea of this experiment was to measure the delay in JS execution
%% for opening real world websites. We downloaded some of the top
%% websites and ran each website on the three variants of the
%% browser. The time taken to load the website (without taking download
%% and layout times into account) was recorded in each run and the data
%% is presented in \Figure{fig:macro}. The X-axis shows the websites
%% tested as part of the experiment, and the Y-axis shows the average
%% percentage of overhead incurred by the two instrumentations with
%% respect to the time taken in the original browser.

\begin{figure}[tb]
\centering
\includegraphics[height=7cm]{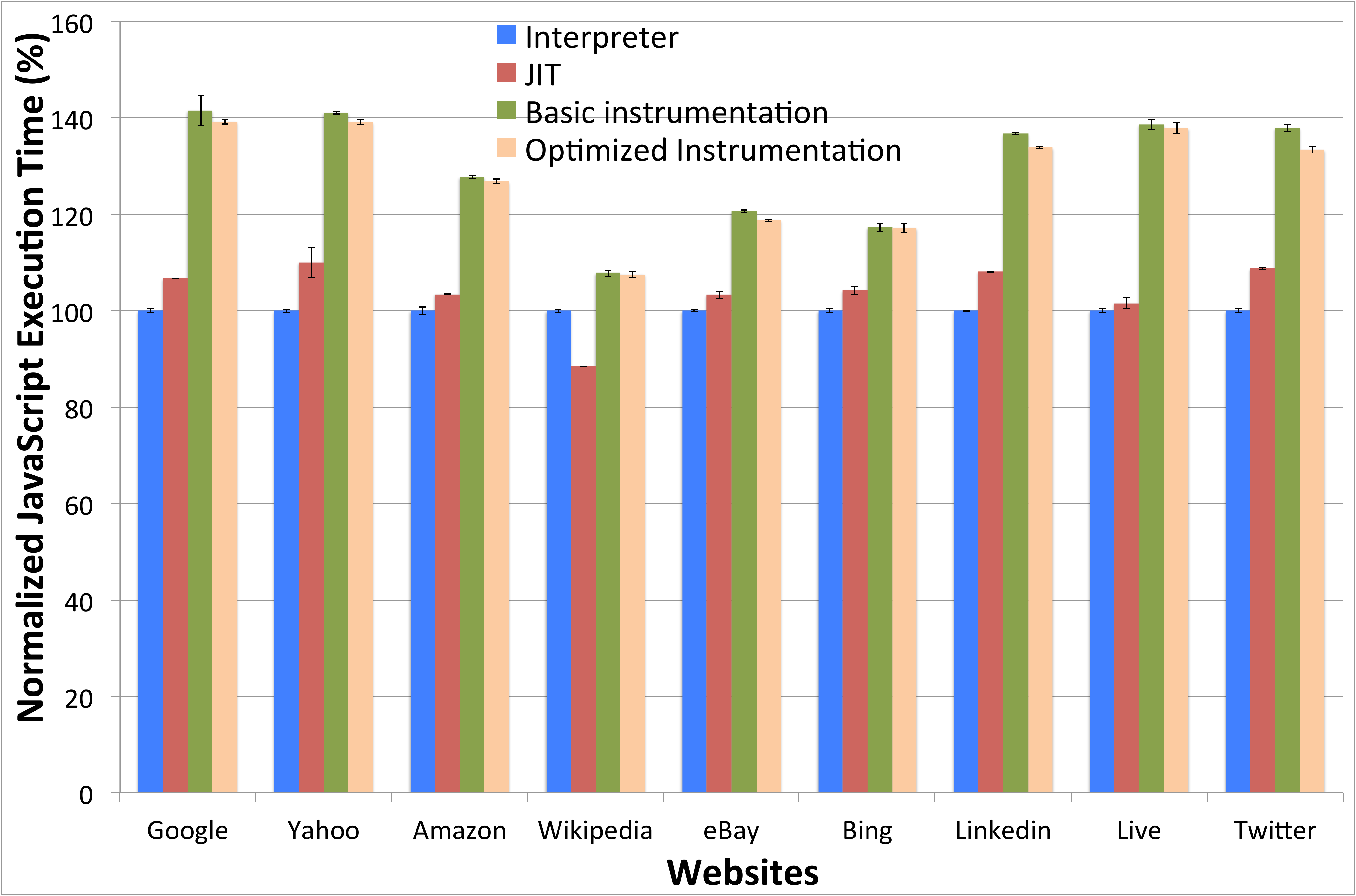}
\vspace*{-3mm}
\caption{Overheads of basic and optimized IFC in real websites}
\vspace*{-4mm}\label{fig:macro}
\end{figure}

%!TEX root = ifc-post14.tex

\section{Conclusion and Future Work}
\label{sec:conc}

We have explored dynamic information flow control for JS
bytecode in WebKit, a production JS engine. We
formally model the bytecode, its semantics, our instrumentation and
prove the latter correct. We identify challenges, largely arising from
pervasive use of unstructured control flow in bytecode, and resolve
them using very limited static analysis. Our evaluation indicates only
moderate overheads in practice.

In ongoing work, we are instrumenting the DOM and other native
JS methods. We also plan to generalize our model and
non-interference theorem to take into account the reactive nature of
Web browsers. Going beyond non-interference, the design and
implementation of a policy language for representing allowed
information flows looks necessary.

\paragraph{Acknowledgments} This work was funded in part by the
Deutsche Forschungsgemeinschaft (DFG) grant ``Information Flow Control
for Browser Clients'' under the priority program ``Reliably Secure
Software Systems'' (RS3) and the German Federal Ministry of
Education and Research (BMBF) within the Centre for IT-Security,
Privacy and Accountability (CISPA) at Saarland University.

%% The paper presents a novel information flow
%% control analysis for JavaScript. Motivated by the need to provide IFC for a web browser,
%% we explore the following design space: 1)~analysis
%% of JavaScript bytecode and 2)~%the enforcement methodology:
%% dynamic taint tracking with techniques from program analysis that
%% handle all forms of unstructured control flow and exceptions gracefully. 3)~a permissive
%% notion of no-sensitive-upgrade checks.

%% We formally model the
%% JavaScript interpreter of a real browser in a
%% formal semantics for JS bytecode, which we augment
%% with information flow control. We show that the
%% approach provides termination-insensitive non-interference.

%% Our implementation provides empirical evidence
%% that the overheads are far lower than source-level IFC, and competitive
%% even with parallelized secure multi-execution with two security levels.
%% %
%% Future work will embed our analysis into the browser model and investigate
%% appropriate notions of reactive non-interference.

% Future work:
% \begin{itemize}
% \item Extending this model for Reactive setting.
% \item Include some static analysis to prevent indirect flows. This
%   would also make our approach flow-sensitive.
% \item Combining it with discretionary access control to provide an
%   end-to-end policy enforcement.
% \end{itemize}

\section{Appendix}\label{sec:appendix}
\setcounter{theorem}{0}
\setcounter{definition}{0}
\setcounter{myLemma}{0}
\setcounter{mydef}{0}
\setcounter{myaxiom}{0}

\newcommand{\eq}{\sim^\beta}
\newtheorem{myCor}[myThm]{Corollary}

%This is the text of the appendix, if you need one.
\subsection{Data Structures}\label{sec:app-ds}
\begin{figure}
\input{./ds.tex}
\caption{Data Structures}
\label{fig:ds}
\end{figure}

The formal model described in Section 5 was typechecked in C++. The
various data structures used for defining the functions
used in the semantics of the language are given in
Figure~\ref{fig:ds}. The source-code of the
JavaScript program is represented as a structure containing the source
and a Boolean flag indicating the \emph{strict mode} is set or
not. The instruction as indicated before is a structure consisting of
the \emph{opcode} and the list of \emph{operands}. The \emph{opcode}
is a string indicating the operation and the \emph{operand} is a union
of \emph{registerIndex, immediateValue, identifier, Boolean, funcIndex} and \emph{offset}.
%of \emph{immediateValue, identifier, registerIndex,} and
%\emph{functionIndex}. 
The \emph{immediateValue} denotes the directly supplied value to an
opcode, \emph{registerIndex} is the index of the register containing the value
to be operated upon, \emph{identifier} represents the string name
directly used by the opcode, \emph{Boolean} is a often a flag
indicating the truth value of some parameter and \emph{offset}
represents the offset where the control jumps to.
Similarly, \emph{functionIndex} indicates the index
 of the function object being invoked.

The function's source code is represented in the form of a control flow
graph (CFG). Formally, it is defined as a struct with a list of CFG
nodes, each of which contain the instructions that are to
be performed and the edges point to the next instruction in the
program. Multiple outgoing edges indicate a branching
instruction. It also contains variables indicating the number of
variables used by the function code and a reference to the
\emph{globalObject}. 

The labels are interpreted as a structure consisting of long integer
\emph{label}. The \emph{label} represents the value of the label,
which are interpreted as bit vectors. A special label \emph{star}
which represents partially leaked data, is used for deferred
no-sensitive-upgrade check.  The program counter ($pc$) is 
represented as a stack of $pc$-nodes, each of which contains the context
label and the $\IPD$ of the operation that pushed the node, the
callframe of the current node and the handler flag indicating the
presence of an exception handler.

Different types of values are used as operands for performing the
operations. They include Boolean, integer, string, double and
objects or special values like NaN or \emph{undefined}. These values
are associated with a label each and are wrapped by the JSValue
class. All the values used in the data structures have the type
JSValue. The objects consist of properties, a prototype chain pointer
with an associated label and a structure label for the object. The
properties are represented as a structure of the \emph{propertyName}
and its \emph{descriptor}. The descriptor of the property contains the
\emph{value}, some Boolean flags and a property label. 
%It also contains the getter-setter functions for the property. 
The heap is a collection of objects with an associated memory
address. It is essentially a map from location to object. 
%is represented as an array of objects starting from memory
%address \emph{location}. 

There are subtypes of JSObject that define the function object and the
global object. The function object contains a pointer to the
associated CFG and the scope chain. It also contains a field defining
the type of function it represents, namely, host or user-defined. 

The call-stack is made up of various call-frame nodes, each of which
contains a set of registers, the associated CFG, the return address of
the function, a pointer to the scope chain, and an exception
table. The registers store values and objects and are used as operands
for performing the operations. The exception table contains the
details about the handlers associated with different instructions in
the CFG of the call-frame. The scope chain is a list of nodes
containing objects or activation objects along with a label indicating
the context in which the object in that node was added. The activation
object is a structure containing a pointer to the call-frame node for
which the activation object was created.

The next section defines the different procedures used in the
semantics of the language. The statement \textbf{stop} implies that the
program execution hangs.

\subsection{Algorithms}\label{sec:app-algo}
The different meta-functions used in the semantics presented in Section~\ref{sec:app-sem}
are described below: 

%\begin{multicols} {1}
\scriptsize
\begin{algorithmic}
\Procedure {\emph{isInstanceOf}}{JSLabel context, JSValue obj,
  JSValue protoVal}
\State {oProto := obj.prototype.\_\_proto\_\_}
\While {oProto}
 \If {oProto = protoVal}
   \State {ret := JSValue::construct(true)}
   \State {ret.label := context}
   \State \Return {ret}
 \EndIf
   \State {oProto := oProto.prototype.\_\_proto\_\_}
   \State {context := context.Join(oProto.prototype.l)}
\EndWhile
   \State {ret := JSValue::construct(false)}
   \State {ret.label := context}
   \State \Return {ret}
\EndProcedure
\\ \\
\Procedure {\emph{opRet}}{CallFrameStack* callStack, int ret}
\State {JSValue retValue = callStack.cFN.rf[ret].value}
\If {hostCallFrameFlag} 
   \State {callStack.pop()}
   \State \Return {nil, callStack, retValue}
\EndIf
\State {callStack.pop()}
\State \Return {callStack.cFN.returnAddress, callStack, retValue}
\EndProcedure
 \\ \\

\Procedure {\emph{opCall}}{CallFrameStack* callStack, CFGNode* ip,
		int func, int argCount} 
	\State {JSValue funcValue := callStack.cFN.rf[func].value}
	\State {JSFunctionObject fObj}
	\State {CallFrameNode *sigmaTop := new CallFrameNode()}
        \State {CallFrameNode *prevTop := callStack.top()}
	\State {callStack.push(sigmaTop)}
	\State  {CallType callType := getCallData(funcValue, \&fObj)}
	\If {callType = CallTypeJS} 
		\State  {ScopeChainNode* sc := fObj.scopeChain}
		\State  {callStack.cFN.cfg := fObj.funcCFG}
		\State  {callStack.cFN.returnAddress := ip.Succ}
		\State  {callStack.cFN.sc := sc}
		\State  {args.noOfArgs := argCount}
		\State  {callStack.cFN.argCount := argCount}
		\For {i $\gets$ 0, argCount}
                          \State  {callStack.cFN.rf[sigmaTop.baseRegister() +
                            i].value := }
                          \State  {~~~prevTop.rf[prevTop.headRegister()-i].value}
                \EndFor

		\State  {ip := callStack.cFN.cfg.cfgNode}
        \ElsIf {callType = CallTypeHost} 
                \State  {\textbf{stop}}
		\Comment  {Not modeled}
	\EndIf
	\State  {retState.ip := ip}
	\State  {retState.sigma := callStack}
	\State  \Return {retState}
\EndProcedure
\\ \\
\Procedure {\emph{opCallEval}}{JSLabel contextLabel, CallFrameStack* callStack,
		CFGNode* ip, int func, int argCount} 
	\State {JSValue funcValue := callStack.cFN.rf[func].value}
	\State {JSFunctionObject* fObj}
	\State {JSObject* variableObject}
	\State {Argument* arguments}
	\If {isHostEval(funcValue)}
		\State{ScopeChainNode* sc := fObj.scopeChain}
		\State{callStack.cFN.returnAddress := ip + 1}
		\State{callStack.cFN.sc := sc}
		\State{callStack.cFN.argCount := argCount}

		\State{SourceCode progSrc := funcValue.getSource()}
		\State{Compiler::preparse(progSrc)}

		\State{CFG* evalCodeBlock := Compiler::compile(progSrc)}

		\State{unsigned numVars := evalCodeBlock.numVariables()}
		\State{unsigned numFuncs := evalCodeBlock.numFuncDecls()}

		\If {numVars $||$ numFuncs}
			\If {evalCodeBlock.strictMode}
				\State{JSActivation* variableObject := new JSActivation()}
				\State{variableObject.create(callStack)}
				\State{SChainObject* scObj}
				\State{scObj.actObj := variableObject}
				\State{sc.push(scObj, variableObject, contextLabel)}
			\Else
				\For {(ScopeChainNode* n := sc;; n := n.next)}
					\If {n.isVariableObject() \&\& !n.isLexicalObject()}
						\State {variableObject := n.getObject()}
						\State {\textbf{break}}
					\EndIf
				\EndFor
			\EndIf

			\For {i $\gets$ 0, numVars}
				\State{Identifier iden := evalCodeBlock.variable(i)}
                                \If {!variableObject.hasProperty(iden)}
					\State{variableObject.insertVariable(iden)}
                                \EndIf
			\EndFor

			\For {i $\gets$ 0, numFuncs}
				\State{JSFunctionObject* fObj := evalCodeBlock.funcDecl(i)}
				\State{variableObject.insertFunction(fObj)}
			\EndFor
		\EndIf

		\State{callStack.cFN.cfg := evalCodeBlock}

		\State{ip := evalCodeBlock.cfgNode}
		\State{retState.ip := ip}
		\State{retState.sigma := callStack}
                \State \Return {retState}
	\Else
		\State \Return {opCall(contextLabel, callStack, ip, func, argCount)}
	\EndIf
\EndProcedure
\\ \\

\Procedure {\emph{createArguments}}{Heap* h, CallFrameStack* callStack}
       \State{JSObject* jsArgument := JSArgument::create(h, callStack)}
	\State{h.o[++(h.location)] := *jsArgument}
	\State{retState.theta := h}
	\State{retState.val := JSValue::construct(jsArgument)}
	\State \Return {retState}
\EndProcedure
\\ \\
\Procedure {\emph{newFunc}}{CallFrameStack* callStack, Heap* heap, int
  funcIndex, JSLabel context}
	\State{CFG* cBlock := callStack.cFN.cfg}
        \State{SourceCode fcCode := cBlock.getFunctionSrc(funcIndex)}
	\State{CFG* fcBlock := Compiler::compile(fcCode,callStack.cFN)}
	\State{JSFunctionObject* fObj := JSFunctionObject::create(
          fcBlock,callStack.cFN.sc)}
        \State{fObj.structLabel := context}
	\State{heap.o[++(heap.location)] := *fObj}
	\State{retState.theta := heap}
	\State{retState.val := JSValue::construct(fObj)}
	\State \Return {retState}
\EndProcedure
\\ \\
\Procedure {\emph{createActivation}}{CallFrameStack* callStack, JSLabel contextLabel}
	\State{JSActivation* jsActivation := new JSActivation()}
	\State{jsActivation.create(callStack)}
	\State{jsActivation.structLabel := contextLabel}
	\State{SChainObject* scObj}
	\State{scObj.actObj := jsActivation}

	\State{JSValue vActivation := JSValue::jsValuefromActivation (jsActivation)}
	\If {callStack.cFN.scopeLabel $\geq$ contextLabel}
		\State{callStack.cFN.sc.push(scObj, VariableObject, contextLabel)}
		\State{callStack.cFN.scopeLabel := contextLabel}
	\Else
		\State {\textbf{stop}}
        \EndIf
	\State \Return {retState}
\EndProcedure
\\ \\
\Procedure {\emph{createThis}}{JSLabel contextLabel, CallFrameStack* callStack, Heap* h}
	\State{JSFunctionObject* callee := callStack.cFN.callee}
	\State{PropertySlot p(callee)}
	\State{String str := "prototype"}
	\State{JSValue proto := p.getValue(str)}
	\State{JSObject* obj := new JSObject()}
	\State{obj.structLabel := contextLabel}
	\State{obj.prototype.\_\_proto\_\_ := proto.toObject()}
	\State{obj.prototype.l := proto.toObject().structLabel.join( contextLabel)}
	\State{h.o[++(h.location)] := *obj}
	\State{retState.theta := h}
	\State{retState.val := JSValue::construct(obj)}
	\State \Return {retState}
\EndProcedure
\\ \\
\Procedure {\emph{newObject}}{Heap* h, JSLabel contextLabel}
	\State{JSObject* obj := emptyObject()}
        \State{obj.structLabel := contextLabel}
	\State{obj.prototype.\_\_proto\_\_ := ObjectPrototype::create()}
	\State{obj.prototype.l := contextLabel}
	\State{h.o[++(h.location)] := *obj}
	\State{retState.theta := h}
	\State{retState.val := JSValue::construct(obj)}
	\State \Return {retState}
\EndProcedure
\\ \\
\Procedure {\emph{getPropertyById}}{JSValue v, String p, int dst}
	\State{JSObject* O := v.toObject()}
	\State{JSLabel label := O.structLabel}
	\State{JSValue ret := jsUndefined()}

	\If {O.isUndefined()}
		\State{ret.label := label}
		\State \Return {ret}
	\EndIf
	\While {O $\neq$ \textbf{null}}
		\If {O.containsProperty(p)}
                        \If {p.isGetter()}
                                 \State{JSValue v = p.getValue()}
                                 \State{JSFunctionObject* funcObj =
                                   (JSFunctionObject*) v.toObject()}
                                 \State{CallFrameNode *sigmaTop = new CallFrameNode()}
                                 \State{callStack.push(sigmaTop)}
                                 \State{ScopeChainNode* sc = fObj.scopeChain}
                                 \State{CFG* newCodeBlock = fObj.funcCFG}
                                 \State{callStack.cFN.cfg = *newCodeBlock}
                                 \State{callStack.cFN.returnAddress = ip + 1}
                                 \State{callStack.cFN.sc = sc}
                                 \State{callStack.cFN.getter = true}
                                 \State{callStack.cFN.dReg = dst}
                                 \State{ip = newCodeBlock.cfgNode}
                                 \State{interpreter.iota = ip}
                                 \State{interpreter.sigma = callStack}
                        \Else
                                 \State{ret := getProperty(p).getValue()}
                                 \State{ret.label := label}
                        \EndIf
                        \State \Return {ret}
		\Else
			\State{O := O.prototype.\_\_proto\_\_}
		\EndIf
		\State{label := label.join(label)}
	\EndWhile
\EndProcedure 
\\ \\
\Procedure {\emph{putDirect}}{JSLabel contextLabel, CallFrameStack* callStack, Heap* h,
		int base, String property, int propVal}
	\State{JSValue baseValue := callStack.cFN.rf[base]. value}
	\State{JSValue propValue := callStack.cFN.rf[propVal]. value}
	\State{JSObject* obj := baseValue.toObject()}
	\State{PropertyDescriptor dataPD := PropertyDescriptor::createPD(true, true, true)}
	\State{dataPD.value := propValue}
	\State{obj.setProperty(property, dataPD)}
	\State{obj.structLabel := obj.structLabel.join(contextLabel)}
	\State{h.o[++(h.location)] := *obj}
	\State \Return {h}
\EndProcedure
\\ \\
\Procedure {\emph{putIndirect}}{JSLabel contextLabel, CallFrameStack* callStack, Heap* h,
		int base, String property, int val}
	\State{JSValue baseValue := callStack.cFN.rf[base].value}
	\State{JSValue propValue := callStack.cFN.rf[val].value}
	\State{JSObject* obj := baseValue.toObject()}

	\State{bool isStrict := callStack.cFN.cfg.isStrictMode()}
	\State{contextLabel := obj.structLabel.join(contextLabel)}

	\If {obj.containsPropertyInItself(property)
			\&\& obj. getProperty(property).isDataProperty() \&\& !isStrict
			\&\& obj.isWritable()}
		\State{obj.getProperty(property).setValue(propValue)}
		\State{h.o[++(h.location)] := *obj}
		\State \Return {h}
	\EndIf
	\State \Return {putDirect(contextLabel, callStack, h, base, property, val)}
\EndProcedure
\\ \\
\Procedure {\emph{delById}}{JSLabel contextLabel, CallFrameStack* callStack, Heap* h,
		int base, Identifier property}
	\State{JSValue baseValue := callStack.cFN.rf[base].value}
	\State{JSObject* obj := baseValue.toObject()}
	\State{int loc := h.findObject(obj)}
	\State{Property prop := obj.getProperty(property)}
	\State{PropertyDescriptor pd := prop.getPropertyDescriptor()}

	\If {obj.getPropertyValue(prop).label $\geq$ contextLabel}
		\If {!obj.containsPropertyInItself(property)}
			\State {retState.theta := h}
			\State {retState.val := JSValue::construct(true)}
			\State \Return {retState}
		\EndIf
		\If {obj.containsPropertyInItself(property) \&\&
                  prop. isConfigurable()}
			\If{!(callStack.cFN.cfg.isStrictMode())}
			\State{pd.value := JSValue::constructUndefined()}
			\State{obj.setProperty(property, pd)}
			\State{h.o[loc] := *obj}
			\State{retState.theta := h}
			\State{retState.val := JSValue::construct(true)}
			\State \Return {retState}
                        \EndIf
		\EndIf
		\State{retState.theta := h}
		\State{retState.val := JSValue::construct(false)}
		\State \Return {retState}
	\Else
		\State {\textbf{stop}}
	\EndIf
\EndProcedure
\\ \\
\Procedure {\emph{putGetterSetter}}{JSLabel contextLabel, CallFrameStack* callStack, Heap* h,
		int base, Identifier property, JSValue getterValue, JSValue setterValue}
	\State{JSValue baseValue := callStack.cFN.rf[base].value}
	\State{JSObject* obj := baseValue.toObject()}
	\State{int loc := h.findObject(obj)}
	\State{JSFunctionObject *getterObj, *setterObj}
	\State{JSFunctionObject *getterFuncObj := \textbf{null}, *setterFuncObj := \textbf{null}}
	\If {!getterValue.isUndefined()}
		\State{getterFuncObj := getterValue.toFunctionObject (callStack.cFN.cfg, callStack.cFN.sc)}
	\EndIf
	\If {!setterValue.isUndefined()}
		\State{setterFuncObj := setterValue.toFunctionObject (callStack.cFN.cfg, callStack.cFN.sc)}
	\EndIf

	\If {getterFuncObj $\neq$ \textbf{null}}
		\State{obj.setGetter(property, getterObj)}
        \EndIf
	\If {setterFuncObj $\neq$ \textbf{null}}
		\State{obj.setSetter(setterObj)}
        \EndIf
	\State{PropertyDescriptor accessor := PropertyDescriptor ::createPD(false, false,
			false, true)}
	\State{JSValue v := JSValue::constructUndefined()}
	\State{v.label := contextLabel}
	\State{accessor.value := v}
	\State{obj.setProperty(property, accessor)}
	\State{obj.structLabel := contextLabel}
	\State{h.o[loc] := *obj}
	\State \Return {h}
\EndProcedure
\\ \\
\Procedure {\emph{getPropNames}}{CallFrameStack* callStack, Instruction* ip, int base,
		int i, int size, int breakOffset}
	\State{JSValue baseVal := callStack.cFN.rf[base].value}
	\State{JSObject* obj := baseVal.toObject()}
	\State{PropertyIterator* propItr := obj.getProperties()}

	\If {baseVal.isUndefined() $||$ baseVal.isNull()}
		\State{retState.v1 := jsUndefined()}
		\State{retState.v2 := jsUndefined()}
		\State{retState.v3 := jsUndefined()}
		\State{retState.ip := ip + breakOffset}
		\State \Return {retState}
	\EndIf

	\State{retState.v1 := JSValue::construct(propItr)}
	\State{retState.v2 := JSValue::construct(0)}
	\State{retState.v3 := JSValue::construct(propItr.size())}
	\State{retState.ip := ip + 1}
	\State \Return {retState}
\EndProcedure
\\ \\
\Procedure {\emph{getNextPropName}}{CallFrameStack* cStack,
		Instruction* ip, JSValue base, int i, int size, int iter, int offset,
		int dst}
	\State{JSObject* obj := base.toObject()}
	\State{PropertyIterator* propItr := cStack.cFN.rf[iter].value. toPropertyIterator()}
	\State{int b := rFile[i].value.toInteger()}
	\State{int e := rFile[size].value.toInteger()}
	\While {b $<$ e}
		\State{String key := propItr.get(b)}
		\State{retState.value1 := JSValue::construct(b + 1)}
		\If {!(key.isNull())}
			\State{retState.value2 := JSValue::construct(key)}
			\State{ip := ip + offset}
			\State{\textbf{break}}
		\EndIf
		\State{b++}
	\EndWhile
        \State \Return {retState}
\EndProcedure
\\ \\
\Procedure {\emph{resolveInSc}}{JSLabel contextLabel, ScopeChainNode* scopeHead,
		String property}
	\State{JSValue v}
	\State{JSLabel l}
	\State{ScopeChainNode* scn := scopeHead}

	\While {scn $\neq$ NULL}
		\State{PropertySlot pSlot := scn.getPropertySlot()}
		\If {pSlot.contains(property)}
			\State{v := pSlot.getValue(property)}
			\State{v.label := contextLabel}
			\State \Return {v}
		\EndIf
		\State{scn := scn.next}
		\If {scn.scObjType = VariableObject}
			\State{contextLabel =
                          contextLabel.join(scn.Object. actObj.structLabel)}
		\ElsIf {scn.scObjType = LexicalObject}
			\State{contextLabel =
                          contextLabel.join(scn.Object. obj.structLabel)}
                \EndIf
		\State{contextLabel := contextLabel.join(scn. scopeNextLabel)}
	\EndWhile

	\State{v := jsUndefined()}
	\State{v.label := contextLabel}
	\State \Return {v}
\EndProcedure
\\ \\
\Procedure {\emph{resolveInScWithSkip}}{JSLabel contextLabel, ScopeChainNode* scopeHead,
		String property, int skip}
	\State{JSValue v}
	\State{JSLabel l}
	\State{ScopeChainNode* scn := scopeHead}

	\While {skip$-$$-$}
		\State{scn := scn.next}
		\If {scn.scObjType = VariableObject}
			\State{contextLabel := contextLabel.join(
                          scn.Object. actObj.structLabel)}
		\ElsIf {scn.scObjType = LexicalObject}
			\State{contextLabel :=
                          contextLabel.join(scn.Object. obj.structLabel)}
		\EndIf
                \State{contextLabel := contextLabel.join(scn. scopeNextLabel)}
	\EndWhile

	\While {scn $\neq$ \textbf{null}}
		\State{PropertySlot pSlot := scn.getPropertySlot()}
		\If {pSlot.contains(property)}
			\State{v := pSlot.getValue(property)}
			\State{v.label := contextLabel}
			\State \Return {v}
		\EndIf
		\State{scn := scn.next}
		\If {scn.scObjType = VariableObject}
			\State{contextLabel :=
                          contextLabel.join(scn.Object. actObj.structLabel)}
		\ElsIf {scn.scObjType = LexicalObject}
			\State{contextLabel :=
                          contextLabel.join(scn.Object. obj.structLabel)}
		\EndIf
		\State{contextLabel := contextLabel.join(scn. scopeNextLabel)}
	\EndWhile 

	\State{v := jsUndefined()}
	\State{v.label := contextLabel}
	\State \Return {v}
\EndProcedure
\\ \\
\Procedure {\emph{resolveGlobal}}{JSLabel contextLabel, 
		CallFrameStack* cStack, String property}
	\State{JSValue v}
	\State{struct CFG* cBlock := cStack.cFN.cfg}
	\State{JSGlobalObject* globalObject := cBlock. getGlobalObject()}
	\State{PropertySlot pSlot(globalObject)}
	\If {pSlot.contains(property)}
		\State{v := pSlot.getValue(property)}
		\State{v.label := contextLabel}
		\State \Return {v}
	\EndIf	
        \State{v := jsUndefined()}
	\State{v.label := contextLabel}
	\State \Return {v}
\EndProcedure
\\ \\
\Procedure {\emph{resolveBase}}{JSLabel contextLabel, 
		CallFrameStack* cStack, ScopeChainNode* scopeHead, String property,
		bool strict}
	\State{JSValue v}

	\State{ScopeChainNode* scn := scopeHead}
	\State{CFG *cBlock := cStack.cFN.cfg}
	\State{JSGlobalObject *gObject := cBlock.globalObject}

	\While {scn $\neq$ \textbf{null}}
		\State{JSObject* obj := scn.get()}
		\State{contextLabel := obj.structLabel.join(contextLabel)}
		\State{PropertySlot pSlot(obj)}
		\If {scn.next = \textbf{null} \&\& strict \&\&
                  !pSlot.contains (property)}
			\State{v := emptyJSValue()}
			\State{v.label := contextLabel}
			\State \Return {v}
		\EndIf
		\If {pSlot.contains(property)}
			\State{v := JSValueContainingObject(obj)}
			\State{v.label := contextLabel}
			\State \Return {v}
		\EndIf
		\State{scn := scn.next}
		\If {scn $\neq$ \textbf{null}}
			\State{contextLabel :=
                          contextLabel.join(scn. scopeNextLabel)}
                \EndIf
	 \EndWhile

	\State{v := JSValue::construct(gObject)}
	\State{v.label := contextLabel}
	\State \Return {v}
\EndProcedure
\\ \\ 
\Procedure {\emph{resolveBaseAndProperty}}{JSLabel contextLabel, 
		CallFrameStack cStack, int bRegister, int pRegister, String property}
	\State{JSValue v}
	\State{ScopeChainNode* scn := cStack.cFN.sc}

	\While {scn $\neq$ \textbf{null}}
		\State{JSObject* obj := scn.get()}
		\State{contextLabel := obj.structLabel.join(contextLabel)}
		\State{PropertySlot pSlot(obj)}
		\If {pSlot.contains(property)}
			\State{v := pSlot.getValue(property)}
			\State{v.label := contextLabel}
			\State{ret.val1 := v}
			\State{v := JSValueContainingObject(obj)}
			\State{v.label := contextLabel}
			\State{ret.val2 := v}
			\State \Return {ret}
		\EndIf
		\State{scn := scn.next}
		\If {scn $\neq$ \textbf{null}}
			\State{contextLabel :=
                          contextLabel.join(scn. scopeNextLabel)}
                \EndIf
	\EndWhile
\EndProcedure
\\ \\
\Procedure {\emph{getScopedVar}}{JSLabel contextLabel, CallFrameStack* callStack, Heap* h,
		int index, int skip}
	\State{JSValue v}
	\State{ScopeChainNode* scn := callStack.cFN.sc}

	\While {skip$-$$-$}
		\If {scn.scObjType = VariableObject}
			\State{contextLabel :=
                          contextLabel.join(scn.Object.actObj.
                          structLabel)}
		\ElsIf{scn.scObjType = LexicalObject}
			\State{contextLabel :=
                          contextLabel.join(scn.Object.obj.
                          structLabel)}
                \EndIf
		\State{contextLabel :=
                  contextLabel.join(scn. scopeLabel)}
		\State{scn := scn.next}
	\EndWhile

	\State{v := scn.registerAt(index)}
	\If {scn.scObjType = VariableObject}
		\State{v.label :=
                  contextLabel.join(scn.Object.actObj. structLabel)}
	\ElsIf {scn.scObjType = LexicalObject}
		\State{v.label := contextLabel.join(scn.Object.obj. structLabel)}
        \EndIf
	\State \Return {v}
\EndProcedure
\\ \\
\Procedure {\emph{putScopedVar}}{JSLabel contextLabel, CallFrameStack* callStack,
		Heap* h, int index, int skip, int value}
	\State {CallFrameStack* cStack}
	\State {ScopeChainNode* scn := callStack.cFN.sc}
	\State {JSValue val := callStack.cFN.rf[value].value}
	\While {skip$-$$-$}
		\If {scn.scObjType = VariableObject}
			\State{contextLabel :=
                          contextLabel.join(scn.Object. actObj.structLabel)}
		\ElsIf {scn.scObjType = LexicalObject}
			\State{contextLabel :=
                          contextLabel.join(scn.Object. obj.structLabel)}
		\EndIf
		\State{contextLabel :=
                  contextLabel.join(scn. scopeLabel)}
		\State{scn := scn.next}
	\EndWhile
	\State{cStack := scn.setRegisterAt(contextLabel, index, val)}
	\State \Return {cStack}
\EndProcedure
\\ \\
\Procedure {\emph{pushScope}}{JSLabel contextLabel, CallFrameStack* callStack,
		Heap* h, int scope}
	\State{ScopeChainNode* sc := callStack.cFN.sc}
	\State {JSValue v := callStack.cFN.rf[scope].value}
	\State {JSObject* o := v.toObject()}
	\State {SChainObject* scObj}
        \If {sc.scopeLabel $\geq$ contextLabel}
	\State {scObj.obj := o}
	\State {sc.push(scObj, LexicalObject, contextLabel)}
	\State {callStack.cFN.sc := sc}
        \ElsIf {sc.scopeLabel = star}
	\State {scObj.obj := o}
	\State {sc.push(scObj, LexicalObject, star)}
	\State {callStack.cFN.sc := sc}
        \EndIf
	\State \Return {callStack}
\EndProcedure
\\ \\
\Procedure {\emph{popScope}}{JSLabel contextLabel, CallFrameStack* callStack,
		Heap* h}
	\State {ScopeChainNode* sc := callStack.cFN.sc}
	\State {JSLabel l := sc.scopeLabel}
	\If {l $\geq$ contextLabel}
		\State {sc.pop()}
		\State {callStack.cFN.sc := sc}
	\Else
		\State {\textbf{stop}}
        \EndIf
	\State \Return {callStack}
\EndProcedure
\\ \\
\Procedure {\emph{jmpScope}}{JSLabel contextLabel, CallFrameStack* callStack,
		Heap* h, int count}
	\State {ScopeChainNode* sc := callStack.cFN.sc}
        \While {count$-$$-$ $>$ 0}
	\State {JSLabel l := sc.scopeLabel}
	\If {l $\geq$ contextLabel}
		\State {sc.pop()}
		\State {callStack.cFN.sc := sc}
	\Else
		\State {\textbf{stop}}
        \EndIf
        \EndWhile
	\State \Return {callStack}
\EndProcedure
\\ \\
\Procedure {\emph{throwException}}{CallFrameStack* callStack, CFGNode* iota}
 	\State {CFGNode* handler}
 	\While {callStack.cFN.hasHandler()==false}
 		\State{callStack.pop()}
        \EndWhile
 	\While {callStack.cFN.sc.length() - callStack.cFN.getHandlerScopeDepth()}
 		\State{callStack.cFN.sc.pop()}
        \EndWhile
 	\State{handler := callStack.cFN.getHandler(iota)}
 	\State{interpreter.iota := handler}
 	\State{interpreter.sigma := callStack}
\EndProcedure
\end{algorithmic}
%\end{multicols}
\normalsize

\subsection{Semantics}\label{sec:app-sem}
\scriptsize \begin{equation*}
\inference[prim:]
{\iota = \mbox{``prim dst:r src1:r src2:r''} \qquad \\
\mL := \Gamma(!\sigma(\src1)) \sqcup \Gamma(!\sigma(\src2)) \sqcup
\Gamma(!\rho)\qquad\mV := \Upsilon(!\sigma(\src1)) \oplus \Upsilon(!\sigma(\src2)) \\
(\Gamma(!\sigma(\dst)) \geq \Gamma(!\rho)) \Rightarrow (\mL :=
\mL)  \diamond (\mL := \star) \\
\sigma' := \sigma\Big[ \,^{\Upsilon(!\sigma(\dst)) := \mV}_{\Gamma(!\sigma(\dst)) :=\mL}
\Big]\qquad\iota' := \Succ(!\sigma'.\CFG, \iota)\qquad ~\rho' := \isIPD(\iota',\rho, \sigma')\\}
{\iota,  \theta, \sigma, \rho ~\leadsto~ \iota',  \theta,
\sigma', \rho'}
\end{equation*} \normalsize
%%%
\emph{prim} reads the values from two registers ($\src1$ and
  $\src2$), performs the binary operation generically denoted by
  $\oplus$, and writes the result into the $\dst$
  register. The label assigned to the value in $\dst$ register is the
  join of the label of value in $\src1$, $\src2$ and the head of the
  pc-stack ($!\rho$). In order to avoid implicit leak of information,
  the label of the existing value in $\dst$ is compared with the
  current context label. If the label is lower than the context label,
  the label of the value in $\dst$ is set to $\star$. 

%%%    MOV
\scriptsize \begin{align*}
\inference[mov:]
{\iota = \mbox{``mov dst:r src:r''}\qquad  \\
 \mL := \Gamma(!\sigma(\src)) \sqcup \Gamma(!\rho)\qquad~\mV=\Upsilon(!\sigma(\src)) \\
(\Gamma(!\sigma(\dst)) \geq \Gamma(!\rho)) \Rightarrow (\mL :=
\mL)  \diamond (\mL := \star) \\
 \sigma' :=\sigma\Big[ \,^{\Upsilon(!\sigma(\dst)) := \mV}_{\Gamma(!\sigma(\dst)) :=\mL}\Big]\qquad~
\iota' := \Succ(!\sigma'.\CFG,\iota)\qquad ~\rho' := \isIPD(\iota',\rho, \sigma')}
{\iota,  \theta, \sigma, \rho ~\leadsto~ \iota',  \theta,
\sigma', \rho'}
\end{align*} \normalsize 
%%% 
\emph{mov} copies the value from the $\src$ register to the $\dst$
  register. The label assigned to the value in $\dst$ register is the
  join of the label of value in $\src$ and the head of the
  pc-stack ($!\rho$). In order to avoid implicit leak of information,
  the label of the existing value in $\dst$ is compared with the
  current context label. If the label is lower than the context label,
  the label of the value in $\dst$ is joined with $\star$.

%%%    JFALSE
\scriptsize \begin{align*}
\inference[jfalse:]
{\iota = \mbox{``jfalse cond:r target:offset''}\qquad \\
\Gamma(!\sigma(\cond)) \neq \star\qquad~
\rho'' := \rho.\push(\Gamma(!\sigma(\cond)) \sqcup \Gamma(!\rho),
\IPD(\iota), \cf(\iota),\false)\qquad\\
\Upsilon(!\sigma(\cond)) = \false \Rightarrow \iota' := \Left(!\sigma.\CFG,\iota) \diamond \iota'
:= \Right(!\sigma.\CFG,\iota)\\~\rho' := \isIPD(\iota',\rho'', \sigma)}
{\iota,  \theta, \sigma, \rho ~\leadsto~ \iota',  \theta, \sigma, \rho'
}
\end{align*} \normalsize 
%%% 
\emph{jfalse} is a branching instruction. Based on the
  value in the $\cond$ register, it decides which branch to take. The
  operation is performed only if the value in $\cond$ is not labelled
  with a $\star$. If it contains a $\star$, we terminate the execution to
  prevent possible leak of information.
  The push function defined in the rule does the following: 
  A node is pushed on the top of the pc-stack containing the IPD of
  the branching instruction and the label of the value in $\cond$
  joined with the context, to define the context of this branch. If
  the IPD of the instruction is SEN or the same as the top of the
  pc-stack, then we just join the label on top of the pc-stack with
  the context label determined by the $\cond$ register. 

%%%    LOOP-IF-LESS
\scriptsize \begin{align*}
\inference[loop-if-less:]
{\iota = \mbox{``loop-if-less src1:r src2:r target:offset''}\qquad\\
\Gamma(!\sigma(\src1)) \neq \star\qquad~\Gamma(!\sigma(\src2)) \neq \star\qquad~
\mL := \Gamma(!\sigma(\src1)) \sqcup
\Gamma(!\sigma(\src2)) \sqcup \Gamma(!\rho)\qquad\\
\Upsilon(!\sigma(\src1)) <~\Upsilon(!\sigma(\src2)) \Rightarrow \iota' :=
\Left(!\sigma.\CFG, \iota) \diamond \iota' := \Right(!\sigma.\CFG, \iota)\qquad\\
\rho'' := \rho.\push(\mL,
\IPD(\iota), \cf(\iota),\false)\qquad~\rho' := \isIPD(\iota',\rho'', \sigma)
}
{\iota,  \theta, \sigma, \rho ~\leadsto~ \iota',  \theta,
\sigma, \rho'}
\end{align*} \normalsize 
%%%
\emph{loop-if-less} is another branching instruction. If the value of
$\src1$ is less than $\src2$, then it jumps to the \emph{target}, else continues
with the next instruction. The operation is performed only if the
values  in $\src1$ and $\src2$ are not labelled with a $\star$. If any
one of them contains a $\star$, we abort the execution to
prevent possible leak of information.   
The push function defined in the rule does the following: 
A node is pushed on the top of the pc-stack containing the IPD of the
branching instruction and the join of the label of the values in $\src1$ and $\src2$
joined with the context, to define the context of this branch. If
  the IPD of the instruction is SEN or the same as the top of the
  pc-stack, then we just join the label on top of the pc-stack with
  the context label determined above.

%%%    TYPEOF
\scriptsize \begin{align*}
\inference[typeof:]
{\iota = \mbox{``typeof dst:r src:r''} \\
\mL:=(\Gamma(\src) \sqcup \Gamma(!\rho))\qquad\mV:=\mathit{determineType}(!\sigma(\src)) \\
(\Gamma(!\sigma(\dst)) \geq \Gamma(!\rho)) \Rightarrow (\mL :=
\mL)  \diamond (\mL := \star) \\
\sigma' :=
\sigma\Big[ \,^{\Upsilon(!\sigma(\dst)) := \mV}_{\Gamma(!\sigma(\dst)) :=\mL}
\Big]\qquad\iota' := \Succ(!\sigma'.\CFG,\iota)\qquad\rho' := \isIPD(\iota',\rho, \sigma')}
{\iota,  \theta, \sigma, \rho ~\leadsto~ \iota',  \theta,
\sigma', \rho'}
\end{align*} \normalsize 
%%% 
\emph{typeof} determines the type string for $\src$ according to
ECMAScript rules, and puts the result in register $\dst$. We do a
deferred NSU check on $\dst$ before writing the
result in it. The \emph{determineType} function returns the data type
of the value passed as the parameter.

%%%    INSTANCEOF
\scriptsize \begin{align*}
\inference[instanceof:]
{\iota = \mbox{``instanceof dst:r value:r cProt:r''} \\
v:=\mathit{isInstanceOf}(\Gamma(!\rho),!\sigma(\mathit{value),
  cProt})\qquad\mL = \Gamma(v)\qquad \mV = \Upsilon(v),\\
(\Gamma(!\sigma(\dst)) \geq \Gamma(!\rho)) \Rightarrow
(\mL := \mL)  \diamond (\mL := \star), \\
\sigma' :=
\sigma\Big[ \,^{\Upsilon(!\sigma(\dst)) := \mV}_{\Gamma(!\sigma(\dst)) :=\mL}
\Big]\qquad\iota' := \Succ(!\sigma'.\CFG,\iota) \qquad\rho' := \isIPD(\iota',\rho, \sigma')}
{\iota,  \theta, \sigma, \rho ~\leadsto~ \iota',  \theta,
\sigma', \rho'}
\end{align*} \normalsize 
%%%
\emph{instanceof} tests whether the \emph{cProt} is in the prototype chain of
the object in register \emph{value} and puts the Boolean result in the
$\dst$ register after deferred NSU check. 
% Register \emph{cProt} must contain the ``prototype''
% property of the object in register \emph{cons}. 
%Raises an exception if register constructor is not an object.

%\subsection{Functions}
%%%    ENTER
\scriptsize \begin{align*}
\inference[enter:]
{\iota = \mbox{``enter''}\qquad 
%\sigma' :=  \mathit{opEnter}(\sigma)\qquad
\iota' := \Succ(!\sigma.\CFG, \iota)\qquad\rho' := \isIPD(\iota',\rho, \sigma)}
{\iota,  \theta, \sigma, \rho ~\leadsto~ \iota',  \theta,
\sigma, \rho'}
\end{align*} \normalsize 
%%%
\emph{enter} marks the beginning of a code block. 

%%%    RET
\scriptsize \begin{align*}
\inference[ret:]
{\iota = \mbox{``ret res:r''}\qquad
(\iota',\sigma',\gamma) := \mathit{opRet(\sigma, res)}\qquad\rho' := \isIPD(\iota',\rho, \sigma')}
{\iota,  \theta, \sigma, \rho ~\leadsto~ \iota',  \theta,
\sigma', \rho'}
\end{align*} \normalsize 
%%%
\emph{ret} is the last instruction to be executed in a
  function. It pops the call-frame and returns the control to the
  callee's call-frame. The return value of the function is written to a
  local variable in the interpreter ($\gamma$), which can be read by
  the next instruction being executed.

%%%    END
\scriptsize \begin{align*}
\inference[end:]
{\iota = \mbox{``end res:r''}\qquad
\mathit{opEnd(\sigma, res)}}
{\iota,  \theta, \sigma, \rho ~\leadsto~ -}
\end{align*} \normalsize 
%%%
\emph{end} marks the end of a program. \emph{opEnd} passes the value present in
\emph{res} register to the caller (the native function that invoked
the interpreter). 

%%%    CALL
\scriptsize \begin{align*}
\inference[call:]
{\iota = \mbox{``op-call func:r args:n''} \\
\Gamma(\mathit{func}) \neq \star\qquad
(\iota', \sigma', \mH, \ell_f):= \mathit{opCall}(\sigma, \iota, \mathit{func},
\mathit{args})\qquad\\
\rho'' := \rho.\push(\ell_f \sqcup \Gamma(!\sigma(\mathit{func}))
\sqcup \Gamma(!\rho),\IPD(\iota), \cf(\iota), \mH)\qquad
\rho' := \isIPD(\iota',\rho'', \sigma')}
{\iota,  \theta, \sigma, \rho ~\leadsto~ \iota', \theta, \sigma',
\rho'}
\end{align*} \normalsize 
%%%
\emph{call}, initially, checks the function object's label for
  $\star$ and if the label contains a $\star$, the program execution
  is aborted. The reason for termination is the
  possible leak of information as explained above. If not,
  call creates a new call-frame, copies the arguments,
  initializes the registers, scope-chain pointer, codeblock and the
  return address. 
  The registers are initialized to \emph{undefined}
  and assigned a label obtained by joining the label of the context in
  which the function was created and the label of the function object
  itself. We treat \emph{call} as a branching instruction and hence, push
  a new node on the top of the pc-stack with the label determined
  above along with its IPD and call-frame. The
  field $\mH$ in the push function is determined by looking up the
  exception table. If it contains an associated exception handler, it
  sets the field to $\true$ else it is set to $\false$. 
  If the IPD is the SEN then we just join the label
  on the top of the stack with the currently calculated label. It then points
  the instruction pointer to the first instruction of the new code block.

%%%    CALL-PUT-RESULT
\scriptsize \begin{align*}
\inference[call-put-result:]
{\iota = \mbox{``call-put-result res:r''} \\
\mL := \Gamma(\gamma) \sqcup \Gamma(!\rho)\qquad\mV := \Upsilon(\gamma) \\
(\Gamma(!\sigma(\mathit{res})) \geq \Gamma(!\rho)) \Rightarrow \mL := \mL
\diamond \mL := \star \\
\sigma' :=
\sigma\Big[ \,^{\Upsilon(!\sigma(\mathit{res})) := \mV}_{\Gamma(!\sigma(\mathit{res})) :=\mL}
\Big]\qquad\iota' := \Succ(!\sigma'.\CFG,\iota)\qquad\rho' := \isIPD(\iota',\rho, \sigma') }
{\iota,  \theta, \sigma, \rho ~\leadsto~ \iota',  \theta,
\sigma', \rho'}
\end{align*} \normalsize 
%%% 
\emph{call-put-result} copies the return value $\gamma$ to the \emph{res}
  register. The label assigned to the value in \emph{res} register is the
  join of the label of the return value and the head of the
  pc-stack. In order to avoid implicit leak of information,
  \emph{deferred no-sensitive-upgrade} is performed.

%%%    CALL-EVAL
\scriptsize \begin{align*}
\inference[call-eval:]
{\iota = \mbox{``call-eval func:r args:n''} \\
\Gamma(!\sigma(\mathit{func})) \neq \star\qquad
(\iota', \sigma', \mH, \ell_f) := \mathit{opCallEval(\Gamma(!\rho), \sigma, \iota,
  func, args)}\\
\rho'' := \rho.\push(\ell_f \sqcup \Gamma(!\sigma(\mathit{func})) \sqcup
\Gamma(!\rho),\IPD(\iota), \cf(\iota), \mH)
\qquad\rho' := \isIPD(\iota',\rho'', \sigma')}
{\iota,  \theta, \sigma, \rho ~\leadsto~ \iota', \theta,
\sigma', \rho'}
\end{align*} \normalsize 
%%%
\emph{call-eval} calls a function with the string passed as an
argument converted to a code block. If \emph{func} register contains
the original global eval function, then it is performed in local
scope, else it is similar to \emph{call}.

%%%    CREATE-ARGUMENTS
\scriptsize \begin{align*}
\inference[create-arguments:]
{\iota = \mbox{``create-arguments dst:r''} \\
 (\theta',v) := \mathit{createArguments(\theta, \sigma)}\qquad 
\mL := \Gamma(!\rho)\qquad\mV := \Upsilon(v)\\
(\Gamma(!\sigma(\dst)) \geq \Gamma(!\rho)) \Rightarrow \mL := \mL
\diamond \mL := \star \\
\sigma' :=
\sigma\Big[ \,^{\Upsilon(!\sigma(\dst)) := \mV}_{\Gamma(!\sigma(\dst)) :=\mL}
\Big]\qquad\iota' := \Succ(!\sigma'.\CFG,\iota)\qquad\rho' := \isIPD(\iota',\rho, \sigma')}
{\iota,  \theta, \sigma, \rho ~\leadsto~ \iota',  \theta',
\sigma', \rho'}
\end{align*} \normalsize 
%%% 
\emph{create-arguments} creates the arguments object and places its
pointer in the local $\dst$ register after the deferred NSU check. The
label of the arguments object is set to the context. 

%%%    NEW-FUNC
\scriptsize \begin{align*}
\inference[new-func:]
{\iota = \mbox{``new-func dst:r funcIndex:f''} \\
(\theta',v) := \mathit{newFunc(\sigma, \theta, funcIndex, \Gamma(!\rho))}\qquad\mL :=
\Gamma(v) \sqcup \Gamma(!\rho)\qquad
\mV := \Upsilon(v) \\
(\Gamma(!\sigma(\dst)) \geq \Gamma(!\rho)) \Rightarrow \mL := \mL
\diamond \mL := \star \\
\sigma' :=
\sigma\Big[ \,^{\Upsilon(!\sigma(\dst)) := \mV}_{\Gamma(!\sigma(\dst)) :=\mL}
\Big]\qquad\iota' := \Succ(!\sigma'.\CFG,\iota)\qquad\rho' := \isIPD(\iota',\rho, \sigma') }
{\iota,  \theta, \sigma, \rho ~\leadsto~ \iota',  \theta',
\sigma', \rho'}
\end{align*} \normalsize 
%%% 
\emph{new-func} constructs a new function instance from function at
\emph{funcIndex} and the current scope chain and puts the result in
$\dst$ after deferred NSU check. 

% %%%    NEW-FUNC-EXPR
% \scriptsize \begin{align*}
% \inference[new-func-expr:]
% {\iota = \mbox{``new-func-expr dst:r funcIndex:f"}, \\
% (\theta',v) := \mathit{newFuncExpr(\sigma, \theta, funcIndex)}\qquad\mL :=
% \Gamma(v) \sqcup \Gamma(!\rho)\qquad
% \mV := \Upsilon(v) \\
% (\Gamma(!\sigma(\dst)) \geq \Gamma(!\rho)) \Rightarrow \mL := \mL
% \diamond \mL := \star \\
% \sigma' :=
% \sigma\Big[ \,^{\Upsilon(!\sigma(\dst)) := \mV}_{\Gamma(!\sigma(\dst)) :=\mL}
% \Big], \iota' := \Succ(!\sigma.\CFG, \iota), ~\rho' := \isIPD(\iota',\rho)}
% {\iota,  \theta, \sigma, \rho ~\leadsto~ \iota',  \theta',
% \sigma', \rho}
% \end{align*} \normalsize 

%%%    CREATE-ACTIVATION
\scriptsize \begin{align*}
\inference[create-activation:]
{\iota = \mbox{``create-activation dst:r''}\\
(\sigma',v) := \mathit{createActivation(\sigma, \Gamma(!\rho))}\qquad\mL :=
\Gamma(v) \sqcup \Gamma(!\rho)\qquad
\mV := \Upsilon(v) \\
(\Gamma(!\sigma(\dst)) \geq \Gamma(!\rho)) \Rightarrow \mL := \mL
\diamond \mL := \star \\
\sigma' :=
\sigma\Big[ \,^{\Upsilon(!\sigma''(\dst)) := \mV}_{\Gamma(!\sigma''(\dst)) :=\mL}
\Big]\qquad\iota' := \Succ(!\sigma'.\CFG,\iota)\qquad\rho' := \isIPD(\iota',\rho, \sigma')}
{\iota,  \theta, \sigma, \rho ~\leadsto~ \iota',  \theta,
\sigma', \rho'}
\end{align*} \normalsize 
%%% 
\emph{create-activation} creates the activation object for the current
call-frame if it has not been already created and writes it to the
$\dst$ after the deferred NSU check and pushes the object in the
scope-chain. If the label of the head of the existing 
scope-chain is less than the context, then the label of the pushed
node is set to $\star$, else it is set to the context.

%%%    CONSTRUCT
\scriptsize \begin{align*}
\inference[construct:]
{\iota = \mbox{``construct func:r args:n''} \\
\Gamma(!\sigma(\mathit{func})) \neq \star\qquad
(\iota', \sigma', \mH, \ell_f) := \mathit{opCall(\sigma, \iota, func, args)}\\
\rho'' := \rho.\push(\ell_f \sqcup \Gamma(!\sigma'(\mathit{func}))
\sqcup \Gamma(!\rho),\IPD(\iota), \cf(\iota), \mH)\qquad
\rho' := \isIPD(\iota',\rho'', \sigma')}
{\iota,  \theta, \sigma, \rho ~\leadsto~ \iota', \theta,
\sigma', \rho'}
\end{align*} \normalsize 
%%% 
\emph{construct} invokes register \emph{func} as a constructor and is
similar to \emph{call}. For JavaScript functions, the \emph{this}
object being passed (the first argument in the list of arguments) is a
new object. For host constructors, no
\emph{this} is passed.

%%%    CREATE-THIS
\scriptsize \begin{align*}
\inference[create-this:]
{\iota = \mbox{``create-this dst:r''} \\
(\theta',v) := \mathit{createThis(\Gamma(!\rho), \sigma, \theta)}\qquad\mL :=
\Gamma(v) \sqcup \Gamma(!\rho)\qquad
\mV := \Upsilon(v) \\
(\Gamma(!\sigma(\dst)) \geq \Gamma(!\rho)) \Rightarrow \mL := \mL
\diamond \mL := \star \\
\sigma' :=
\sigma\Big[ \,^{\Upsilon(!\sigma(\dst)) := \mV}_{\Gamma(!\sigma(\dst)) :=\mL}
\Big]\qquad\iota' := \Succ(!\sigma'.\CFG,\iota)\qquad\rho' := \isIPD(\iota',\rho, \sigma')}
{\iota,  \theta, \sigma, \rho ~\leadsto~ \iota',  \theta',
\sigma', \rho'}
\end{align*} \normalsize 
%%% 
\emph{create-this} creates and allocates an object as \emph{this} used for
construction later in the function. The object is labelled the context
and placed in $\dst$ after deferred NSU check. The prototype chain
pointer is also labelled with the context label.

%\subsection{Object and Prototype chain lookup}
%%%    NEW-OBJECT
\scriptsize \begin{align*}
\inference[new-object:]
{\iota = \mbox{``new-object dst:r''}\\
(\theta',v) :=  \mathit{newObject}(\theta, \Gamma(!\rho))\qquad 
\mL := \Gamma(v) \sqcup \Gamma(!\rho)\qquad\mV := \Upsilon(v) \\
(\Gamma(!\sigma(\dst)) \geq \Gamma(!\rho)) \Rightarrow (\mL :=
\mL)  \diamond (\mL := \star) \\
\sigma' :=
\sigma\Big[ \,^{\Upsilon(!\sigma(\dst)) := \mV}_{\Gamma(!\sigma(\dst)) :=\mL}
\Big]\qquad\iota' := \Succ(!\sigma'.\CFG,\iota)\qquad\rho' := \isIPD(\iota',\rho, \sigma')}
{\iota,  \theta, \sigma, \rho ~\leadsto~ \iota',  \theta',
\sigma', \rho'}
\end{align*} \normalsize 
%%% 
\emph{new-object} constructs a new empty object instance and puts it
in $\dst$ after deferred NSU check. The object is labelled with the
context label and the prototype chain pointer is also labelled with
the context. 

%%%    GET-BY-ID
\scriptsize \begin{align*}
\inference[get-by-id:]
{\iota = \mbox{``get-by-id dst:r base:r prop:id vdst:r''}\\
v := \mathit{getPropertyById(!\sigma(base), prop, vdst)}\qquad 
\mL := \Gamma(v) \sqcup \Gamma(!\rho)\qquad \mV := \Upsilon(v) \\
(\Gamma(!\sigma(\dst)) \geq \Gamma(!\rho)) \Rightarrow (\mL :=
\mL)  \diamond (\mL := \star) \\
\sigma' :=
\sigma\Big[ \,^{\Upsilon(!\sigma(\dst)) := \mV}_{\Gamma(!\sigma(\dst)) :=\mL}
\Big]\qquad \iota' := \Succ(!\sigma'.\CFG,\iota)\qquad\rho' := \isIPD(\iota',\rho, \sigma')
}
{\iota,  \theta, \sigma, \rho ~\leadsto~ \iota',  \theta,
\sigma', \rho'}
\end{align*} \normalsize 
%%% 
\emph{get-by-id} gets the property named by the identifier
\emph{prop} from the object in the \emph{base} register and puts it into the
$\dst$ register after the deferred NSU check. If the object does not 
contain the property, it looks
up the prototype chain to determine if any of the proto objects
contain the property. When traversing the prototype chain, the context
is joined with the structure label of all the objects and the
prototype chain pointer labels until the property is found or
the end of the chain. It then joins the property label to the
context. If the property is not found, it returns
\emph{undefined}. The joined label of the context is the label of the
property put in the $\dst$ register.

If the property is an accessor property, it calls the getter
function, sets the getter flag in the call-frame and updates the
destination register field with the register where the value is to be
inserted. It then transfers the control to the first instruction in
the getter function.

%%%    PUT-BY-ID
\scriptsize \begin{align*}
\inference[put-by-id:]
{\iota = \mbox{``put-by-id base:r prop:id value:r direct:b''} \\
\Gamma(!\sigma(\val)) \neq \star \\
(\mathit{direct} = \true \Rightarrow \theta' := \mathit{putDirect}(\Gamma(!\rho) ,\sigma,
\theta, \mathit{base, prop, value)}~\diamond \\
\theta' := \mathit{putIndirect}(\Gamma(!\rho), \sigma, \theta, \mathit{base, prop, value}))\\
\iota' := \Succ(!\sigma.\CFG, \iota)\qquad\rho' := \isIPD(\iota',\rho, \sigma)}
{\iota,  \theta, \sigma, \rho ~\leadsto~ \iota',  \theta',
\sigma, \rho'}
\end{align*} \normalsize 
%%%
\emph{put-by-id} writes into the heap the property of an object. We check for $\star$
  in the label of \emph{value} register. If it contains a $\star$, the
  program aborts as this could potentially result in an implicit information flow. If
  not, it writes the property into the object. The basic
  functionality is to search for the property in the object and its
  prototype chain, and change it. If the property is not found, a new
  property for the current object with the property label as the
  context is created. Based on whether the property is in the object
  itself (or needs to be created
  in the object itself) or in the prototype chain of the object, it
  calls \emph{putDirect} and \emph{putIndirect}, respectively.

%%%    DEL-BY-ID
\scriptsize \begin{align*}
\inference[del-by-id:]
{\iota = \mbox{``del-by-id dst:r base:r prop:id''}\\
\Gamma(!\sigma(\base)) \neq \star\qquad 
(\theta',v) := \mathit{delById}(\Gamma(!\rho), \sigma, \theta, \mathit{base,
  prop)}\qquad
\mL := \Gamma(v) \sqcup \Gamma(!\rho)\\
\mV := \Upsilon(v)\qquad
(\Gamma(!\sigma(\dst)) \geq \Gamma(!\rho)) \Rightarrow (\mL :=
\mL)  \diamond (\mL := \star) \\
\sigma' :=
\sigma\Big[ \,^{\Upsilon(!\sigma(\dst)) := \mV}_{\Gamma(!\sigma(\dst)) :=\mL}
\Big]\qquad\iota' := \Succ(!\sigma'.\CFG,\iota)\qquad\rho' := \isIPD(\iota',\rho, \sigma')}
{\iota,  \theta, \sigma, \rho ~\leadsto~ \iota',  \theta',
\sigma', \rho'}
\end{align*} \normalsize 
%%% 
\emph{del-by-id} deletes the property specified by \emph{prop} in
the object contained in \emph{base}. If the structure label of the
object is less than  the context, the deletion does not happen. If the
property is found, the property is deleted and Boolean value
\emph{true} is written to $\dst$, else it writes \emph{false} to
$\dst$. The label of the Boolean value is the structure label of the
object  joined with the property label. 

%%   GETTER-SETTER
\scriptsize \begin{align*}
\inference[getter-setter:]
{\iota = \mbox{``put-getter-setter base:r prop:id getter:r
    setter:r''},  \\
\star \notin \Gamma(!\sigma(getter)),
~~\star \notin \Gamma(!\sigma(setter)),\\
\theta':= \mathit{putGetterSetter(\Gamma(!\rho), \sigma, base, prop, !\sigma(getter), !\sigma(setter))}, \\
\iota' := \Succ(!\sigma.\CFG, \iota), ~\rho' := \isIPD(\iota',\rho)}
{\iota,  \theta, \sigma, \rho ~\leadsto~ \iota',  \theta', \sigma,
\rho'}
\end{align*} \normalsize 
%%% 
\emph{getter-setter} puts the accessor descriptor to the object in
register \emph{base}. It initially checks if the structure label of
the object is greater or equal to the context. The property for  which
the accessor properties are added is given in the register
\emph{prop}. The property label
of the accessor functions is set to the
context. \emph{putGetterSetter} calls \emph{putIndirect} internally
and sets the getter/setter property of the object with the specified value.

%%%    GET-PNAMES
\scriptsize \begin{align*}
\inference[get-pnames:]
{\iota = \mbox{``get-pnames dst:r base:r i:r size:r breakTarget:offset''} \\
\Gamma\mathit{(!\sigma(base))} \neq \star\qquad
(v_1,v_2, v_3, \iota') := \mathit{getPropNames(\sigma,\iota, base, i, size,
breakTarget)}\\
\mL_n := \Gamma(!\sigma(\mathit{base})) \sqcup \Gamma(v_n) \sqcup \Gamma(!\rho)\qquad\mV_n := \Upsilon(v_n),n
:= 1, 2, 3 \\
%\mL2 := \Gamma(v_2) \sqcup \Gamma(!\rho)\qquad\mV2 := \Upsilon(v_2) \\
%\mL3 := \Gamma(v_3) \sqcup \Gamma(!\rho)\qquad\mV3 := \Upsilon(v_3) \\
(\Gamma(!\sigma(\dst)) \geq \Gamma(!\rho)) \Rightarrow (\mL_1 :=
\mL_1)  \diamond (\mL_1 := \star) \\
(\Gamma(!\sigma(i)) \geq \Gamma(!\rho)) \Rightarrow (\mL_2 :=
\mL_2)  \diamond (\mL_2 := \star) \\
(\Gamma(!\sigma(\mathit{size})) \geq \Gamma(!\rho)) \Rightarrow (\mL_3 :=
\mL_3)  \diamond (\mL_3 := \star) \\
\sigma'' := \sigma\Big[ \,^{\Upsilon(!\sigma(\dst)) := \mV_1}_{\Gamma(!\sigma(\dst)) :=\mL_1}\Big]\qquad
\sigma''' := \sigma''\Big[ \,^{\Upsilon(!\sigma''(i)) := \mV_2}_{\Gamma(!\sigma''(i)) :=\mL_2} \Big]\qquad
\sigma' := \sigma'''\Big[ \,^{\Upsilon(!\sigma'''(\mathit{size})) :=
  \mV_3}_{\Gamma(!\sigma'''(\mathit{size})) :=\mL_3}\Big]\\
v_n = \mathit{undefined} \Rightarrow 
(\mL =  \Gamma(!\sigma(\mathit{base})))~~~\diamond \\
(\mL = \Gamma\mathit{(!\sigma(base)) \sqcup
  \Gamma(\theta(!\sigma(base))) \sqcup (\forall p \in
Prop(\theta(!\sigma(base))). \Gamma(p))}) \\
\rho'' = \rho.\push(\mL, \IPD(\iota), \cf(\iota), \false)\qquad\rho' := \isIPD(\iota',\rho'', \sigma')}
{\iota,  \theta, \sigma, \rho ~\leadsto~ \iota',  \theta,
\sigma', \rho'}
\end{align*} \normalsize 
%%% 
\emph{get-pnames} creates a property name list for object in register
\emph{base} and puts it in $\dst$, initializing $i$ and \emph{size}
for iteration through the list, after the deferred NSU check. If
\emph{base} is \emph{undefined} or \emph{null}, it jumps to
\emph{breakTarget}. It is a branching instruction and pushes the label
with join of all the property labels and the structure label of the
object along with the IPD on the pc-stack. If
  the IPD of the instruction is SEN or the same as the top of the
  pc-stack, then we just join the label on top of the pc-stack with
  the context label determined above.

%%%    NEXT-PNAME
\scriptsize \begin{align*}
\inference[next-pname:]
{\iota = \mbox{``next-pname dst:r base:r i:n size:n iter:n target:offset''}\\
(v_1,v_2, \iota') := \mathit{getNextPropNames(\sigma, \iota, base, i, size, iter,
target)} \\
\mL_n := \Gamma(v_n) \sqcup \Gamma(!\rho)\qquad\mV_n := \Upsilon(v_n),~n
:= 1, 2 \\ 
%\mL2 := \Gamma(v_2) \sqcup \Gamma(!\rho),~\mV2 := \Upsilon(v_2) \\
(\Gamma(!\sigma(\dst)) \geq \Gamma(!\rho)) \Rightarrow (\mL_1 :=
\mL_1)  \diamond (\mL_1 := \star) \\
(\Gamma(!\sigma(i)) \geq \Gamma(!\rho)) \Rightarrow (\mL_2 :=
\mL_2)  \diamond (\mL_2 := \star) \\
\sigma'' :=\sigma\Big[ \,^{\Upsilon(!\sigma(\dst)) := \mV_1}_{\Gamma(!\sigma(\dst)) :=\mL_1}\Big]\qquad
\sigma' :=\sigma''\Big[ \,^{\Upsilon(!\sigma''(i)) := \mV_2}_{\Gamma(!\sigma''(i)) :=\mL_2}\Big]\qquad\rho' := \isIPD(\iota',\rho, \sigma')}
{\iota,  \theta, \sigma, \rho ~\leadsto~ \iota',  \theta, \sigma',
\rho'}
\end{align*} \normalsize 
%%% 
\emph{next-pname} copies the next name from the property name list
created by \emph{get-pnames} in \emph{iter} to $\dst$ after deferred
NSU check, and jumps to
\emph{target}. If there are no names left, it continues with the next
instruction. Although, it behaves as a branching instruction, the
context pertaining to this opcode is already pushed in
\emph{get-pnames}. Also, the IPD corresponding to this instruction, is
the same as the one determined by \emph{get-pnames}. Thus, we do not
push on the pc-stack in this instruction. 

%\subsection{Scope chain lookup}
%%%    RESOLVE
\scriptsize \begin{align*}
\inference[resolve:]
{\iota = \mbox{``resolve dst:r prop:id''}\\
v := \mathit{resolveInSc(\Gamma(!\rho), !\sigma.\SC, prop)}\\
\mL := \Gamma(v) \sqcup \Gamma(!\rho)\qquad\mV := \Upsilon(v)\qquad
(\Gamma(!\sigma(\dst)) \geq \Gamma(!\rho)) \Rightarrow (\mL :=
\mL)  \diamond (\mL := \star) \\
\sigma' :=\sigma\Big[ \,^{\Upsilon(!\sigma(\dst)) := \mV}_{\Gamma(!\sigma(\dst)) :=\mL}\Big]\qquad
\iota' := \Succ(!\sigma'.\CFG,\iota)\qquad \rho' := \isIPD(\iota',\rho, \sigma')}
{\iota,  \theta, \sigma, \rho ~\leadsto~ \iota',  \theta,
\sigma', \rho'}
\end{align*} \normalsize 
%%%
\emph{resolve} searches for the property in the scope
  chain and writes it into $\dst$ register, if found. The label of
  the property written in $\dst$ is a join of the context label, all the
  nodes (structure label of the object contained in it) traversed in
  the scope chain and the label associated with the pointers in the
  chain until the node (object) where the property is found. If the
  initial label of the value contained in $\dst$ was lower than the
  context label, then the label of the value in $\dst$ is joined with
  $\star$. In case the property is not found, the instruction throws an
  exception (similar to throw, as described later).

%%%    RESOLVE-SKIP
\scriptsize \begin{align*}
\inference[resolve-skip:]
{\iota = \mbox{``resolve-skip dst:r prop:id skip:n''}\\
v := \mathit{resolveInScWithSkip(\Gamma(!\rho), !\sigma.\SC, prop, skip)}\\
\mL := \Gamma(v) \sqcup \Gamma(!\rho)\qquad\mV := \Upsilon(v)\qquad
(\Gamma(!\sigma(\dst)) \geq \Gamma(!\rho)) \Rightarrow (\mL :=
\mL)  \diamond (\mL := \star) \\
\sigma' :=
\sigma\Big[ \,^{\Upsilon(!\sigma(\dst)) := \mV}_{\Gamma(!\sigma(\dst)) :=\mL}
\Big]\qquad
\iota' := \Succ(!\sigma'.\CFG,\iota)\qquad\rho' := \isIPD(\iota',\rho, \sigma')}
{\iota,  \theta, \sigma, \rho ~\leadsto~ \iota',  \theta, \sigma',
\rho'}
\end{align*} \normalsize 
%%% 
\emph{resolve-skip} looks up the property named by \emph{prop} in
the scope chain similar to \emph{resolve}, but it skips the top
\emph{skip} levels and writes the result to register $\dst$. If the
property is not found, it also raises an exception and behaves
similarly to \emph{resolve}.

%%%    RESOLVE-GLOBAL
\scriptsize \begin{align*}
\inference[resolve-global:]
{\iota = \mbox{``resolve-global dst:r prop:id''}\\
v := \mathit{resolveGlobal(\Gamma(!\rho), \sigma, prop)}\\
\mL := \Gamma(v) \sqcup \Gamma(!\rho)\qquad\mV := \Upsilon(v)\qquad
(\Gamma(!\sigma(\dst)) \geq \Gamma(!\rho)) \Rightarrow (\mL :=
\mL)  \diamond (\mL := \star) \\
\sigma' :=\sigma\Big[ \,^{\Upsilon(!\sigma(\dst)) := \mV}_{\Gamma(!\sigma(\dst)) :=\mL}\Big]\qquad
\iota' := \Succ(!\sigma'.\CFG,\iota)\qquad\rho' := \isIPD(\iota',\rho, \sigma')}
{\iota,  \theta, \sigma, \rho ~\leadsto~ \iota',  \theta,
\sigma', \rho'}
\end{align*} \normalsize 
%%% 
\emph{resolve-global} looks up the property named by \emph{prop} in
the global object. If the structure of the global object matches the
one passed here, it looks into the global object. Else, it falls back
to perform a full \emph{resolve}.

%%%    RESOLVE-BASE
\scriptsize \begin{align*}
\inference[resolve-base:]
{\iota = \mbox{``resolve-base dst:r prop:id isStrict:bool''}\\
v := \mathit{resolveBase(\Gamma(!\rho), \sigma, !\sigma.\SC, prop, isStrict)}\\
\mL := \Gamma(v) \sqcup \Gamma(!\rho)\qquad\mV := \Upsilon(v)\qquad
(\Gamma(!\sigma(\dst)) \geq \Gamma(!\rho)) \Rightarrow (\mL :=
\mL)  \diamond (\mL := \star) \\
\sigma' :=\sigma\Big[ \,^{\Upsilon(!\sigma(\dst)) := \mV}_{\Gamma(!\sigma(\dst)) :=\mL}\Big]\qquad
\iota' := \Succ(!\sigma'.\CFG,\iota)\qquad\rho' := \isIPD(\iota',\rho, \sigma')}
{\iota,  \theta, \sigma, \rho ~\leadsto~ \iota',  \theta,
\sigma', \rho'}
\end{align*} \normalsize 
%%% 
\emph{resolve-base} looks up the property named by \emph{prop} in
the scope chain similar to \emph{resolve} but writes the object to register $\dst$. If the
property is not found and \emph{isStrict} is \emph{false}, the global
object is stored in $\dst$.

%%%    RESOLVE-WITH-BASE
\scriptsize \begin{align*}
\inference[resolve-with-base:]
{\iota = \mbox{``resolve-with-base bDst:r pDst:r prop:id''}\\
\mathit{(bdst, pdst) := resolveBaseAndProperty(\Gamma(!\rho), \sigma, baseDst, propDst,
  prop)}\\
\mL1 := \Gamma(!\sigma(\mathit{bdst)) \sqcup \Gamma(!\rho)\qquad\mV1 :=\Upsilon(!\sigma(bdst))}\\
\mL2 := \Gamma(!\sigma(\mathit{pdst)) \sqcup \Gamma(!\rho)\qquad\mV2 := \Upsilon(!\sigma(pdst))} \\
(\Gamma(!\sigma(\mathit{bDst})) \geq \Gamma(!\rho)) \Rightarrow (\mL1 :=
\mL1)  \diamond (\mL1 := \star) \\
(\Gamma(!\sigma(\mathit{pDst})) \geq \Gamma(!\rho)) \Rightarrow (\mL2 :=
\mL2)  \diamond (\mL2 := \star) \\
\sigma'' :=
\sigma\Big[ \,^{\Upsilon(!\sigma(\mathit{bDst})) := \mV1}_{\Gamma(!\sigma(\mathit{bDst})) :=\mL1}
\Big]\qquad
\sigma' :=
\sigma''\Big[ \,^{\Upsilon(!\sigma''(\mathit{pDst})) := \mV2}_{\Gamma(!\sigma''(\mathit{pDst})) :=\mL2}
\Big]\\
\iota' := \Succ(!\sigma'.\CFG,\iota)\qquad\rho' := \isIPD(\iota',\rho, \sigma')}
{\iota,  \theta, \sigma, \rho ~\leadsto~ \iota',  \theta, \sigma',
\rho'}
\end{align*} \normalsize 
%%% 
\emph{resolve-with-base} looks up the property named by \emph{prop} in
the scope chain similar to \emph{resolve-base} and writes the object
to register \emph{bDst}. It also, writes the property to \emph{pDst}. If the
property is not found it raises an exception like \emph{resolve}.

%%%    GET-SCOPED-VAR
\scriptsize \begin{align*}
\inference[get-scoped-var:]
{\iota = \mbox{``get-scoped-var dst:r index:n skip:n''} \\
v := \mathit{getScopedVar(\Gamma(!\rho), \sigma, \theta, index, skip)}\qquad
\mL := \Gamma(v) \sqcup \Gamma(!\rho)\qquad\mV := \Upsilon(v)\\
(\Gamma(!\sigma(\dst)) \geq \Gamma(!\rho)) \Rightarrow (\mL :=
\mL)  \diamond (\mL := \star) \\
\sigma' :=
\sigma\Big[ \,^{\Upsilon(!\sigma(\dst)) := \mV}_{\Gamma(!\sigma(\dst)) :=\mL}
\Big]\qquad
\iota' := \Succ(!\sigma'.\CFG,\iota)\qquad\rho' := \isIPD(\iota',\rho, \sigma')}
{\iota,  \theta, \sigma, \rho ~\leadsto~ \iota',  \theta,
\sigma', \rho'}
\end{align*} \normalsize 
%%% 
\emph{get-scoped-var} loads the contents of the \emph{index} local
from the scope chain skipping \emph{skip} nodes and places it in
$\dst$, after deferred NSU. The label of the value in $\dst$ includes
the join of the current context along with all the structure label of
objects in the skipped nodes. 

%%%    PUT-SCOPED-VAR
\scriptsize \begin{align*}
\inference[put-scoped-var:]
{\iota = \mbox{``put-scoped-var index:n skip:n value:r''}\\
\Gamma(!\sigma(\val)) \neq \star\qquad
 \sigma' := \mathit{putScopedVar(\Gamma(!\rho), \sigma, \theta, index, skip, value)}\\ 
\iota' := \Succ(!\sigma'.\CFG,\iota)\qquad\rho' := \isIPD(\iota',\rho, \sigma')}
{\iota,  \theta, \sigma, \rho ~\leadsto~ \iota',  \theta,
\sigma', \rho'}
\end{align*} \normalsize 
%%% 
\emph{put-scoped-var} puts the contents of the \emph{value} in the \emph{index} local
in the scope chain skipping \emph{skip} nodes. The label of the value includes
the join of the current context along with the structure label of all
the objects in the skipped nodes.

%%%    PUSH-SCOPE
\scriptsize \begin{align*}
\inference[push-scope:]
{\iota = \mbox{``push-scope scope:r''}\\
%\star \notin \Gamma(!\sigma(\mathit{scope}))\qquad
\sigma' := \mathit{pushScope(\Gamma(!\rho),\sigma, scope)}\qquad
\iota' := \Succ(!\sigma'.\CFG,\iota)\qquad\rho' := \isIPD(\iota',\rho, \sigma')}
{\iota,  \theta, \sigma, \rho ~\leadsto~ \iota',  \theta,
\sigma', \rho'}
\end{align*} \normalsize 
%%% 
\emph{push-scope} converts \emph{scope} to object and pushes it onto
the top of the current scope chain. The contents of the register
\emph{scope} are replaced by the created object. The scope chain
pointer label is set to the context. 

%%%    POP-SCOPE
\scriptsize \begin{align*}
\inference[pop-scope:]
{\iota = \mbox{``pop-scope''}\\
\sigma' := \mathit{popScope}(\Gamma(!\rho), \sigma)\qquad
\iota' := \Succ(!\sigma'.\CFG,\iota)\qquad\rho' := \isIPD(\iota',\rho, \sigma')}
{\iota,  \theta, \sigma, \rho ~\leadsto~ \iota',  \theta,
\sigma', \rho'}
\end{align*} \normalsize 
%%% 
\emph{pop-scope} removes the top item from the current scope chain if
the scope chain pointer label is greater than or equal to the
context. 

%%%    JMP-SCOPE
\scriptsize \begin{align*}
\inference[jmp-scope:]
{\iota = \mbox{``jmp-scope count:n target:n''} \\
\sigma' := \mathit{jmpScope(\Gamma(!\rho), \sigma, count)}\qquad 
\iota' := \Succ(!\sigma'.\CFG,\iota)\qquad\rho' := \isIPD(\iota',\rho, \sigma')}
{\iota,  \theta, \sigma, \rho ~\leadsto~ \iota',  \theta,
\sigma', \rho'}
\end{align*} \normalsize 
%%% 
\emph{jmp-scope} removes the top \emph{count} items from the current scope chain if
the scope chain pointer label is greater than or equal to the
context. It then jumps to offset specified by \emph{target}.

%\subsection{Exception handling}
%%%    THROW
\scriptsize \begin{align*}
\inference[throw:]
{\iota = \mbox{``throw ex:r''}\qquad
\eV := \Upsilon(!\sigma(\mathit{ex}))\\
(\sigma', \iota') := \mathit{throwException(\sigma, \iota)} \qquad \rho' := \isIPD(\iota',\rho, \sigma')}
{\iota,  \theta, \sigma, \rho ~\leadsto~ \iota',  \theta,
\sigma', \rho'}
\end{align*} \normalsize 
%%% 
\emph{throw} throws an exception and points to the
  exception handler as the next instruction to be executed, if
  any. The exception handler might be in the same function or in an
  earlier  function. If it is not present, the program terminates. If
  it has an exception handler, it has an edge to the synthetic exit
  node. Apart from this,
  \emph{throwException} pops the call-frames from the call-stack until
  it reaches the call-frame containing the exception handler. It writes
  the exception value to a local interpreter variable ($\eV$), which
  is then read by catch.

%%%    CATCH
\scriptsize \begin{align*}
\inference[catch:]
{\iota = \mbox{``catch ex:r''}\\
\mL := \Gamma(\mathit{\eV}) \sqcup \Gamma(!\rho)\qquad
(\Gamma(!\sigma(\mathit{ex})) \geq \Gamma(!\rho)) \Rightarrow (\mL :=
\mL)  \diamond (\mL := \star) \\
\sigma' :=
\sigma\Big[ \,^{\Upsilon(!\sigma(\mathit{ex})) := \Upsilon(\eV))}_{\Gamma(!\sigma(\mathit{ex})) :=\mL}
\Big]\qquad
\eV := \Empty\\
\iota' := \Succ(!\sigma'.\CFG,\iota)\qquad\rho' := \isIPD(\iota',\rho, \sigma')}
{\iota,  \theta, \sigma, \rho ~\leadsto~ \iota',  \theta,
\sigma', \rho'}
\end{align*} \normalsize 
%%% 
\emph{catch} catches the exception thrown by an
instruction whose handler corresponds to the \emph{catch} block. It
reads the exception value from $\eV$ and writes into the register
\emph{ex}. If the label of the register is less than the context, a
$\star$ is joined with the label. It then makes the $\eV$ empty and
proceeds to execute the first instruction in the \emph{catch} block.

\subsection{Proofs and Results}\label{sec:app-ni}

The fields in a frame of the pc-stack are denoted by the following
symbols: $!\rho.\ipd$ represents the IPD field in the top frame of the
pc-stack, $\Gamma(!\rho)$ returns the label field in the top frame of
the pc-stack, and $!\rho.\mC$ returns the call-frame field in the top frame
of the pc-stack.
% amd $\rho.\mH$ returns the handler field in the top
% frame of the pc-stack. Only the first three fields are used in the
% construction of the proofs. The handler field is only used for construction
% of the CFGs and is not required as part of the proofs. 
% Another function used in the definitions and proofs is 
% $\Lambda(\mC, \sigma)$, which 
% returns the level/height of the call-frame $\mC$ in the call-stack
% $\sigma$. As the call-frames themselves do not have
% a specific label field, this function along with the  pc-stack is used
% to determine the context label of the call-frames.

In the definitions and proofs that follow, we assume that the level of
attacker is $L$ in the three-element lattice presented earlier, i.e.,
in the equivalence relation $\eq_\ell$, $\ell = L$. The level of the
attacker is omitted for clarity purposes from definitions and proofs.

\begin{mydef}[Partial bijection]
\label{def:pb}
A partial bijection $\beta$ is a binary relation on heap locations
satisfying the following properties: (1) if $(a,b) \in \beta$ and
$(a,b') \in \beta$, then $b = b'$, and (2) if $(a,b) \in \beta$ and
$(a',b) \in \beta$, then $a = a'$.
\end{mydef}

Using partial bijections, we define equivalence of values, labeled
values and objects. 

\begin{mydef}[Value equivalence]
\label{def:val}
Two values $r_1$ and $r_2$ are equivalent up to $\beta$, written $r_1
\sim^\beta r_2$ if either (1) $r_1 = a$, $r_2 = b$ and $(a,b) \in
\beta$, or (2) $r_1 = r_1 = v$ where $v$ is some primitive value.
\end{mydef}

\begin{mydef}[Labeled value equivalence]
\label{def:lval}
Two labeled values $v_1 = r_1^{\ell_1}$ and $v_2 = r_2^{\ell_2}$ are
equivalent up to $\beta$, written $v_1 \sim^\beta v_2$ if one of the
following holds: (1) $\ell_1 = \star$ or $\ell_2 = \star$, or (2)
$\ell_1 = \ell_2 = H$, or (3) $\ell_1 = \ell_2 = L$ and $r_1
\sim^\beta r_2$.
\end{mydef}
The first clause of the above definition is standard for the
permissive-upgrade check. It equates a partially leaked value
to every other labeled value.

Objects are formally denoted as $N = (\{p_i \mapsto \{v_i, \mathit{flags}_i\}\}_{i=0}^n,$ $
\_\_proto\_\_  \mapsto a^{\ell_p}, \ell_s)$ . Here $p_i$s correspond to
the property name, $v_i$s are their respective values
and $\mathit{flags}_i$ represent the \emph{writable}, \emph{enumerable} and
\emph{configurable} flags as described in the
\emph{PropertyDescriptor} structure in the cpp model
above. As the current model does not allow modification of the $\mathit{flags}$,
they are always set to $\true$. Thus, we do not need to account for
the $\mathit{flags}_i$ in the equivalence definition below.
$\_\_proto\_\_$ represents a labelled pointer to the object's prototype.

\begin{mydef}[Object equivalence]
\label{def:obj}
For ordinary objects $N = (\{p_i \mapsto \{v_i,$ $\mathit{flags}_i\}\}_{i=0}^n,$ $ \_\_proto\_\_
\mapsto a^{\ell_p}, \ell_s)$ and $N' = (\{p_i' \mapsto \{v_i', \mathit{flags}_i'\}\}_{i=0}^m,
\_\_proto\_\_$ $\mapsto a'^{\ell_p'}, \ell_s')$, we say $N \sim^\beta
N'$ iff either $\ell_s = \ell_s' = H$ or the following hold: (1)
$\ell_s = \ell_s' = L$, (2) $[p_0,\ldots,p_n] = [p_0',\ldots,p_m']$
(in particular, $n = m$), (3) $\forall i.\, v_i \sim^\beta v_i'$, and
(4) $a^{\ell_p} \sim^\beta a^{\ell_p'}$.

For function objects $F = (N, f, \Sigma)$ and $F' = (N', f',
\Sigma')$, we say $F \sim^\beta F'$ iff either $N.\ell_s = N.\ell_s' =
H$ or $N \sim^\beta N'$, $f =^\beta f'$ and $\Sigma \sim^\beta
\Sigma'$. 
\end{mydef}
The equality $f =^\beta f'$ of nodes $f,f'$ in CFGs means that the
portions of the CFGs reachable from $f,f'$ are equal modulo renaming
of operands to bytecodes under $\beta$. Equivalence $\Sigma \sim^\beta
\Sigma'$ of scope chains is defined below. Because we do not allow
$\star$ to flow into heaps, we do not need corresponding clauses in
the definition of object equivalence.

\begin{mydef}[Heap equivalence]
\label{def:heap}
For two heaps $\theta_1, \theta_2$, we say that $\theta_1 \sim^\beta \theta_2$ iff $\forall (a,b) \in
\beta. \; \theta_1(a) \sim^\beta \theta_2(b)$.
\end{mydef}

Unlike objects, we allow $\star$ to permeate scope chains, so our
definition of scope chain equivalence must account for it. Scope
chains are denoted as $\Sigma$. A scope-chain node contains a label $\ell$
along with an object $S$ (either JSActivation or JSObject) represented as
$(S,\ell)$.
 
\begin{mydef}[Scope chain equivalence]
\label{def:sc}
For two scope chain nodes $S,S'$, we say that $S \sim^\beta S'$ if one
of the following holds: (1) $S = O$, $S' = O'$ and $O \sim^\beta O'$,
or (2) $S = v_0 : \ldots : v_n$, $S' = v_0': \ldots : v_n'$ and
$\forall i.\; v_i \sim^\beta v_i'$.

Equivalence of two scope chains $\Sigma, \Sigma'$ is defined
by the following rules. (1) $nil \sim^\beta nil$ (2) $(nil \sim^\beta
(S, \ell))$ if $\ell = H$ or $\ell = \star$ (3) $((S, \ell) \sim^\beta
nil)$ if $\ell = H$ or $\ell = \star$ and (4)
$((S, \ell) : \Sigma) \sim^\beta ((S',\ell'): \Sigma')$ if one of the
following holds: (a) $\ell = \star$ or $\ell' = \star$, (b) $\ell =
\ell' = H$, or (c) $\ell = \ell' = L$, $S \sim^\beta S'$ and $\Sigma
\sim^\beta \Sigma'$.
\end{mydef}

\begin{mydef}[Call-frame equivalence]
\label{def:cf}
For two call frames $\mu_1,\mu_2$, we say $\mu_1 \sim^\beta \mu_2$ iff
(1) $\#Registers(\mu_1) = \#Registers(\mu_2)$, (2) $\forall i.\;
\mu_1.Registers[i] \sim^\beta \mu_2.Registers[i]$, (3) $\mu_1.\CFG
=^\beta \mu_2.\CFG$, (4) $\mu_1.Scopechain \sim^\beta
\mu_2.Scopechain$, (5) $\mu_1.\iota_r = \mu_2.\iota_r$ (6)
$(\mu_1.\ell_c = \mu_2.\ell_c = H) \vee (\mu_1.\ell_c = \mu_2.\ell_c =
L \wedge \mu_1.f_{callee} \eq \mu_2.f_{callee})$ (7) $\mu_1.argcount = \mu_2.argcount
$ (8) $\mu_1.getter = \mu_2.getter$ and (9) $\mu_1.dReg =_{\beta} \mu_1.dReg$
\end{mydef}
Note that a register is simply a labeled value in our semantics, so
clause (2) above is well-defined.

\begin{mydef}[$pc$-stack equivalence]
\label{def:pc}
For two pc-stacks $\rho_1, \rho_2$, we say $\rho_1 \sim \rho_2$ iff
the corresponding nodes of $\rho_1$ and $\rho_2$ 
having label $L$ are equal, except for the call-frame $(\mC)$ field. 
\end{mydef}

In proofs that follow, two pc-stack nodes are equal if their
respective fields are equal, except for the call-frame $(\mC)$ field.

\begin{mydef}[Call-stack equivalence]
\label{def:cs}
Given $\rho_1 \sim \rho_2$, suppose:
\begin{enumerate}
\item $e_1$ is the lowest $H$-labelled node in $\rho_1$ 
\item $e_2$ is the lowest $H$-labelled node in $\rho_2$
\item $\mu_1$ is the node of $\sigma_1$ pointed to by $e_1$ 
\item $\mu_2$ is the node of $\sigma_2$ pointed to by $e_2$
\item $\sigma_1'$ is prefix of $\sigma_1$ up to and including $\mu_1$
  or \\ if $\Gamma(!\rho_1) = L$ or $\rho_1$ is empty, $\sigma_1' = \sigma_1$
\item $\sigma_2'$ is prefix of $\sigma_2$ up to and including $\mu_2$
  or \\ if $\Gamma(!\rho_2) = L$ or $\rho_2$ is empty, $\sigma_2' = \sigma_2$
\end{enumerate}
then $\sigma_1 \sim^{\beta}_{\rho_1, \rho_2} \sigma_2$, iff 
(1) $|\sigma_1'| = |\sigma_2'|$, and (2) $\forall i \leq
|\sigma_1'|.(\sigma_1'[i] \eq \sigma_2'[i])$.
% Two call-stacks $\sigma$ and $\sigma'$ are low-equivalent $(\sigma \eq \sigma')$ 
% iff for their respective pc-stacks, $\rho$ and $\rho'$, either:
% \begin{itemize}
% \item[a)] If $(\Gamma(!\rho) = \Gamma(!\rho') = L)$, then $\forall(\mu \in \sigma,~\nu \in
%   \sigma'). (\mu \eq \nu)$, or
% \item[b)] If $(\Gamma(!\rho) = \Gamma(!\rho') = H)$, then 
%   $\forall (\mu \in \sigma,~\nu \in  \sigma').$
%   $(\forall(0 \leq i \leq \LO(\rho_n'.\mC, \sigma')). (\mu_i \eq
%   \nu_i)$, where $(\Gamma(\rho_n') = H) \wedge (\Gamma(\rho_{n-1}') =
%   L))$, or
% \item[c)] If $(\Gamma(!\rho) = H) \wedge (\Gamma(!\rho') = L)$, and 
%   $(\mu \in \sigma,~\nu \in  \sigma')$,
%   then $\forall(0 \leq i \leq \LO(\rho_n'.\mC, \sigma')). (\mu_i \eq
%   \nu_i)$, where $n = |\rho'|$.
% \end{itemize}
\end{mydef}

\begin{mydef}[State equivalence]
\label{def:ceq}
Two states $s_1 = \langle
\iota_1, \theta_1, \sigma_1,\rho_1\rangle$ and $s_2 = \langle
\iota_2, \theta_2, \sigma_2,\rho_2\rangle$ are equivalent, written as
$s_1 \eq s_2$, iff $\iota_1 = \iota_2$, $\rho_1  \sim \rho_2$,
$\theta_1 \eq \theta_2$, and $\sigma_1 \eq_{\rho_1, \rho_2} \sigma_2$.
\end{mydef}

\begin{myLemma}[Confinement Lemma]
\label{lem:confinement}
If $~\langle\iota,  \theta, \sigma,\rho \rangle~\leadsto~\langle\iota', 
\theta', \sigma',\rho' \rangle$ and $\Gamma(!\rho) = H$, then
$\rho \sim \rho'$, $\sigma \eq_{\rho,\rho'} \sigma'$
and $\theta \eq \theta'$ where $\beta = \{(a,a)\,|\,a \in \theta \}$
\end{myLemma}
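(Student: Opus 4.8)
The statement concerns a single reduction step, so no induction is needed: the plan is a case analysis on the semantic rule deriving $\langle\iota,\theta,\sigma,\rho\rangle \leadsto \langle\iota',\theta',\sigma',\rho'\rangle$, verifying $\rho\sim\rho'$, $\sigma \eq_{\rho,\rho'}\sigma'$ and $\theta \eq \theta'$ in each case. Since $\beta$ is the identity on $\mathrm{dom}(\theta)$, value- and object-equivalence up to $\beta$ over this one heap degenerate to ``equality, with every $H$- or $\star$-labelled component acting as a wildcard'', and I will use that simplification throughout.

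\emph{The pc-stack ($\rho\sim\rho'$).} Here I would invoke the invariant noted in Section~\ref{sec:real} that the pc-stack labels are monotonically non-decreasing from bottom to top. With $\Gamma(!\rho)=H$, a step changes $\rho$ only through $\isIPD$ or $\rho.\push$: $\isIPD$ removes at most the single top node, whose label is $\sqsupseteq H$ by monotonicity, so no $L$-labelled node is dropped; and $\rho.\push(\ell,\iota,\sigma)$ either joins a label $\sqsupseteq \Gamma(!\rho)=H$ onto the current top, or pushes a fresh node of label $\ell\sqcup\Gamma(!\rho)\sqsupseteq H$ --- in neither case disturbing an $L$-labelled node. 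Hence the $L$-labelled frames of $\rho$ and $\rho'$ agree up to the $\mC$ field, which is exactly $\rho\sim\rho'$.

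\emph{The heap and call stack.} Because $\Gamma(!\rho)=H$, the ``visible prefix'' of the call stack used by $\eq_{\rho,\rho'}$ ends strictly below the frame in which the current high context was entered; in particular the top frame $!\sigma$, where every register write lands, is invisible. For the purely computational / register-writing rules (prim, mov, typeof, instanceof, call-put-result, get-by-id, the resolve family, get-scoped-var, del-by-id, next-pname, create-arguments, new-func, create-this, create-activation, new-object, \dots) the value placed in $\dst$ carries a label $\sqsupseteq \Gamma(!\rho)=H$, and the deferred-NSU clause $(\Gamma(!\sigma(\dst))\geq\Gamma(!\rho))\Rightarrow(\mL:=\mL)\diamond(\mL:=\star)$ makes it $\star$ unless $\dst$ already held an $H$-labelled value (in which case the new one is $H$-labelled too); by the first two clauses of labelled-value equivalence it is therefore $\sim^\beta$ the old content, and the enclosing frame is in any case outside the visible prefix. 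For scope-chain mutations (push-scope, pop-scope, jmp-scope, put-scoped-var, create-activation) the meta-functions compare the existing scope label with $\Gamma(!\rho)$ and execute \textbf{stop} when it is lower, so either the step does not exist (lemma vacuous) or only $H$-/$\star$-labelled scope material is touched, preserving scope-chain equivalence. For the heap-writing rules (put-by-id via putDirect/putIndirect, put-getter-setter, del-by-id) the analogous structure-label NSU discipline applies: the operation either leaves the affected object's structure label $\sqsupseteq H$ --- so the before/after versions are identified by the $\ell_s=\ell_s'=H$ clause of object equivalence --- or it is stuck; and object-allocating rules introduce only a fresh $a\notin\mathrm{dom}(\theta)$, for which $(a,a)\notin\beta$, so $\theta'$ is unconstrained there. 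Finally, the control-transfer rules ret, call, construct, throw change $\sigma$ only by pushing a fresh (invisible) frame or by popping frames; I would argue popping never reaches into the visible prefix, using that the IPD of the branch that raised the pc to $H$ lies in its own frame or an earlier one (hence that frame is not popped while $\Gamma(!\rho)=H$) and that throwException pops down to, not past, the frame named by $!\rho$, which by the design invariant holds the matching handler.

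\emph{Main obstacle.} The genuine difficulty is this last point. Making ``the visible prefix is untouched'' rigorous requires an auxiliary well-formedness invariant relating the $\mC$-pointers on the pc-stack to the call stack --- that the frame pointed to by the lowest $H$-node of $\rho$ is neither popped nor mutated by any step taken with $\Gamma(!\rho)=H$, and that call/ret/throw respect the handler-frame and IPD-frame correspondences. The register-only and allocation cases are routine; the scope-chain and heap cases reduce to enumerating the NSU / structure-label \textbf{stop}-guards and confirming that each blocks every low-visible mutation at high pc.
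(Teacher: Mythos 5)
Your overall strategy---no induction, a per-bytecode case analysis, monotonicity of the $pc$-stack for $\rho\sim\rho'$, deferred-NSU/$\star$ reasoning for register writes, structure-label reasoning for heap writes, and freshness for allocations---is exactly the paper's proof, and your identification of the call/ret/throw popping discipline as the delicate point matches where the paper leans hardest on the IPD/$\mC$-pointer invariant.

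There is, however, one concretely wrong step: the claim that the visible prefix of $\eq_{\rho,\rho'}$ ``ends strictly below the frame in which the current high context was entered,'' so that the top frame $!\sigma$ is invisible. Definition~\ref{def:cs} takes the prefix \emph{up to and including} the frame $\mu_1$ pointed to by the \emph{lowest} $H$-labelled node of $\rho$---i.e.\ the very frame in which the outermost high branch was taken. In the canonical example \texttt{l = 0; if (h) \{l = 1;\}} there is a single call frame, the lone $H$-node of $\rho$ points to it, and the register write for \texttt{l = 1} lands squarely inside the visible prefix. So confinement for the register-writing bytecodes cannot be discharged by invisibility; it must be discharged, as the paper does, by showing $!\sigma(\dst)\eq{!\sigma'(\dst)}$ directly: if $\Gamma(!\sigma(\dst))\geq H$ the new label is $H$ (clause (2) of Definition~\ref{def:lval}), otherwise deferred NSU stamps it $\star$ (clause (1)). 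You do supply this second argument in the same sentence, so your case analysis survives once the invisibility claim is deleted---but as written the lead-in to your entire heap-and-call-stack section rests on a false characterization of Definition~\ref{def:cs}, and a reader following only that thread would conclude, wrongly, that nothing needs to be checked for the top frame. (A smaller imprecision: \emph{push-scope} does not \textbf{stop} when the scope label is below the $pc$; it pushes a $\star$-labelled node, and equivalence follows from the $\star$ clause of Definition~\ref{def:sc} rather than from the step being absent.)
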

\begin{proof}
As $\Gamma(!\rho) = H$, the $L$ labelled nodes in the pc-stack will remain
unchanged. Branching instructions pushing a new node would have label
$H$ due to monotonicity of pc-stack. Even if $\iota'$ is the IPD corresponding to the
$!\rho.\ipd$, it would only pop the $H$ labelled node. Thus, the $L$ labelled
nodes will remain unchanged. Hence, $\rho \sim \rho'$.

We assume that the $!\rho$ is the first node labelled $H$ in the
context stack. 
For, other higher labelled nodes above the first node labelled
$H$ in the pc-stack, the call-frames corresponding to the nodes having $L$ label in
the pc-stack remain the same. Hence, $\sigma \eq_{\rho,\rho'}
\sigma'$. \\
By case analysis on the instruction type:
\begin{enumerate}
\item \emph{prim}: 
\begin{enumerate}
\item If $\Gamma(!\sigma(\dst)) \geq \Gamma(!\rho)$, then 
  $\Gamma(!\sigma(\dst)) = H$. \\
  By premise of prim,  $\Gamma(!\sigma'(\dst)) = H$. 
  By Definition~\ref{def:lval}, $!\sigma(\dst) \eq
  \,!\sigma'(\dst)$.
\item If $\Gamma(!\sigma(\dst)) < \Gamma(!\rho)$, then
  $\Gamma(!\sigma'(\dst))$ will contain a $\star$ and by
  Definition~\ref{def:lval}, $!\sigma(\dst) \eq \,!\sigma'(\dst)$. 
\end {enumerate}
 Only $\dst$ changes in the call-frame, so by
 Definition~\ref{def:cf}, $!\sigma \eq \,!\sigma'$. Also,
 other call-frames remain unchanged. By Definition~\ref{def:cs}, $\sigma \eq_{\rho,\rho'} \sigma'$. \\ 
 $\theta = \theta'$, thus, $\theta \eq \theta'$.
\item \emph{mov}: Similar to prim.
\item \emph{jfalse}: 
$\sigma = \sigma'$ and $\theta = \theta'$, so, $\sigma \eq_{\rho,\rho'} \sigma'$ and $\theta \eq \theta'$.
\item \emph{loop-if-less}:  Similar to jfalse. 
\item \emph{typeof}: Similar to prim.
\item \emph{instanceof}: Similar to prim.
\item \emph{enter}: 
 $\sigma = \sigma'$,  so $\sigma \eq_{\rho,\rho'} \sigma'$. $\theta =
  \theta'$, so $\theta \eq \theta'$.
\item \emph{ret}: 
  %$(!\rho.\mC) \neq\,!\sigma$, as \emph{ret} is the
  %last instruction in a call-frame. If $\iota' =\, !\rho.\ipd$,
  %$!\rho.\mC =\, !\sigma'$. Else $!\rho.\mC =\, !\rho'.\mC$ and as
  %only $!\sigma$ is popped, the call-frames until $(!\rho.\mC)$ are
  %unchanged.  
  If $!\sigma.getter = false$ then only $!\sigma$ is popped, the
  call-frames until $(!\rho.\mC)$ are unchanged. When $!\sigma.getter
  = true$, then it sets $!\sigma'(!\sigma.dReg)$ with $!\sigma(res)$. Now,
  let $\sigma_1$ is the prefix of $\sigma$ such that $!\rho.\mC
  =_{\beta} !\sigma_1$. If $!\sigma' \notin \sigma_1$ then changes in
  $!\sigma'$ does not effect the callframe equivalence and if
  $!\sigma' \in \sigma_1$ then
  $\Gamma(!\sigma'(!\sigma.dReg))=\star$ (when
  $\Gamma(!\sigma_1(!\sigma.dReg)))=L$ or
  $\Gamma(!\sigma_1(!\sigma.dReg)))=\star$) and
  $\Gamma(!\sigma'(!\sigma.dReg)))=H$ (when
  $\Gamma(!\sigma_1(!\sigma.dReg)))=H$), each of the cases give
  $!\sigma_1(!\sigma.dReg) \eq !\sigma'(!\sigma.dReg)$ from
  Definition~\ref{def:lval}. So, $\sigma \eq_{\rho,\rho'} \sigma'$, by
  Definition~\ref{def:cs}.  $\theta = \theta'$, so $\theta \eq
  \theta'$.
\item \emph{end}: The confinement lemma does not apply. 
\item \emph{call}: If it pushes on top of pc-stack, $!\rho.\mC$ is the
  lowest $H$-labelled node in $\rho'$. If it joins the
  label with $!\rho$, the $L$ labelled nodes remain unchanged and the
  $!\rho.\mC =\,!\rho'.\mC$. All the call-frames until $!\rho.\mC$
  remain unchanged. So, by Definition~\ref{def:cs},  
  $\sigma \eq_{\rho,\rho'} \sigma'$. $\theta = \theta'$, so $\theta \eq \theta'$.
\item \emph{call-put-result}: Similar to prim. 
\item \emph{call-eval}: If it is a user-defined eval, it is similar to
  call. \\ In strict mode, it pushes a node on scope-chain with label
  $H$ if $\Gamma(!\sigma'.\Sigma) = H$, else labels it $\star$.
  In non-strict mode, it does not push a node on the scope-chain. 
  $!\sigma$ remains equivalent
  with corresponding call-frame in $\sigma'$ by Definition~\ref{def:cf}. 
  As other $L$ call-frames are unchanged, by Definition~\ref{def:cs}, $\sigma \eq_{\rho,\rho'}
  \sigma'$. \\ 
  $\theta = \theta'$, so $\theta \eq \theta'$.
\item \emph{create-arguments}: 
  Over the initial $\beta$, 
  by Definition~\ref{def:heap}, $\theta \eq \theta'$. If the
  argument object is created at $x$, then $\beta = (x,x) \cup \beta$
  after the step is taken. \\
 $\sigma \eq_{\rho,\rho'} \sigma'$ (Similar to prim).  
\item \emph{new-func}: 
  Over the initial $\beta$, 
  by Definition~\ref{def:heap}, $\theta \eq \theta'$. If the
  function object is created at $x$, then $\beta = (x,x) \cup \beta$
  after the step is taken. \\
 $\sigma \eq_{\rho,\rho'} \sigma'$ (Similar to prim).  
\item \emph{create-activation}: 
  Over the initial $\beta$, 
  by Definition~\ref{def:heap}, $\theta \eq \theta'$. If the
  argument objects is created at $x$, then $\beta = (x,x) \cup \beta$
  after the step is taken. \\
  It puts the object in $\dst$ with label $H$ or $\star$,
  depending on $\dst$ value's initial label. 
  Also, pushes a node containing the object in the scope chain with a
  $\star$, if $\Gamma(!\sigma.\Sigma) = L \vee \star$ or with label
  $H$, if $\Gamma(!\sigma.\Sigma) = H$ or $(!\sigma.\Sigma) = nil$.
  Thus, $!\sigma.\Sigma \eq !\sigma'.\Sigma $ by
  Definition~\ref{def:sc}.
  By Definition~\ref{def:cf}, $!\sigma \eq !\sigma'$. Other
  call-frames are unchanged, so $\sigma \eq_{\rho,\rho'} \sigma'$ by Definition~\ref{def:cs}. 
\item \emph{construct}: Similar to call.
\item \emph{create-this}: Similar to create-arguments.
\item \emph{new-object}: 
Over the initial $\beta$, 
  by Definition~\ref{def:heap}, $\theta \eq \theta'$. If the
  new object is created at $x$, then $\beta = (x,x) \cup \beta$
  after the step is taken. \\
 $\sigma \eq_{\rho,\rho'} \sigma'$ (Similar to prim).  
\item \emph{get-by-id}: Similar to mov when the property is a data
  property. If the property is an accessor property then getter is
  invoked and if the invocation of getter pushes an entry on top of pc-stack,
  $!\rho.\mC$ remains the lowest $H$-labelled node in $\rho'$. If it joins
  the label with $!\rho$, the $L$ labelled nodes remain unchanged and
  the $!\rho.\mC =\,!\rho'.\mC$. All the call-frames until $!\rho.\mC$
  remain unchanged. So, by Definition~\ref{def:cs}, $\sigma
  \eq_{\rho,\rho'} \sigma'$. $\theta = \theta'$, so $\theta \eq
  \theta'$.
\item \emph{put-by-id}: Sets the
  property of the object \emph{base} object to the \emph{value} with label $H$ 
  if the structure label of the object $\ell_s = H$. Thus, the object remains
  low-equivalent by Definition~\ref{def:obj}.
  Thus, $\theta \eq \theta'$ by Definition~\ref{def:heap}. \\
  Also, $\sigma = \sigma'$, so, $\sigma \eq_{\rho,\rho'} \sigma'$.
\item \emph{del-by-id }: Deletes the property if
  structure label of object, $\ell_s = H$. Thus, the object remains
  low-equivalent by Definition~\ref{def:obj}.
  By Definition~\ref{def:heap}, $\theta \eq \theta'$. \\
  $\sigma \eq_{\rho,\rho'} \sigma'$ (Similar to mov).
\item \emph{getter-setter}: Sets accessor property of the object
  \emph{base} object with $\Gamma(\mathit{getter})$ and
  $\Gamma(\mathit{setter})$ and label $H$ if the structure label of
  the object $\ell_s = H$. Thus, the object remains low-equivalent by
  Definition~\ref{def:obj}.  Thus, $\theta \eq \theta'$ by
  Definition~\ref{def:heap}.  Also, $\sigma = \sigma'$, so, $\sigma
  \eq_{\rho,\rho'} \sigma'$.
% Inserts the getter-setter methods with
%  $\Gamma(\mathit{getter}) = \Gamma(\mathit{setter}) = \Gamma(!\rho)$,
%  if $\Gamma(\mathit{base}) = H$.
%  By Definition~\ref{def:heap}, $\theta \eq \theta'$. \\
%  $\sigma = \sigma'$, so, $\sigma \eq \sigma'$.
\item \emph{get-pnames}: Similar to mov and jfalse. 
\item \emph{next-pname}: Similar to mov. 
\item \emph{resolve}: If the property exists, it is similar to
  mov. If it does not, it is similar to throw.
\item \emph{resolve-skip}: Similar to resolve. 
\item \emph{resolve-global}: Similar to resolve. 
\item \emph{resolve-base}: Similar to resolve.
\item \emph{resolve-with-base}: Similar to resolve.
\item \emph{get-scoped-var}: Similar to mov.
\item \emph{put-scoped-var}: 
  Writes the value in the \emph{index}th register in \emph{skip}th node. 
  If $\Gamma(!\sigma(\mathit{index})) = H$, then,
  $\Gamma(!\sigma'\mathit{(index)}) = H$. 
  Else if $\Gamma(!\sigma(\mathit{index})) = L$, then,
  $\Gamma(!\sigma'\mathit{(index)}) = \star$. Other call-frames are unchanged.
  Thus, $\sigma \eq_{\rho,\rho'} \sigma'$ by Definition ~\ref{def:cf} and ~\ref{def:cs}. \\
  $\theta = \theta'$, so $\theta \eq \theta'$.
\item \emph{push-scope}: Pushes node on scope-chain with label $H$ if 
  $\Gamma(!\sigma.\Sigma) = H$ or $(!\sigma.\Sigma) = nil$. 
  Else, assigns a $\star$ as the label. Thus, $!\sigma.\Sigma
  \eq !\sigma'.\Sigma$. Registers remain unchanged. 
  By Definition~\ref{def:cf}, $!\sigma \eq !\sigma'$. Other
  call-frames are unchanged, so by Definition~\ref{def:cs}, 
  $\sigma \eq_{\rho,\rho'} \sigma'$. $\theta = \theta'$, so $\theta \eq \theta'$.
\item \emph{pop-scope}: Pops the node from the scope-chain if 
  $\Gamma(!\sigma.\Sigma) = H \vee \star$. Registers remain unchanged.
  By Definition~\ref{def:cf}, $!\sigma \eq !\sigma'$. Other
  call-frames are unchanged, so by Definition~\ref{def:cs}, $\sigma \eq_{\rho,\rho'} \sigma'$. 
 $\theta = \theta'$, so $\theta \eq \theta'$.
\item \emph{jmp-scope}: Similar to pop-scope.
\item \emph{throw}: 
  Pops the call-frames until the handler is reached, i.e., until
  $(!\rho.\mC)$. The property of IPD ensures that
  $!\sigma' =\, (!\rho.\mC)$. Either $!\rho'.\ipd =\, !\rho.\ipd$ or $\iota'
  = \,!\rho.\ipd$. Thus, $!\rho.\mC$ is $!\sigma'$.  
  This call-frame and the ones below
  remain unchanged. Thus, $\sigma \eq_{\rho,\rho'} \sigma'$ by Definition~\ref{def:cs}.\\
  $\theta = \theta'$, so $\theta \eq \theta'$.
\item \emph{catch}: Similar to mov.
\end{enumerate}
\end{proof}
%%% End of confinement

% \begin{myCor}
% \label{cor:Evolution_reg}
% If $\forall n. ~\langle\iota_0, \zeta_0, \sigma_0, \theta_0, \rho
% \rangle~\leadsto^{n}~\langle\iota_n, \zeta_n, \sigma_n, \theta_n,
% \rho_n \rangle$ 
% and $\forall 0 \leq i \leq n. \Gamma(!\rho_i) = H$, then $\mu = !\rho_n.\mathcal{C}, l = \Lambda(\mu, \sigma_n)$
% $\forall i \in \{1 ~ to ~  l\}. \forall j \in \{1 ~ to ~ \#Registers(\sigma(i))\}. \sigma_0[i].Register(j) \sim^\beta \sigma_n[i].Register(j)$
% \end{myCor}
% \begin{proof}
% By induction on n. \\
% IH: $\forall n. ~\langle\iota_0, \zeta_0, \sigma_0, \theta_0, \rho$

% \end{proof}

% \begin{myCor}
% \label{cor:Evolution_scope}
% If $~\langle\iota,  \sigma, \theta, \rho \rangle~\leadsto^{*}~\langle\iota', \zeta', \sigma',
% \theta', \rho' \rangle$ and $\Gamma(!\rho) = H$, then $\mu =
% !\rho'.\mu, l = \Lambda(\mu, \sigma')$
% $\forall i \in \{1 ~ to ~  l\}. \sigma[i].\Sigma \sim^\beta \sigma'[i].\Sigma$
% \end{myCor}

\begin{myCor}
\label{cor:Evolution_callstacks}
If $~\langle\iota_0, \theta_0, \sigma_0,\rho_0 \rangle~\leadsto^{n}~\langle\iota_n, 
\theta_n, \sigma_n, \rho_n \rangle$ and $\forall (0 \leq i \leq n).\Gamma(!\rho_i) = H$, then 
$\rho_0 \sim \rho_n$, and $\sigma_0 \eq_{\rho_0,\rho_n} \sigma_n$
\end{myCor}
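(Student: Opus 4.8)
The plan is to prove this by induction on the length $n$ of the reduction, using the Confinement Lemma (Lemma~\ref{lem:confinement}) as the single-step engine. Throughout I take the bijection implicit in $\eq$ to be $\beta=\{(a,a)\mid a\in\theta_0\}$, exactly as in Lemma~\ref{lem:confinement}; this is legitimate because a high-$pc$ step can only grow the heap by fresh locations --- the casework in Lemma~\ref{lem:confinement} shows no object with an $L$ structure label is ever mutated while $\Gamma(!\rho)=H$ --- and all the relations of Definitions~\ref{def:pc}--\ref{def:cs} are monotone under enlarging $\beta$, so the per-step bijections $\{(a,a)\mid a\in\theta_i\}$ are compatible with $\beta$. \textbf{Base case} $n=0$: the reduction is empty and the claim is just reflexivity of $\sim$ on $pc$-stacks (all $L$-labelled frames trivially equal) and of $\eq_{\rho_0,\rho_0}$ on call-stacks, which holds because $\beta$ relates every location of $\theta_0$ to itself. (Equivalently one may take $n=1$, where the statement is Lemma~\ref{lem:confinement} minus its heap conclusion.)

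\textbf{Inductive step.} Split $\langle\iota_0,\theta_0,\sigma_0,\rho_0\rangle\leadsto^{n}\langle\iota_n,\theta_n,\sigma_n,\rho_n\rangle$ as $\langle\iota_0,\theta_0,\sigma_0,\rho_0\rangle\leadsto^{n-1}\langle\iota_{n-1},\theta_{n-1},\sigma_{n-1},\rho_{n-1}\rangle\leadsto\langle\iota_n,\theta_n,\sigma_n,\rho_n\rangle$. The hypothesis $\forall(0\le i\le n).\,\Gamma(!\rho_i)=H$ applies to both the length-$(n-1)$ prefix and the last step, so the induction hypothesis gives $\rho_0\sim\rho_{n-1}$ and $\sigma_0\eq_{\rho_0,\rho_{n-1}}\sigma_{n-1}$, and Lemma~\ref{lem:confinement} applied to the last step gives $\rho_{n-1}\sim\rho_n$ and $\sigma_{n-1}\eq_{\rho_{n-1},\rho_n}\sigma_n$. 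For the $pc$-stacks I use the stronger fact established inside the proof of Lemma~\ref{lem:confinement}: a high-$pc$ step changes $\rho$ only by pushing and/or popping $H$-labelled frames, leaving the sub-sequence of $L$-labelled frames \emph{pointwise identical}. Hence that sub-sequence is literally the same across $\rho_0,\dots,\rho_n$, so $\rho_0\sim\rho_n$. For the call-stacks, since every $\rho_i$ is non-empty with top label $H$ and $pc$-stacks are monotone from bottom to top, the lowest $H$-labelled frame of each $\rho_i$ sits at a fixed depth (the boundary of the stable $L$-region) and addresses, in $\sigma_i$, the same call-frame $\mu$ that was active when that high context was entered; the casework of Lemma~\ref{lem:confinement} shows that $\mu$ together with the frames below it is exactly the prefix it keeps $\eq$-related and of constant length, and (cases \emph{ret}, \emph{throw}) that no high-$pc$ step pops below $\mu$. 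Chaining the per-step prefix equivalences then yields $\sigma_0\eq_{\rho_0,\rho_n}\sigma_n$, as required.

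\textbf{Main obstacle.} The delicate part is the call-stack composition, because $\eq$ is \emph{not} transitive in general --- a $\star$-labelled value is deemed equivalent to every value --- so one cannot simply chain the per-step facts $\sigma_i\eq_{\rho_i,\rho_{i+1}}\sigma_{i+1}$. The argument must instead lean on the fact that, along a single run in a persistent high context, the low-visible skeleton evolves only in ways that do compose: the $L$-labelled $pc$-frames and the call-frames strictly below the boundary frame $\mu$ are left literally unchanged by every high-$pc$ step, and the only perturbation inside the retained prefix is to $\mu$ itself. The crux of the write-up is therefore to verify, case by case against Lemma~\ref{lem:confinement} and including the frame-pushing/popping bytecodes \emph{call}, \emph{construct}, \emph{ret} and \emph{throw}, that each high-$pc$ bytecode touches $\mu$ (and the prefix below it) only in an $\eq$-composable fashion --- e.g.\ a low register being raised --- and never moves the boundary node or writes into a frame strictly below $\mu$. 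Once that bookkeeping is pinned down, the induction goes through mechanically.
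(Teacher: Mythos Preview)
Your outline is correct and matches the paper: induction on $n$, with the Confinement Lemma as the single-step engine; the $pc$-stack part follows because a high-$pc$ step can only push or pop $H$-labelled frames, leaving the $L$-labelled sub-sequence literally unchanged; and the call-stack part is the real work because $\eq$ is not transitive.

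There is, however, a concrete bug in the invariant you propose for the call-stack composition. You claim that a high-$pc$ step ``never writes into a frame strictly below $\mu$''. That is false: \texttt{put-scoped-var} follows the current scope chain, which (via an activation object created by \texttt{create-activation} in a low context long ago) can reach a register in an arbitrarily old call-frame, in particular one strictly below $\mu$. The Confinement Lemma's case for \texttt{put-scoped-var} handles exactly this --- if the target register was $L$ it becomes $\star$, if it was $H$ it stays $H$ --- but it does \emph{not} leave the frame untouched. A per-bytecode invariant of the form ``only $\mu$ is perturbed'' will therefore not go through.

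The paper avoids this by organizing the composition differently. It does not split by bytecode at all; instead, for each of the nine clauses of Definition~\ref{def:cf}, it fixes an arbitrary component (a register, a scope-chain head, etc.) in an arbitrary frame of the retained prefix and does a case analysis on the pair of labels $(\ell_0,\ell_{n-1})$ coming from the induction hypothesis, then asks what $\ell_n$ can be after one more confined step. The only transitions the Confinement Lemma allows under high $pc$ are $L\to L$ (value unchanged), $L\to\star$, $H\to H$, and $\star\to\star$, and one checks directly that each of these composes with the three cases of Definition~\ref{def:lval} (and analogously Definition~\ref{def:sc}) to yield $v_0\eq v_n$. That label-level argument is what you should write out; it subsumes the bytecode case split and is robust to writes below $\mu$.
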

\begin{proof}
To prove: $\rho_0 \sim \rho_n$. \\
Proof by induction on n. \\
Basis: $\rho_0 \sim \rho_0$\\
IH : $\rho_0 \sim \rho_{n-1}$\\
From Definition~\ref{def:pc}, $L$ labelled nodes of $\rho_0$ and
$\rho_{n-1}$ are equal. From Lemma~\ref{lem:confinement}, $\rho_{n-1}
\sim \rho_n$ so, $L$ labelled nodes of $\rho_{n-1}$ and $\rho_n$ are
equal. Thus, $L$ labelled nodes of $\rho_0$ and $\rho_n$ are equal and
by Definition~\ref{def:pc}, $\rho_0 \sim \rho_n$.\\ \\
To prove: $\sigma_0 \eq_{\rho_0,\rho_n} \sigma_n$.\\
Basis: $\sigma_0 \eq_{\rho_0,\rho_0} \sigma_0$. \\
IH: $\sigma_0 \eq_{\rho_0,\rho_{n-1}} \sigma_{n-1}$.\\
From Lemma~\ref{lem:confinement}, $\sigma_{n-1} \eq_{\rho_{n-1},\rho_n}
\sigma_{n}$. As $\forall (0 \leq i \leq n).\Gamma(\rho_i) = H$, the
lowest $H$-labelled node is the same (pc-stack grows monotonically) in
$\rho_0, \rho_{n-1}, \rho_n$. Let the call-frames pointed to by lowest
$H$-labelled node be $\mC_0, \mC_{n-1}, \mC_n$ with call-stack size
until the call-frames $ k$ (from Definition~\ref{def:cs} size of
the prefix is same and by transitivity of equality it is the same
for all the three cases). \\
$\forall \mu_0 \in \sigma_0,
\mu_{n-1} \in \sigma_{n-1}, \mu_n \in \sigma_n $ until $\mC_0,
\mC_{n-1}, \mC_n$ respectively
with sizes $k$, 
 the following conditions hold:
\begin{enumerate}
 \item $\forall(1 \leq i \leq k).((\mu_0[i].\#Registers) =
   (\mu_{n-1}[i].\#Registers))$  and $\forall (1 \leq i \leq k).((\mu_{n-1}[i].\#Registers) =
   (\mu_n[i].\#Registers))$. \\
   Thus, $\forall (1 \leq i \leq k).((\mu_0[i].\#Registers) =
   (\mu_n[i].\#Registers))$.

\item As the number of registers is the same, given by $r$, \\
   $\forall(1 \leq i \leq k).\forall r((\mu_0[i].Registers[r]) \eq
   (\mu_{n-1}[i].Registers[r]))$  and \\ $\forall (1 \leq i \leq
   k).\forall r.((\mu_{n-1}[i].Registers[r]) \eq
   (\mu_n[i].Registers[r]))$. \\
 Let $v^{\ell_0}_0$, $v^{\ell_{n-1}}_{n-1}$ and $v^{\ell_n}_n$ represents the values in the
 registers for $\sigma_0$, $\sigma_{n-1}$ and $\sigma_n$
 respectively. Then from Definition~\ref{def:lval} 
 \begin{enumerate}
 \item $\ell_0 = \ell_{n-1} = H$: In this case $\ell_n = H$ and
   $v^{\ell_0}_0 \eq v^{\ell_n}_n$(from
   Lemma~\ref{lem:confinement} and Definition~\ref{def:lval}).
 \item $\ell_0 = \ell_{n-1} = L$ and $v_0 = v_{n-1}$: In this case either:
   \begin{enumerate}
   \item $\ell_n = \star$
   \item $\ell_n = L$ and $v_{n-1} = v_n$: In this case, the value remains unchanged.
   \end{enumerate}
  Thus, from Definition~\ref{def:lval} $v^{\ell_0}_0 \eq v^{\ell_n}_n$.
 \item $\ell_0 = \star \vee \ell_{n-1} = \star$: Now the following cases
   arise:
   \begin{enumerate}
   \item $\ell_0 = \star$: $v^{\ell_0}_0 \eq v^{\ell_n}_n$.
   \item $\ell_{n-1} = \star$: By Lemma~\ref{lem:confinement} $l_n =
     \star$. Thus, $v^{\ell_0}_0 \eq v^{\ell_n}_n$. 
   \end{enumerate}
 \end{enumerate}

\item $\forall(1 \leq i \leq k).((\mu_0[i].\CFG) =
   (\mu_{n-1}[i].\CFG))$  and $\forall (1 \leq i \leq k).((\mu_{n-1}[i].\CFG) =
   (\mu_n[i].\CFG))$. \\
   Thus, $\forall (1 \leq i \leq k).((\mu_0[i].\CFG) =
   (\mu_n[i].\CFG))$.

\item $\forall(1 \leq i \leq k).((\mu_0[i].\Sigma) \eq 
   (\mu_{n-1}[i].\Sigma))$  and $\forall (1 \leq i \leq
   k).((\mu_{n-1}[i].\Sigma) \eq 
   (\mu_n[i].\Sigma))$. \\
  From Definition~\ref{def:sc}:
  \begin{enumerate}
  \item If $nil_0$ and $nil_{n-1}$ be the two scope chains, then due to confinement
      (Lemma~\ref{lem:confinement}) $\mu_n[i].\Sigma = nil$
       or $\mu_n[i].\Sigma = (S,\ell_n)$, where $\ell_n=H$. In either case $\forall (1 \leq i \leq
   k).((\mu_{0}[i].\Sigma) \eq  (\mu_n[i].\Sigma))$ from  Definition~\ref{def:sc}.
  \item If $((S_0,\ell_0):\Sigma_0)$, $((S _{n-1},\ell _{n-1}):\Sigma
    _{n-1})$ and $((S_n,\ell_n):\Sigma_n)$ be the three 
    scope-chains, then for$((S_0,\ell_0):\Sigma_0)$ and $((S _{n-1},\ell _{n-1}):\Sigma
    _{n-1})$ one of the following holds:
    \begin{enumerate}
    \item $\ell_0 = \star \vee \ell_{n-1} = \star$: Due to confinement
      (Lemma~\ref{lem:confinement}) and Definition~\ref{def:sc}
      $\ell_n=\star$. 
    \item $\ell_0 = \ell_{n-1} = H$:  Due to confinement
      (Lemma~\ref{lem:confinement}) and Definition~\ref{def:sc}
      $l_n = H$ .
    \item $\ell_0 = \ell_{n-1} = L \wedge S_0 \eq S_{n-1} \wedge \Sigma_0 \eq \Sigma_{n-1}$:  Due to confinement
      (Lemma~\ref{lem:confinement}) either one should hold:
      \begin{enumerate}
      \item $\ell_n = \star$: By Definition~\ref{def:sc}.
      \item $\ell_n = L \wedge S_{n-1} \eq S_{n} \wedge \Sigma_{n-1}
        \eq \Sigma_{n}$: No additions to the scope chain.
      \end{enumerate}
    \end{enumerate}
    Thus, $\forall (1 \leq i \leq
   k).((\mu_{0}[i].\Sigma) \eq  (\mu_n[i].\Sigma))$ from  Definition~\ref{def:sc}.
  \end{enumerate}

\item $\forall(1 \leq i \leq k).((\mu_0[i].\iota_r) =
   (\mu_{n-1}[i].\iota_r))$  and $\forall (1 \leq i \leq k).((\mu_{n-1}[i].\iota_r) =
   (\mu_n[i].\iota_r))$. \\
   Thus, $\forall (1 \leq i \leq k).((\mu_0[i].\iota_r) =
   (\mu_n[i].\iota_r))$.

\item $\forall(1 \leq i \leq k).(((\mu_0[i].\ell_c) =
   (\mu_{n-1}[i].\ell_c) = H) \vee (((\mu_0[i].\ell_c) =
   (\mu_{n-1}[i].\ell_c) = L) \wedge ((\mu_0[i].f_c) =
   (\mu_{n-1}[i].f_c))))$  and \\
  $\forall (1 \leq i \leq k).(((\mu_{n-1}[i].\ell_c) =
   (\mu_n[i].\ell_c) = H) \vee (((\mu_{n-1}[i].\ell_c) =
   (\mu_n[i].\ell_c) = L) \wedge ((\mu_{n-1}[i].f_c) =
   (\mu_{n}[i].f_c))))$. \\Then either:
  \begin{itemize}
    \item
   $\forall (1 \leq i \leq k).((\mu_0[i].\ell_c) =
   (\mu_n[i].\ell_c) = H)$ or
    \item $\forall (1 \leq i \leq k).(((\mu_{0}[i].\ell_c) =
   (\mu_n[i].\ell_c) = L) \wedge ((\mu_{0}[i].f_c) =
   (\mu_{n}[i].f_c)))$.
  \end{itemize}

%%%---
 \item $\forall(1 \leq i \leq k).((\mu_0[i].argcount) =
   (\mu_{n-1}[i].argcount))$  and $\forall (1 \leq i \leq k).$ $((\mu_{n-1}[i].argcount) =
   (\mu_n[i].argcount))$. \\
   Thus, $\forall (1 \leq i \leq k).((\mu_0[i].argcount) =
   (\mu_n[i].argcount))$.

 \item $\forall(1 \leq i \leq k).((\mu_0[i].getter) =
   (\mu_{n-1}[i].getter))$  and $\forall (1 \leq i \leq k).((\mu_{n-1}[i].getter) =
   (\mu_n[i].getter))$. \\
   Thus, $\forall (1 \leq i \leq k).((\mu_0[i].getter) =
   (\mu_n[i].getter))$.

 \item $\forall(1 \leq i \leq k).((\mu_0[i].dReg) =_{\beta}
   (\mu_{n-1}[i].dReg))$  and $\forall (1 \leq i \leq k).((\mu_{n-1}[i].dReg) =_{\beta}
   (\mu_n[i].dReg))$. \\
   Thus, $\forall (1 \leq i \leq k).((\mu_0[i].dReg) =_{\beta}
   (\mu_n[i].dReg))$.

\end{enumerate}
From Definition~\ref{def:cf} and Definition~\ref{def:cs}, $\sigma_0 \eq_{\rho_0,\rho_n} \sigma_n$.
\end{proof}

\begin{myCor}
\label{cor:Evolution_heap}
If $\langle\iota_0, \theta_0, \sigma_0,\rho_0
\rangle~\leadsto^{\star}~\langle\iota_n, \theta_n, \sigma_n,
\rho_n \rangle$ and 
$\forall (0 \leq i \leq n). \Gamma(!\rho_i) = H$, 
%$\Gamma(!\rho_n) = L$
then $\theta_0 \eq \theta_n$
\end{myCor}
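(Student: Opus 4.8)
The plan is to prove the corollary by induction on the number of steps $n$, using the Confinement Lemma (Lemma~\ref{lem:confinement}) on each individual step and chaining the resulting heap equivalences. The one non-trivial ingredient I would need beyond Lemma~\ref{lem:confinement} is a routine transitivity fact about heap equivalence in the special setting where all the partial bijections in play are \emph{diagonal} (of the form $\{(a,a)\mid a\in\theta\}$, possibly enlarged by a diagonal entry $(x,x)$ for a freshly allocated location $x$). Concretely: if $\theta \eq \theta'$ w.r.t.\ a diagonal bijection $\beta$ and $\theta' \eq \theta''$ w.r.t.\ a diagonal bijection $\beta'\supseteq\beta$, then $\theta \eq \theta''$ w.r.t.\ $\beta$. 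I would prove this sub-fact by unfolding Definitions~\ref{def:heap}, \ref{def:obj} and~\ref{def:lval} and observing that (i) the WebKit heap only grows, so $\mathrm{dom}(\theta_0)$ is contained in every later heap's domain and the relevant bijections are nested; (ii) heap-resident labeled values never carry the label $\star$ (as already noted in the text just before Definition~\ref{def:heap}), so a labeled value equivalence between two heap values reduces to ``both labels $H$'' or ``both labels $L$ and the underlying values diagonally related'', both of which compose; and (iii) an $H$-structure-labelled object is equivalent to anything, and an $L$-structure-labelled object forces the same structure label and matching property names on the other side, so object equivalence also composes through the diagonal.

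With that in hand, the induction is straightforward. For $n=0$, reflexivity of $\eq$ with $\beta=\{(a,a)\mid a\in\theta_0\}$ gives $\theta_0\eq\theta_0$. For the inductive step, write the run as $\langle\iota_0,\theta_0,\sigma_0,\rho_0\rangle \leadsto^{n-1} \langle\iota_{n-1},\theta_{n-1},\sigma_{n-1},\rho_{n-1}\rangle \leadsto \langle\iota_n,\theta_n,\sigma_n,\rho_n\rangle$ with $\Gamma(!\rho_i)=H$ for every $i$. The induction hypothesis yields $\theta_0\eq\theta_{n-1}$ w.r.t.\ the diagonal $\beta_0$ on $\theta_0$. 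Since $\Gamma(!\rho_{n-1})=H$, applying Lemma~\ref{lem:confinement} to the final step gives $\theta_{n-1}\eq\theta_n$ w.r.t.\ the diagonal $\beta_{n-1}$ on $\theta_{n-1}$ (enlarged by one diagonal entry if that step allocates a fresh location); because heaps only grow, $\beta_0\subseteq\beta_{n-1}$. The transitivity sub-fact then delivers $\theta_0\eq\theta_n$ w.r.t.\ $\beta_0$, which is exactly the statement. Note that no fresh case analysis over the 69 bytecodes is required here: all of that work is already encapsulated in the Confinement Lemma.

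The main obstacle I anticipate is precisely the bookkeeping of the partial bijection across the chain of steps: Lemma~\ref{lem:confinement} hands back equivalence with respect to a \emph{re-chosen} (and occasionally enlarged) bijection at each step, so the argument only goes through because high-$pc$ steps never modify low-labelled heap objects and always stamp newly allocated or newly written objects with structure label $H$ (making them vacuously equivalent), and because heap values never become $\star$. If those two invariants were not available, the diagonal bijections could not be composed cleanly and one would have to track a genuinely growing bijection and re-verify equivalence after each enlargement. Everything else is routine.
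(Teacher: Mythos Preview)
Your proposal is correct and follows essentially the same approach as the paper: induction on $n$, invoking the Confinement Lemma for the single step, and then a transitivity-style argument for heap equivalence under diagonal bijections, split by whether the object's structure label is $H$ (vacuous) or $L$ (pointwise). The paper performs that transitivity case analysis inline (separately for ordinary and function objects) rather than isolating it as a sub-fact, and is somewhat looser than you are about tracking which diagonal $\beta$ is in play at each step, but the content is the same.
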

\begin{proof}
By induction on n. \\
Basis: $\theta_0 \eq \theta_0$ by Definition~\ref{def:heap}.\\
IH: $\theta_0 \eq \theta_{n-1}$.\\
From IH and Definition~\ref{def:heap}, $\forall(a,b) \in
\beta.(\theta_0(a) \eq \theta_{n-1}(b)$.
From Lemma~\ref{lem:confinement}, $\theta_{n-1} \eq \theta_n$. Thus,
$\forall(b,c) \in \beta.(\theta_{n-1}(b) \eq \theta_n(c)$\\
As $(a,b) \in \beta$ and $(b,c) \in \beta$, we have $(a,c) \in \beta$
because $\beta$ is an identity bijection. Thus, if $\forall(a,c) \in \beta.(\theta_{0}(a) \eq
\theta_n(c)$, then $\theta_0 \eq \theta_n$.
If $\theta_0(a)$ and $\theta_{n-1}(b)$ contain an ordinary object,
then for their respective structure labels $\ell_s$ and $\ell_s'$, either:
\begin{itemize}
\item $\ell_s = \ell_s' = H$: If $\ell_s' = H$,
then $\ell_s'' = H$ by Definition~\ref{def:obj}, 
where $\ell_s''$ is the structure label of the object in
$\theta_n(c)$. Thus, $\theta_0(a) \eq \theta_n(c)$.
\item $\ell_s = \ell_s' = L$: $[p_0,\ldots,p_n] = [p_0',\ldots,p_m']$
($n = m$), $\forall i.\, v_i \sim^\beta v_i'$, and
$a^{\ell_p} \sim^\beta a^{\ell_p'}$ for respective properties in
$\theta_0(a)$ and $\theta_{n-1}(b)$. \\
If $\ell_s' = L$, then $\ell_s'' = L$ and $[p_0',\ldots,p_m'] = [p_0'',\ldots,p_k'']$
($m = k$), $\forall i.\, v_i' \sim^\beta v_i''$, and
$a^{\ell_p'} \sim^\beta a^{\ell_p''}$ for respective properties in
$\theta_{n-1}(b)$ and $\theta_{n}(c)$. \\
$\ell_s = \ell_s'' = L$, $[p_0,\ldots,p_n] = [p_0'',\ldots,p_k'']$ ($n
= k$). If $\forall i.\, v_i \sim^\beta v_i'$ and $\forall i.\, v_i'
\sim^\beta v_i''$, then either $\ell_i = \ell_i' = \ell_i'' = H$ or 
$\ell_i = \ell_i' = \ell_i'' = L$ and $r_i = r_i' = r_i'' =
n$. Also, as $a^{\ell_p} \eq a^{\ell_p'}$ and  $a^{\ell_p'} \eq
a^{\ell_p''}$, we have  $a^{\ell_p} \eq a^{\ell_p''}$. Thus, 
by Definition~\ref{def:obj} $\theta_0(a) \eq \theta_n(c)$.
\end{itemize}
If $\theta_0(a)$ and $\theta_{n-1}(b)$ contain a function object,
then for their respective structure labels $\ell_s$ and $\ell_s'$, either:
\begin{itemize}
\item $\ell_s = \ell_s' = H$: If $\ell_s' = H$,
then $\ell_s'' = H$ by Definition~\ref{def:obj}, 
where $\ell_s''$ is the structure label of the function object in
$\theta_n(c)$. Thus, $\theta_0(a) \eq \theta_n(c)$.
\item $\ell_s = \ell_s' = L$: $\ell_s'' = L$ is the structure label of the function object in
$\theta_n(c)$. Thus, $N \eq N''$ from the above result for
objects. The CFGs $f =^\beta f' =^\beta f''$ and the scope chains
$\Sigma \eq \Sigma''$ by Corollary~\ref{cor:Evolution_callstacks}.
Thus, $\theta_0(a) \eq \theta_n(c)$.
\end{itemize}
Thus, $\theta_0 \eq \theta_n$.
\end{proof}

%%% SUPPORTING LEMMA 1
\begin{myLemma}[Supporting Lemma 1]
\label{lem:sup1}
Suppose \\
$\langle\iota, \theta_1, \sigma_1,\rho_1 \rangle~\leadsto~\langle\iota_1',
\theta_1', \sigma_1',\rho_1' \rangle$, \\
$\langle\iota,  \theta_2, \sigma_2,\rho_2 \rangle~\leadsto~\langle\iota_2',
\theta_2', \sigma_2',\rho_2' \rangle$, \\
$\rho_1 \sim \rho_2$, 
$\Gamma(!\rho_1) = \Gamma(!\rho_2) = L$, $\Gamma(!\rho_1') = \Gamma(!\rho_2')$
and $(\sigma_1 \eq_{\rho_1, \rho_2} \sigma_2) \wedge (\theta_1 \eq
\theta_2)$\\
then $\rho_1' \sim \rho_2'$, and $\exists \beta': ((\beta' \supseteq \beta) \wedge (\sigma_1'
\sim^{\beta'}_{\rho_1', \rho_2'} \sigma_2') \wedge (\theta_1'
\sim^{\beta'} \theta_2'))$. 
\end{myLemma}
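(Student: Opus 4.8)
The argument will be a single-step case analysis on the common current instruction $\iota$, driven by two structural observations that follow from the hypothesis $\Gamma(!\rho_1)=\Gamma(!\rho_2)=L$. First, by monotonicity of the pc-stack every node of $\rho_1$ and of $\rho_2$ carries label $L$; hence (i) the deferred-NSU side condition $\Gamma(!\sigma_i(\dst))\geq\Gamma(!\rho_i)$ is always true, so \emph{no instruction introduces a fresh $\star$ into a register in this regime}, and (ii) in Definition~\ref{def:cs} the cut $\sigma_i'$ is the whole stack $\sigma_i$, so $\sigma_1\eq_{\rho_1,\rho_2}\sigma_2$ unfolds to $|\sigma_1|=|\sigma_2|$ plus frame-wise equivalence of the entire stacks. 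Second, since $\rho_1\sim\rho_2$ and all nodes are $L$-labelled, the two pc-stacks coincide up to the (ignored) call-frame fields; therefore the push/join test inside $\rho.push(\cdot)$ and the pop test inside $\isIPD(\cdot)$ are resolved identically in the two runs, and — because the top node and everything below it is $L$ — any pop exposes another $L$ node or the empty stack. I would establish these facts once and reuse them in every case.

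\noindent The instructions then fall into groups handled uniformly. \emph{(a) Purely local, non-branching steps} (\emph{prim}, \emph{mov}, \emph{typeof}, \emph{instanceof}, \emph{call-put-result}, \emph{get-scoped-var}, \emph{next-pname}, \emph{catch}, and the \emph{resolve}-family when the name is found): case on the labels of the operands read; if all are $L$ then by Definition~\ref{def:lval} their underlying values are equal, so the result and its label (a join of $L$'s) agree across runs, and if some operand is not $L$ the result label is $H$ or $\star$ and equivalence follows from clause (1)/(2) of Definition~\ref{def:lval}. Only the $\dst$ register of the top frame changes, so $\sigma_1'\eq_{\rho_1',\rho_2'}\sigma_2'$ is preserved with $\beta'=\beta$; $\theta$ and $\rho$ are untouched. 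The scope-chain-manipulating steps (\emph{push-scope}, \emph{pop-scope}, \emph{jmp-scope}, \emph{put-scoped-var}, \emph{create-activation}) are treated the same way, appealing to Definition~\ref{def:sc}, and still with $\beta'=\beta$. \emph{(b) Heap-mutating or heap-allocating steps} (\emph{put-by-id}, \emph{put-getter-setter}, \emph{del-by-id}; \emph{new-object}, \emph{new-func}, \emph{create-arguments}, \emph{create-this}): for the mutating ones the guard $\star\notin\Gamma(!\sigma_i(\val))$ holds in both runs, and I case on the target object's structure label $\ell_s$ — if $\ell_s=H$ in both (which happens simultaneously by $\theta_1\eq\theta_2$ and Definition~\ref{def:obj}) the object stays $H$-equivalent; if $\ell_s=L$ the objects agree field by field and the written value is equal when low and non-$L$-labelled on both sides otherwise, so $\theta_1'\eq\theta_2'$ by Definitions~\ref{def:obj}--\ref{def:heap} with $\beta'=\beta$. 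For the allocating ones I set $\beta'=\beta\cup\{(x_1,x_2)\}$, where $x_i$ is the location freshly used in run $i$; freshness gives $x_i\notin\beta$, and the new objects are equivalent under $\beta'$ because their structure and prototype-pointer labels are the current $pc=L$ (or are equal high labels) and their prototype pointer is derived from $\beta$-equivalent data. This group needs the auxiliary ``monotonicity'' fact that $X\sim^\beta Y$ and $\beta'\supseteq\beta$ extending $\beta$ only by pairs of locations fresh for $X,Y$ imply $X\sim^{\beta'}Y$, for $X,Y$ ranging over labeled values, objects, heaps, scope chains, frames and call-stacks; I would prove it by simultaneous induction on these structures.

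\noindent \emph{(c) Branching steps} (\emph{jfalse}, \emph{loop-if-less}, \emph{get-pnames}, \emph{call}, \emph{construct}, \emph{call-eval}): since both runs take a step, the guard (resp. function) register carries equal labels in the two runs — an $L$/$H$ pair is never value-equivalent, and a $\star$ would get stuck. If that label is $L$, the underlying value is equal, both runs take the same CFG edge (so $\iota_1'=\iota_2'$), push or join the \emph{same} $L$-labelled pc node, and for the \emph{call}-family build the new callee frame from frame-wise equivalent data (arguments copied from equivalent registers, same CFG up to $\beta$-renaming by Definition~\ref{def:obj}, $\beta$-equivalent scope chains), so full state equivalence including $\iota$ is recovered. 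If that label is $H$, then the hypothesis $\Gamma(!\rho_1')=\Gamma(!\rho_2')$ forces both runs to end with an $H$ top node; such a node does not affect $\sim$-equivalence of the pc-stacks (Definition~\ref{def:pc} compares only $L$ nodes), the call-stacks stay equivalent because in Definition~\ref{def:cs} the cut now falls at the freshly $H$-labelled node whose prefix is exactly the old (already equivalent) stack, and the instruction pointers are allowed to differ — this is precisely why the conclusion of the lemma does not assert $\iota_1'=\iota_2'$ — with $\beta'=\beta$ and $\theta$ unchanged.

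\noindent \emph{(d) Control-returning and exceptional steps} (\emph{ret}, \emph{throw}, the \emph{resolve}-family when the name is \emph{not} found, and the exception successor of \emph{call}) are the delicate cases and I expect them to be the main obstacle. For \emph{ret} I must additionally track the $\mathit{getter}$/$\mathit{dReg}$ fields: when $\mathit{getter}$ is set the step writes back into a register of the \emph{caller} frame, and, as in clause \emph{ret} of the confinement proof (Lemma~\ref{lem:confinement}), I split on whether the old label of that register is $L$, $H$ or $\star$ to see that equivalence is retained. For \emph{throw}, $\mathit{throwException}$ pops an a~priori unbounded number of call-frames until it reaches the frame owning the handler for $\iota$; I must argue (via the IPD/SEN discipline, exactly as in the confinement proof) that this frame is the one pointed to by the top pc node, that the two runs therefore stop at corresponding, hence equivalent, frames, and that since the pc-stack is \emph{not} popped we still have $\rho_1'\sim\rho_2'$ and, by re-cutting at the same node, $\sigma_1'\eq_{\rho_1',\rho_2'}\sigma_2'$. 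Collecting the per-case conclusions yields the witness $\beta'\supseteq\beta$ with $\rho_1'\sim\rho_2'$, $\sigma_1'\sim^{\beta'}_{\rho_1',\rho_2'}\sigma_2'$ and $\theta_1'\sim^{\beta'}\theta_2'$, as required. (Corollaries~\ref{cor:Evolution_callstacks} and~\ref{cor:Evolution_heap} are not used here; they feed the complementary ``both runs diverge into a high context'' lemma.)
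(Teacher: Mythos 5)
Your proposal is correct and follows essentially the same route as the paper's proof: a case analysis on the instruction, splitting on operand labels for register writes, on structure labels for heap writes, extending $\beta$ by the pair of freshly allocated locations, and using the guard-label/IPD discipline for branches, returns and throws. Your up-front observations (all pc-stack nodes are $L$-labelled, the call-stack cut is the whole stack, deferred NSU never fires when $\Gamma(!\rho)=L$) and the explicit $\beta$-monotonicity lemma are just a cleaner factoring of facts the paper uses implicitly in each case.
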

\begin{proof}
Every instruction executes \emph{isIPD} at the end of the operation. 
If $\iota_i'$ is the IPD corresponding to the $!\rho_i.\ipd$, then it pops
the first node on the pc-stack. As $\rho_1 \sim \rho_2$ and
$\Gamma(!\rho_1)  = \Gamma(!\rho_2)$,
$\iota_i'$ would either pop in both the runs or in none. Thus,
$\rho_1' sim \rho_2'$. For instructions that push (branch), we explain
in respective instructions.\\
Proof by case analysis on the instruction type:
\begin{enumerate}
\item \emph{prim:} No new object is created, so $\beta' = \beta$. \\
  As $\sigma_1 \eq_{\rho_1,\rho_2} \sigma_2$, so $!\sigma_1 \eq\,
  !\sigma_2$ and $!\sigma_1(\src_i) \eq\, !\sigma_2(\src_i)$ for $i
  = 1,2$. Case analysis on the definition of $\eq$ for $\src_i$.
\begin{itemize}
\item If $(\Gamma(!\sigma_1(\src_1)) = \star~\vee$
  $\Gamma(!\sigma_1(\src_2)) = \star~\vee$
  $\Gamma(!\sigma_2(\src_1)) = \star~\vee$
  $\Gamma(!\sigma_2(\src_2)) = \star)$, then 
  $\Gamma(!\sigma_1(\dst)) = \star~\vee \Gamma(!\sigma_1(\dst)) 
  = \star$. Hence, $!\sigma_1(\dst) \eq\,!\sigma_2(\dst)$
  by Definition~\ref{def:lval}.
\item If $\Gamma(!\sigma_1(src_1)) = \Gamma(!\sigma_1(src_2)) = H$ and 
   $\Gamma(!\sigma_2(src_1)) = \Gamma(!\sigma_2(src_2)) = H$
   $\Gamma(!\sigma_1(\dst)) = \Gamma(!\sigma_2(\dst)) = H$. 
   So, $!\sigma_1(\dst) \eq\,!\sigma_2(\dst)$
   by Definition~\ref{def:lval}.
\item If $!\sigma_1(src_1)=\,!\sigma_2(src_1)~\wedge~\Gamma(!\sigma_1(src_2))
  = H~\wedge~\Gamma(!\sigma_2(src_2)) = H $, then 
   $\Gamma(!\sigma_1(\dst)) = \Gamma(!\sigma_2(\dst)) = H$.
  So, $!\sigma_1(\dst) \eq\,!\sigma_2(\dst)$
  by Definition~\ref{def:lval}. \\
  Symmetrical reasoning for
   $!\sigma_1(src_2)=\,!\sigma_2(src_2) ~\wedge~\Gamma(!\sigma_1(src_1))
   = H~\wedge~$ $\Gamma(!\sigma_2(src_1)) = H$.
\item $!\sigma_1(src_1)=\,!\sigma_2(src_1)~\wedge~
  !\sigma_1(src_2) =\,!\sigma_2(src_2) $: \\
  $!\sigma_1(\dst)=\,!\sigma_2(\dst)$. So, $!\sigma_1(\dst) \eq\,!\sigma_2(\dst)$
  by Definition~\ref{def:lval}. 
\end{itemize}
Only $\dst$ changes in the top call-frame of both the call-stacks. 
Thus, by Definition~\ref{def:cf}, $!\sigma'_1 \eq\,!\sigma'_2$. Other call-frames
in $\sigma_1'$ and $\sigma_2'$ are unchanged. By 
Definition~\ref{def:cs}, $\sigma'_1 \sim^{\beta'}_{\rho_1',\rho_2'} \sigma'_2$. \\
$\theta_1 = \theta'_1$ and $\theta_2 = \theta'_2$, so, $\theta'_1 \sim^{\beta'}
\theta'_2$.

\item \emph{mov:} Similar reasoning as prim with single source.

\item \emph{jfalse:} No new object is created, so $\beta' = \beta$.
\begin{itemize}
\item $!\sigma_1(cond)=\,!\sigma_2(cond) \wedge
  \Gamma(!\sigma_1(cond)) = \Gamma(!\sigma_2(cond)) = L$: 
  $L$ is the label to be pushed on $\rho$.
\item $\Gamma(!\sigma_1(cond)) = \Gamma(!\sigma_2(cond)) = H$: 
  $H$ is the label to be pushed on $\rho$.
\end{itemize}
The IPD of $\iota$ would be the same as we have same CFG in both the
cases. If the IPD is SEN, then we join the label of $!\rho_i$ with the
label obtained above, which is the same in both the runs. Thus,
$\Gamma(!\rho_1') = \Gamma(!\rho_2')$. 
Because $\rho_1 \sim \rho_2$, $\rho_1' \sim \rho_2'$. \\
If the IPD is not SEN, then it is some other node in the
same call-frame. Thus the $\ipd$ field is also the same. 
The $\mH$ field is $\false$ in both the
cases. Thus, the pushed node is the same in both the cases and hence,
$\rho_1' \sim \rho_2'$. As, $\Gamma(!\rho_1') = \Gamma(!\rho_2'')$,
either $\iota_1' = \iota_2' = \IPD(\iota)$ or $\iota_1'$ and
$\iota_2'$ may or may not be equal.\\
$\sigma'_1 = \sigma_1 \sim^{\beta'}_{\rho_1',\rho_2'} \sigma_2 = \sigma'_2$.
$\theta'_1 = \theta_1 \sim^{\beta'} \theta_2 = \theta'_2$.

\item \emph{loop-if-less:} Similar reasoning as jfalse.

\item \emph{type-of:} Similar to mov.
%   $(!\sigma_1(src) \eq\,!\sigma_2(src))$ since $\sigma_1 \eq
%   \sigma_2$. Therefore, $!\sigma_1(\dst) \eq\,!\sigma_2(\dst)$ by
%   Definition 1(a). Thus, by Definition 3, $!\sigma'_1 \eq\,!\sigma'_2$. Other call-frames
% are unchanged and by Definition 4, $\sigma'_1 \eq
% \sigma'_2$. $\theta_1 = \theta'_1$ and $\theta_2 = \theta'_2$, so, $\theta'_1 \eq \theta'_2$.

\item \emph{instance-of:} No new object is created, so $\beta' = \beta$. \\
  The label of the value in the $\dst$ is the label
  of the context joined with the label of all the prototype chain
  pointers traversed. As $!\sigma_1(\mathit{value}) \eq
  !\sigma_2(\mathit{value})$, where $\ell_s$ and $\ell_s'$ are the
  structure labels of objects pointed to by $!\sigma_1(\mathit{value})$
  and $!\sigma_2(\mathit{value})$ respectively, then by Definition~\ref{def:obj}:
\begin{itemize}
\item If $\ell_s = \ell_s' = H$, then $\Gamma(!\sigma_1'(dst)) = H$
   and  $\Gamma(!\sigma_2'(dst)) = H$. So, $!\sigma_1'(\dst) \eq\,
  !\sigma_2'(\dst)$ from Definition~\ref{def:lval}. 
\item If $\ell_s = \ell_s' = L$, then the objects have similar
  properties and prototype chains. If it is not an instance and none
  of traversed prototype chain and objects are $H$, then
  $\Gamma(!\sigma_1'(dst)) = \Gamma(!\sigma_2'(dst)) = L$ and
  $\false$. Else if it is present it has $\true$. So, $!\sigma_1'(\dst) \eq\,
  !\sigma_2'(\dst)$ from Definition~\ref{def:lval}.
  If any one of traversed prototype chain and objects are $H$, then
  $\Gamma(!\sigma_1'(dst)) = \Gamma(!\sigma_2'(dst)) = H$. 
  So, $!\sigma_1'(\dst) \eq\,
  !\sigma_2'(\dst)$ from Definition~\ref{def:lval}.
\end{itemize}
Only $\dst$ changes in the top call-frame of both the call-stacks. 
Thus, by Definition~\ref{def:cf}, $!\sigma'_1 \eq\,!\sigma'_2$. Other call-frames
are unchanged and by Definition~\ref{def:cs}, $\sigma'_1 \sim^{\beta'}_{\rho_1',\rho_2'} \sigma'_2$. \\
$\theta_1 = \theta'_1$ and $\theta_2 = \theta'_2$, so, $\theta'_1 \sim^{\beta'}
\theta'_2$.

\item \emph{enter:} No new object is created, so $\beta' = \beta$. \\
$\sigma'_1 = \sigma_1 \sim^{\beta'}_{\rho_1',\rho_2'} \sigma_2 = \sigma'_2$.
$\theta'_1 = \theta_1 \sim^{\beta'} \theta_2 = \theta'_2$.

\item \emph{ret:}  No new object is created, so $\beta' = \beta$. \\
  Since $\sigma_1 \eq \sigma_2$ so only two cases arise for the getter
  flag.
  \begin{itemize}
  \item $!\sigma_1.getter = !\sigma_2.getter = false$: $\sigma_1'$ is same as
  $\sigma_1$ with $!\sigma_1$ popped. Similarly, $\sigma_2'$ is same
  as $\sigma_2$ with $!\sigma_2$ popped. As other call-frames are
  unchanged by Definition~\ref{def:cs}, $\sigma_1'
  \sim^{\beta'}_{\rho_1',\rho_2'}
  \sigma_2'$.
\item $!\sigma_1.getter = !\sigma_2.getter = true$: only resgister
  which changes is the $\sigma_1'(!\sigma_1.dReg$ and
  $\sigma_2'(!\sigma_2.dReg)$. Now, if
  $\Gamma(!\sigma_1'(!\sigma_1.dReg)) =
  \Gamma(!\sigma_2'(!\sigma_2.dReg)) = H$ then
  $!\sigma_1'(!\sigma_1.dReg) \eq !\sigma_2'(!\sigma_1.dReg)$ from
  defintion~\ref{def:lval}. And if $\Gamma(!\sigma_1'(!\sigma_1.dReg))
  = \Gamma(!\sigma_2'(!\sigma_2.dReg)) = L$ then
  $!\sigma_1'(!\sigma_1.dReg)) = !\sigma_1(res)
  \sigma_2'(!\sigma_2.dReg) = !\sigma_2(res)$ and $!\sigma_1(res) \eq
  !\sigma_2(res)$.
  \end{itemize}
 $\theta'_1 = \theta_1 \sim^{\beta'} \theta_2 = \theta'_2$.

\item \emph{end:} No $\sigma_i'$ and $\theta_i'$.

\item \emph{call:} No new object is created, so $\beta' = \beta$. \\ 
  Pushes the same node on both $\rho$s (similar to jfalse). The only
difference is the $\mH$ field. As the CFGs are 
same, if it has an associated exception handler, we set the
$\mH$ field to $\true$ in both the runs. Else, it is $\false$.
Thus, $!\rho_1' =\, !\rho_2'$ is the node pushed on $\rho$ and hence,
$\rho_1' \sim \rho_2'$. \\
As $!\sigma_1(func) \eq\, !\sigma_2(func)$, if:
 \begin{itemize}
 \item $(\Gamma(!\rho_1')= H):$
   As call-frames until $!\sigma_1$ and $!\sigma_2$ remain unchanged,
   which correspond to the $\mC$ field in the lowest $H$-labelled
   node and $\sigma_1 \sim^{\beta}_{\rho_1,\rho_2} \sigma_2$, by
   Definition~\ref{def:cs}, $\sigma_1'
   \sim^{\beta'}_{\rho_1',\rho_2'}\sigma_2'$. 
 \item $(\Gamma(!\rho_1') = L):$ Registers created in the new call-frame
   contain \emph{undefined} with label $L$ and, as $\theta_1 \eq
   \theta_2$ so the function objects $N \eq N'$ 
   implying $!\sigma_1'.\CFG = !\sigma_2'.\CFG$ and
   $!\sigma_1'.\Sigma \eq \,!\sigma_2'.\Sigma$, also return addresses
   are the same and the callee is the same. So $!\sigma_1' \eq\,!\sigma_2'$. Other
   call-frames are unchanged so, $\sigma_1' \sim^{\beta'}_{\rho_1',\rho_2'} \sigma_2'$.
 \end{itemize}
$\theta'_1 = \theta_1 \sim^{\beta'} \theta_2 = \theta'_2$.

\item \emph{call-put-result:} Similar to move.

\item \emph{call-eval:}
 $\rho_1' \sim \rho_2'$: Similar to op-call.\\
 In strict mode, it pushes a node on the scope-chain with label
 $L$. The pushed nodes are low-equivalent. Thus, $!\sigma_1'.\Sigma
 \eq \,!\sigma_2'.\Sigma$ by Definition~\ref{def:sc}.
 In non-strict mode, it does not push anything and is similar to
 call. Thus,  $\sigma_1' \sim^{\beta'}_{\rho_1',\rho_2'} \sigma_2'$
 and $\theta'_1 = \theta_1 \sim^{\beta'} \theta_2 = \theta'_2$.

\item \emph{create-arguments:} 
  Let the argument object be created at $x$ and $y$ in $\theta_1$ and
  $\theta_2$, then $\beta' = \beta \cup (x,y)$.
  $\Gamma(\sigma_1'(\dst))=\Gamma(\sigma_1'(\dst))=L$ and as
  $!\sigma_1 \eq !\sigma_2$, the objects are low-equivalent.
  Thus, $!\sigma_1' \sim^{\beta'}\,!\sigma_2$ by
  Definition~\ref{def:cf} and $\sigma_1' \sim^{\beta'}_{\rho_1',\rho_2'}\sigma_2'$ by
  Definition~\ref{def:cs}. Also, $\theta_1' \sim^{\beta'}\theta_2'$
  by Definition~\ref{def:heap} as the objects are low-equivalent.

\item \emph{new-func:} 
  Let the function object be created at $x$ and $y$ in $\theta_1$ and
  $\theta_2$, then $\beta' = \beta \cup (x,y)$. Function objects are
  low-equivalent as $!\sigma_1.\Sigma \eq !\sigma_2.\Sigma$ and
  $!\sigma_1(\mathit{func}) \eq !\sigma_2(\mathit{func})$.
  $\sigma_1'(\dst) \eq \sigma_2'(\dst)$ by Definition~\ref{def:lval}.  
  Thus, $!\sigma_1' \sim^{\beta'}\,!\sigma_2$ by
  Definition~\ref{def:cf} and $\sigma_1' \sim^{\beta'}_{\rho_1',\rho_2'}\sigma_2'$ by
  Definition~\ref{def:cs}. Also, $\theta_1' \sim^{\beta'}\theta_2'$
  by Definition ~\ref{def:heap} as the objects are low-equivalent.

%\item \emph{new-func-expr:

\item \emph{create-activation:} Similar to create-arguments.

\item \emph{construct:} Similar to call.

\item \emph{create-this:} Similar to create-this.

\item \emph{new-object:} Similar to create-arguments.

\item \emph{get-by-id:} No new object is created, so $\beta' = \beta$. \\
  As $!\sigma_1(\mathit{base}) \eq\, !\sigma_2(\mathit{base})$, either
  the objects have the same properties or are labelled $H$ because of
  Definition~\ref{def:obj}. In case of data property, either
  $\Gamma(\sigma_1'(\dst)) = \Gamma(\sigma_2'(\dst)) = H$ or
  $\Gamma(\sigma_1'(\dst)) = \Gamma(\sigma_2'(\dst)) = L$ and value of
  \emph{prop} is the same.  So, by Definition~\ref{def:lval}
  $!\sigma_1'(\dst) \eq\, !\sigma_2'(\dst)$.  
  
  In case of an accessor property, only $\dst$ changes in the top
  call-frame of both the call-stacks, and $\sigma_1'(dst)) \eq
  \sigma_2'(dst))$ since $\sigma_1 \eq \sigma_2$ and $\theta_1 \eq
  \theta_2$ Thus, by Definition~\ref{def:cf}, $!\sigma'_1
  \eq\,!\sigma'_2$. Other call-frames are unchanged and by
  Definition~\ref{def:cs}, $\sigma'_1 \sim^{\beta'}_{\rho_1',\rho_2'}
  \sigma'_2$. For $\rho_1' \sim \rho_2'$, reasoning is similar to
  \emph{call}.  $\theta_1 = \theta'_1$ and $\theta_2 = \theta'_2$, so,
  $\theta'_1 \sim^{\beta'} \theta'_2$.

\item \emph{put-by-id:} No new object is created, so $\beta' = \beta$. \\ 
   $\sigma_1' = \sigma_1 \sim^{\beta'}_{\rho_1',\rho_2'} \sigma_2 = \sigma_2'$. \\
   Because $\sigma_1 \sim^{\beta}_{\rho_1,\rho_2} \sigma_2$, 
   if \emph{value} is labelled $H$, then the properties created or
   modified will have label $H$, and structure labels of the
   respective objects will become $H$. 
   Else if \emph{value} is labelled $L$, then the properties created or
   modified will have same value and label $L$. Thus, the objects
   remain low-equivalent by Definition~\ref{def:obj} and hence, by
   Definition~\ref{def:heap}, $\theta'_1 \sim^{\beta'} \theta'_2$.

\item \emph{del-by-id:} No new object is created, so $\beta' = \beta$. \\
  If the deleted property is $H$ or if the structure label of the object is $H$ , 
  then $(\Gamma(\sigma_1'(\dst)) = \Gamma(\sigma_2'(\dst)) = H)$.
  Else if is labelled $L$, then $(\Gamma(\sigma_1'(\dst)) =
  \Gamma(\sigma_2'(\dst)) = L)$ and value is $\true$ or $\false$
  depending on whether the property is deleted or not. 
  $\sigma'_1 \sim^{\beta'}_{\rho_1',\rho_2'} \sigma'_2$ by Definition~\ref{def:lval},
  ~\ref{def:cf} and~\ref{def:cs}. If structure
  labels of the objects are $L$, they have same properties by
  Definition~\ref{def:obj}. If not, they have structure label as $H$. 
  Thus, objects remain low-equivalent by Definition~\ref{def:obj} and
  $\theta'_1 \sim^{\beta'} \theta'_2$ by Definition~\ref{def:heap}.

\item \emph{put-getter-setter:} Reasoning similar to put-by-id.

\item \emph{get-pnames:} No new object is created, so $\beta' =
  \beta$. \\
As $\sigma_1 \sim^{\beta}_{\rho_1,\rho_2} \sigma_2$, $!\sigma_1(base) \eq \,!\sigma_2(base)$ and
so are the objects (\emph{obj}1 and \emph{obj}2),
$\mathit{obj}1 \eq \mathit{obj}2$, as $\theta_1
\eq \theta_2$. Thus, the structure label of the object
 is either $H$ in both the runs or $L$
and have the same properties with values (Definition~\ref{def:obj}.
The IPD in both the cases is the same and so is the $\mC$ field. The
$mH$ field is set to $\false$. Thus, $\rho_1' \sim \rho_2'$.\\
For $\sigma_1' \sim^{\beta'}_{\rho_1',\rho_2'} \sigma_2'$, it is
similar to mov, but done for dst, i and size.\\
$\theta_1' = \theta_1 \sim^{\beta'} \theta_2 = \theta_2'$.

\item \emph{next-pname:} Similar to mov, but done for dst and base.

\item \emph{resolve:}
  No new object is created, so $\beta' = \beta$. \\
  If property is found in a $L$ object and the scope-chain node labels
  are also $L$, then the property value is the same as $!\sigma_1 \eq
  !\sigma_2$. If it is in $H$ object or any scope-chain node labels
  are $H$ or have a $\star$, then label of the property is $H$ or
  $\star$. Thus, $!\sigma_1'(\dst) \eq\, !\sigma_2'(\dst)$ by
  Definition~\ref{def:lval}. Thus, $\sigma_1' \sim^{\beta'}_{\rho_1',\rho_2'} \sigma_2'$.
  If property is not found in both runs, it is similar to throw. If
  property is not found in second run, then in the first run the property is
  in $H$ context. So, the exception thrown is also $H$. Until, the
  call-frame of $!\rho_2'.\ipd$, call-frames are unchanged, so
  $\sigma_1' \sim^{\beta'}_{\rho_1',\rho_2'} \sigma_2'$.
$\theta_1' = \theta_1 \sim^{\beta'} \theta_2 = \theta_2'$.

\item \emph{resolve-skip:} Similar to resolve.

\item \emph{resolve-global:} Similar to resolve.

\item \emph{resolve-base:} Similar to resolve.

\item \emph{resolve-with-base:} Similar to resolve.

\item \emph{get-scoped-var:} 
No new object is created, so $\beta' = \beta$. \\
 Reads \emph{index}th register in object in \emph{skip}th node in the
 scope-chain  and writes into $\dst$.
 As $(!\sigma_1.\Sigma \eq\,!\sigma_2.\Sigma)$, the value, if labelled $L$ is the same, 
 else is labelled $H$ or $\star$. By Definition~\ref{def:lval}, $!\sigma_1'(\dst)
 \eq\, !\sigma_2'(\dst)$ and by Definition~\ref{def:cf}, $!\sigma_1' \eq\,
 !\sigma_2'$. Thus, $\sigma_1'  \sim^{\beta'}_{\rho_1',\rho_2'} \sigma_2'$.
$\theta_1' = \theta_1 \sim^{\beta'} \theta_2 = \theta_2'$.

\item \emph{put-scoped-var:} 
No new object is created, so $\beta' = \beta$. \\
  Writes into the scope chain node the same value, if ``value'' is
  labelled $L$. If it is labelled $\star$ in any of the runs, scope
  chains remain equivalent. If value is $H$, it checks the label of
  register and puts the value with label $H$ or $\star$.  
  Thus, $(!\sigma_1'.\Sigma \eq\,!\sigma_2'.\Sigma)$ by
  Definition~\ref{def:sc} and $(!\sigma_1' \eq\,!\sigma_2')$ by Definition~\ref{def:cf}.
  By Definition~\ref{def:cs}, $\sigma_1' \sim^{\beta'}_{\rho_1',\rho_2'} \sigma_2'$.
$\theta_1' = \theta_1 \sim^{\beta'} \theta_2 = \theta_2'$.

\item \emph{push-scope:}
No new object is created, so $\beta' = \beta$. \\
Pushes in the scope-chain a node containing the object in
``scope'' with node label $L$. As $(!\sigma_1(scope) \eq\,!\sigma_2(scope))$ and
$(!\sigma_1.\Sigma \eq\,!\sigma_2.\Sigma)$, $(!\sigma_1'.\Sigma
\eq\,!\sigma_2'.\Sigma)$ by Definition~\ref{def:sc}.
 Registers and other call-frames remain the same, so, $\sigma_1'
\sim^{\beta'}_{\rho_1',\rho_2'} \sigma_2'$. 
$\theta_1' = \theta_1 \sim^{\beta'} \theta_2 = \theta_2'$.

\item \emph{pop-scope:}
No new object is created, so $\beta' = \beta$. \\
Pops a node from the scope chain if $\Gamma(!\sigma_1.\Sigma) =
\Gamma(!\sigma_1.\Sigma) \neq (\star \vee H)$. As 
$(!\sigma_1.\Sigma \eq\,!\sigma_2.\Sigma)$, so $(!\sigma_1'.\Sigma
\eq\,!\sigma_2'.\Sigma)$. 
Other registers remain the same, so, $\sigma_1' \sim^{\beta'}_{\rho_1',\rho_2'} \sigma_2'$.
$\theta_1' = \theta_1 \sim^{\beta'} \theta_2 = \theta_2'$.

\item \emph{jmp-scope:}
Similar to pop-scope.

\item \emph{throw:}
No new object is created, so $\beta' = \beta$. \\
  The property of IPD ensures that
  $!\sigma_1' = (!\rho_1.\mC)$ and $!\sigma_2' = (!\rho_2.\mC)$. 
  The $!\sigma_1'$ and $!\sigma_2'$ and the ones below them remain unchanged.
  Thus, $\sigma_1' \sim^{\beta'}_{\rho_1',\rho_2'} \sigma_2'$ by Definition~\ref{def:cs}. \\ 
  $\theta_1' = \theta_1 \sim^{\beta'} \theta_2 = \theta_2'$.

\item \emph{catch:} 
  Similar to mov.

\end{enumerate}
\end{proof}
%%% End of supporting lemma 1

%%% SUPPORTING LEMMA 2
\begin{myLemma}[Supporting Lemma 2] Suppose
\label{lem:sup2}
\begin{enumerate}
\item $\langle\iota_0, \theta_0', \sigma_0',\rho_0' \rangle~\leadsto~\langle\iota_1',
\theta_1', \sigma_1',\rho_1' \rangle~\leadsto^{n-1}~\langle\iota_n',
 \theta_n', \sigma_n',\rho_n' \rangle$,
\item $\langle\iota_0, \theta_0'', \sigma_0'', \rho_0'' \rangle~\leadsto~\langle\iota_1'',
\theta_1'', \sigma_1'',\rho_1'' \rangle~\leadsto^{m-1}~\langle\iota_m'',
\theta_m'', \sigma_m'',\rho_m'' \rangle $,
\item $(\rho_0'  \sim \rho_0'')$, $(\sigma_0' \eq_{\rho_0',\rho_0''} \sigma_0'')$, $(\theta_0' \eq \theta_0'')$, 
\item $ (\Gamma(!\rho_0')=\Gamma(!\rho_0'')=L)$, $ (\Gamma(!\rho_n')=\Gamma(!\rho_m'')=L)$,
\item $\forall(0 < i < n).(\Gamma(!\rho_i') = H)~\wedge~
\forall(0 < j < m).(\Gamma(!\rho_j'') = H)$,
\end{enumerate}
then $(\iota_n' = \iota_m'')$, $(\rho_n' \sim \rho_m'')$, $(\sigma_n'
\eq_{\rho_n', \rho_m''} \sigma_m'')$, and $(\theta_n' \eq \theta_m'')$.
\end{myLemma}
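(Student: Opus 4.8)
The plan is to view the two runs as a \emph{high diamond}: from equivalent states with a low $pc$, the shared instruction $\iota_0$ raises the $pc$ to $H$, the two runs then evolve independently while the $pc$ stays $H$, and they must reconverge precisely at $\IPD(\iota_0)$, where the $pc$ drops back to $L$. The first observation I would record is that whether the step $\iota_0 \leadsto \iota_1$ raises the $pc$ to $H$ depends only on the equivalence classes (values and labels) of the operands of $\iota_0$ read from $!\sigma_0$, and these agree across the two runs since $\sigma_0' \sim^{\beta}_{\rho_0',\rho_0''} \sigma_0''$ and $\Gamma(!\rho_0') = \Gamma(!\rho_0'') = L$ (a $\star$-labelled operand would abort the step in both runs, contradicting that a step is taken at all). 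Hence $\Gamma(!\rho_1') = H$ iff $\Gamma(!\rho_1'') = H$, and because the endpoints carry a low $pc$ this is equivalent to $n > 1$, resp.\ $m > 1$; so either $n = m = 1$ or $n,m > 1$, and no mixed case arises.

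If $n = m = 1$, then $\Gamma(!\rho_1') = \Gamma(!\rho_1'') = L$ and the conclusion is exactly that of Supporting Lemma~\ref{lem:sup1} applied to the single step out of $\iota_0$; the instruction pointers agree ($\iota_1' = \iota_1''$) by inspection of the rules, since a non-branching instruction has its successor fixed by the common CFG and a branch executed here has a low, hence equal, guard value.

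If $n,m > 1$, I would argue in three stages. First, $\Gamma(!\rho_1') = \Gamma(!\rho_1'') = H$, so Supporting Lemma~\ref{lem:sup1} applies to the first step and yields $\rho_1' \sim \rho_1''$ together with $\sigma_1' \sim^{\beta}_{\rho_1',\rho_1''} \sigma_1''$ and $\theta_1' \sim^{\beta} \theta_1''$ (a branching instruction allocates nothing, so the bijection is unchanged). Second, I analyse run~1 alone: steps $1,\dots,n-1$ all carry $pc = H$, so Corollaries~\ref{cor:Evolution_callstacks} and~\ref{cor:Evolution_heap} on $\langle \iota_1',\dots\rangle \leadsto^{n-2} \langle \iota_{n-1}',\dots\rangle$ give $\rho_1' \sim \rho_{n-1}'$, $\sigma_1' \sim^{\beta}_{\rho_1',\rho_{n-1}'} \sigma_{n-1}'$, $\theta_1' \sim^{\beta} \theta_{n-1}'$; the final step $\langle \iota_{n-1}',\dots\rangle \leadsto \langle \iota_n',\dots\rangle$ has $\Gamma(!\rho_{n-1}') = H$, so the Confinement Lemma~\ref{lem:confinement} gives $\rho_{n-1}' \sim \rho_n'$, $\sigma_{n-1}' \sim^{\beta}_{\rho_{n-1}',\rho_n'} \sigma_n'$, $\theta_{n-1}' \sim^{\beta} \theta_n'$; and these chain (the lowest $H$-labelled node, hence the preserved call-stack prefix, is common to all $pc$-stacks occurring here) into $\rho_1' \sim \rho_n'$, $\sigma_1' \sim^{\beta}_{\rho_1',\rho_n'} \sigma_n'$, $\theta_1' \sim^{\beta} \theta_n'$. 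Since $\Gamma(!\rho_0') = L$, $\rho_0'$ contains only $L$-labelled nodes, so the first step introduces exactly one node above the low part, the node associated with $\iota_0$, carrying $\IPD(\iota_0)$ and $\cf(\iota_0)$; by monotonicity of the $pc$-stack and the stack discipline of $\isIPD$ this is the last node to be popped, and the $pc$ returns to $L$ exactly at the step that pops it, a step landing on an instruction equal to $\IPD(\iota_0)$. Hence $\iota_n' = \IPD(\iota_0)$. Third, run~2 is symmetric, giving $\rho_1'' \sim \rho_m''$, $\sigma_1'' \sim^{\beta}_{\rho_1'',\rho_m''} \sigma_m''$, $\theta_1'' \sim^{\beta} \theta_m''$ and $\iota_m'' = \IPD(\iota_0)$. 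Combining: $\iota_n' = \IPD(\iota_0) = \iota_m''$, and transitivity of $\sim$ on $pc$-stacks together with composition of the bijections through the stage-one equivalences give $\rho_n' \sim \rho_m''$, $\sigma_n' \sim_{\rho_n',\rho_m''} \sigma_m''$ and $\theta_n' \sim \theta_m''$ for some bijection extending $\beta$.

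The delicate point is the argument in stage two that both runs \emph{must} return exactly to $\IPD(\iota_0)$: this relies on the monotonicity of the $pc$-stack (the node pushed at the first step is the deepest high node) together with the correctness of the $\IPD$/SEN scoping developed earlier, including the case in which $\IPD(\iota_0)$ lies in an earlier call-frame because the high branch left the current function through an exception; and on transporting the call-stack relation $\sim_{\cdot,\cdot}$ across the sequence of distinct $pc$-stack parameters, which is not literally a transitive relation but whose underlying invariant --- a common, pointwise-$\sim^{\beta}$ prefix up to the lowest $H$-labelled node --- is preserved by every step involved here.
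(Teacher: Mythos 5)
There is a genuine gap: your claim that ``either $n = m = 1$ or $n,m > 1$, and no mixed case arises'' is false, and the paper's own proof spends a dedicated sub-case on exactly the mixed situation ($n=1, m>1$ and its mirror). Your justification conflates two different things: the label that $\iota_0$ \emph{pushes} (which indeed depends only on the operand labels and hence agrees across the two runs), and the label $\Gamma(!\rho_1')$ at the \emph{top of the stack after the step completes}. Every rule ends with $\rho' := \isIPD(\iota',\rho'',\sigma)$, so the freshly pushed $H$ node is popped immediately whenever the chosen successor $\iota'$ happens to be $\IPD(\iota_0)$ --- and for an $H$-labelled guard the two runs may choose \emph{different} successors. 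Concretely, for \texttt{jfalse} compiled from \texttt{if (h) \{ \ldots \}} with $h$ labelled $H$ and differing in value, the run taking the empty branch lands directly on $\IPD(\iota_0)$, pops the $H$ node in the same step, and has $n=1$ with $\Gamma(!\rho_1')=L$, while the other run enters the body with $\Gamma(!\rho_1'')=H$ and $m>1$. So $\Gamma(!\rho_1')=H \Leftrightarrow \Gamma(!\rho_1'')=H$ does not follow from operand equivalence, and your case split is incomplete.

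The omission is not merely cosmetic: in the mixed case you cannot invoke Lemma~\ref{lem:sup1} with matching post-step $pc$ labels (its hypothesis $\Gamma(!\rho_1')=\Gamma(!\rho_2')$ fails), nor can you run the confinement/evolution corollaries on the short run, so a separate argument is needed. The paper supplies it: for $n=1$ one observes that $\iota_1'=\IPD(\iota_0)$, that the long run must also terminate at $\IPD(\iota_0)$ by the IPD-popping property, and that the one low step of the short run leaves $\sigma$ and $\theta$ untouched for the instructions where this can happen (\texttt{jfalse}, \texttt{loop-if-less}), with a small extra argument for \texttt{get-pnames} where the written registers all acquire label $H$ because the base is $H$. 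Apart from this missing case, your treatment of $n,m>1$ (Lemma~\ref{lem:sup1} for the first step, Corollaries~\ref{cor:Evolution_callstacks} and~\ref{cor:Evolution_heap} plus Lemma~\ref{lem:confinement} for the high segment, and the IPD property for reconvergence at $\IPD(\iota_0)$) is essentially the paper's argument.
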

\begin{proof}
  Starting with the same instruction and high context in both the runs, we
  might get two different instructions, $\iota_1'$ and
  $\iota_1''$. This is only possible if $\iota$ was some branching
  instruction in the first place and this divergence happened in a
  high context. Now,
  \begin{enumerate}
 \item To prove $\iota_n' = \iota_m''$: \\From the property of the IPDs
    we know that if $\iota_0$ pushes a $H$ node on top of pc-stack
    which was originally $L$, $\IPD(\iota_0)$ pops
    that node. Since we start
    from the same instrucion $\iota_0$, $\iota_n' = \iota_m'' =
    \IPD(\iota)$, where $\Gamma(!\rho) = L$. 
  %   $\exists \iota_x. \IPD(\iota_0) = \iota_x$.  From Lemma
  % \ref{lem:context-pres}, $\Gamma(!\rho_n') = L$ at $\iota_x$ and
  % $\Gamma(!\rho_m'') = L$ at $\iota_x$, such that $\iota_n' = \iota_m'' =
  % \iota_x$.
  
  \item To prove $\rho_n' \sim \rho_m''$: 
    \begin{itemize}
      \item $n > 1$ and $m > 1$: $\Gamma(!\rho_1') =
        \Gamma(!\rho_1'')$, because $\iota_0$ pushes equal nodes and
        $\iota_1',\iota_1''$ are not the IPDs. As $\rho_0' \sim \rho_0''$ and
    $\Gamma(!\rho_1') = \Gamma(!\rho_1'')$, 
    from Lemma~\ref{lem:sup1} we get
    $\rho_1' \sim \rho_1''$ and $!\rho_1'.\ipd = !\rho_1''.\ipd =
    \IPD(\iota_0)$, if $\iota_1' \neq \IPD(\iota_0)$ and $\iota_1''
    \neq \IPD(\iota_0)$. As $\iota_n' = \iota_m'' = \IPD(\iota_0)$, it pops
    the $!\rho_1'$ and $!\rho_1''$, which correspond to $\rho_n'$ and $\rho_m''$ in
    the $n$th and $m$th step. (IPD is the point where we pop the final
    $H$ node on the pc-stack.)
    Because $\rho_1' \sim \rho_1''$ and from
    Corollary~\ref{cor:Evolution_callstacks}, $\rho_n' \sim \rho_m''$.
      \item $n = 1$ and $m > 1$: If $\Gamma(!\rho_1') \neq \Gamma(!\rho_1'')$ and $\iota_1' =
    \IPD(\iota_0)$, then $\Gamma(!\rho_1') = L$. 
     It pops the node pushed by $\iota_0$,
    i.e., $\Gamma(!\rho_n') = L$. In the other run as
    $\Gamma(!\rho_1'')=H$ and $\Gamma(!\rho_m'')
    = L$, by the property of IPD $\iota_m'' = \IPD(\iota_0)$, which
    would pop from the pc-stack $!\rho_1''$, the first frame labelled
    $H$ on the pc-stack. Thus, $\rho_n' \sim \rho_m''$.
       \item $n > 1$ and $m = 1$: Symmetric case of the above. 
      \end{itemize}
  \item To prove $\sigma_n' \eq_{\rho_n', \rho_m''} \sigma_m''$: 
    \begin{enumerate}
      \item $n > 1$ and $m > 1$: From Lemma~\ref{lem:sup1}
    we get $\sigma_1' \eq_{\rho_1', \rho_1''} \sigma_1''$. From
    Corollary~\ref{cor:Evolution_callstacks} we get $\sigma_1' \eq_{\rho_1', \rho_{n-1}'}
    \sigma_{n-1}'$. And from Lemma~\ref{lem:confinement} we have
    $\sigma_{n-1}' \eq_{\rho_{n-1}', \rho_n'} \sigma_n'$. 
    As, $\iota_n' = \iota_m'' = \IPD(\iota_0)$, we compare all
    call-frames of $\sigma_n'$ and $\sigma_m''$. As the IPD of an
    instruction can lie only in the same call-frame, comparison for all
  call-frames in $\sigma_0'$ and $\sigma_0''$ suffice. \\
  $\sigma_0' \eq_{\rho_0', \rho_0''} \sigma_0'' \Rightarrow \forall i.((\mu_i
  \in \sigma_0' \wedge \nu_i \in \sigma_0''),  (\mu_i \eq \nu_i)
  \wedge \forall( r \in \mu_i,\,\nu_i). (\mu_i(r)=v_1 \wedge \nu_i(r) =
  v_2,\, v_1 \eq v_2))$. \\
  Let $v_1$ and $v_2$ be represented by
  $v_n$ and $v_m$ in $\sigma_n'$ and $\sigma_m''$, respectively. The
  call-frames in $\sigma_n'$ and $\sigma_m''$ are represented by
  $\mu_n$ and $\nu_m$, respectively.
  \begin{itemize}
    \item We do case analysis on the different cases of
      Definitions~\ref{def:lval} for $v_1$ and
      $v_2$, to show $v_n \eq v_m$.
      As $(\forall(1 \leq i < n).(\Gamma(!\rho_i') = H)~\wedge~
      \forall(1 \leq j < m).(\Gamma(!\rho_j'') = H))$: 
      \begin{itemize}
        \item If $v_1 = v_2 \wedge \Gamma(v_1) = \Gamma(v_2) = L$, then
          either $v_n = v_m$ or $((\star = \Gamma(v_n)) \vee (\star =
          \Gamma(v_m)))$. By Definition~\ref{def:lval}(1), $v_n \eq v_m$.
        \item If $\Gamma(v_1) = \Gamma(v_2) = H$, then $\Gamma(v_n) =
          \Gamma(v_m) = H$. By Definition~\ref{def:lval}(2), $v_n \eq v_m$.
        % \item If $v_1 \in \Empty \wedge \Gamma(v_2) = H$, then either
        %   $\Gamma(v_n) = H$ or $v_n \in \Empty$ and $\Gamma(v_m) = H$. By
        %   Definition 1(b, c), $v_n \eq v_m$. Similarly, if $v_2 \in
        %   \Empty \wedge \Gamma(v_1) = H$, $v_n \eq v_m$.
        \item If $\star = \Gamma(v_1)$, then $\star = \Gamma(v_n)$ and
          if $\star = \Gamma(v_2)$, then $\star = \Gamma(v_m)$. By
          Definition~\ref{def:lval}(1), $v_n \eq v_m$.
        % \item Other cases follow from the above cases. 
      \end{itemize}

    \item Lets $S_1$ and $S_2$ be the scopechains in $\sigma_0'$ and
      $\sigma_0''$. And $S_n$ and $S_m$ represent the scopechains in
      $\sigma_n'$ and $\sigma_m''$, i.e., $S_n$ and $S_m$ are the
      respective scope-chains in the $n$th and $m$th
     step of the two runs and $\ell_n$ and $\ell_m$ are their
     node labels. For scope chain pointers the
      following cases arise: 
      \begin{enumerate}
      \item $S_1 = S_2 = nil$: In this case $S_n$ and $S_m$ either remain nil or
        its head will have a $H$ label, because of the rules of the
        instructions that modify the scope-chain.
      \item $S_1 = (s_1,\ell_1):\Sigma_1 $ and $ S_2 =
        (s_2,\ell_2):\Sigma_2$:
        \begin{enumerate}
            \item $\ell_1 = \star \vee \ell_{2} = \star$: In this case $\ell_n$
              and $\ell_m$ will be $\star$ too.
            \item $\ell_1 = \ell_{2} = H$:  In this case $\ell_n$
              and $\ell_m$ will be H too.
            \item $\ell_1 = \ell_2 = L \wedge \Sigma_1 \eq \Sigma_2$: In
            this case $\ell_n = \star$ and $\ell_m = \star$  or scopechains
            remain unchanged.
        \end{enumerate}
      \end{enumerate}
  \end{itemize}
      \item $n = 1$ and $m > 1$: \\
        In case of \emph{jfalse} and \emph{loop-if-less}, $\sigma_0' =
        \sigma_1'$ and $\sigma_0'' = \sigma_1''$. And in case of
        \emph{get-pnames}, if $n =1$ and $m \neq 1$, $\Upsilon(!\sigma_0'(base))
        = \mathit{undefined}$ and $\Upsilon(!\sigma_0''(base))
        \neq \mathit{undefined}$. Because $\sigma_0' \eq \sigma_0''$,
        $!\sigma_0'(base) \eq !\sigma_0''(base)$. Hence,
        $\Gamma(!\sigma_0'(base)) = \Gamma(!\sigma_0''(base)) =
        H$. Thus, $\Gamma(!\sigma_1'(\dst)) = \Gamma(!\sigma_1'(i))
        =\Gamma(!\sigma_1'(size)) = H$ and similarly, $\Gamma(!\sigma_1''(\dst)) = \Gamma(!\sigma_1''(i))
        =\Gamma(!\sigma_1''(size)) = H$. Other registers remain
        unchanged and so do the other call-frames. Thus,  $\sigma_1'
        \eq_{\rho_1', \rho_1''} \sigma_1''$. From the case $(a)$
        above, we know that if $\sigma_1' \eq_{\rho_1', \rho_1''}
        \sigma_1''$, then $\sigma_n' \eq_{\rho_n', \rho_m''} \sigma_m''$.
      \item $n > 1$ and $m = 1$: Symmetric case of the above. 
  \end{enumerate}
  \item To prove $\theta_n' \eq \theta_m''$: 
    \begin{enumerate}
       \item $n > 1$ and $m > 1$: 
   From Lemma~\ref{lem:sup1}
    we get $\theta_1' \eq \theta_1''$. From
    Corollary~\ref{cor:Evolution_heap} we get $\theta_1' \eq
    \theta_{n-1}'$. And from Lemma~\ref{lem:confinement} we have
    $\theta_{n-1}' \eq \theta_n'$. 
   % Rest of the reasoning is similar to
   % the reasoning in the case above for values except the $\star$.
   Assume $O_1$ is an object at $x$ in $\theta_1'$ and $O_2$ is an
   object at $y$ in $\theta_1''$, such that $(x,y) \in \beta$ and
   $O_n$ and $O_m$ are the respective objects in the $n$th and $m$th
   step of the two runs. 
    We do case analysis on the different cases of
      Definitions~\ref{def:obj} for $O_1$ and
      $O_2$, to show $O_n \eq O_m$.
      \begin{itemize}
        \item If $\Gamma(O_1) = \Gamma(O_2) = H$, then $\Gamma(O_n) =
          \Gamma(O_m) = H$. By Definition~\ref{def:obj}, $O_n \eq O_m$.
        \item If $O_1 = O_2 \wedge \Gamma(O_1) = \Gamma(O_2) = L$, then
          $O_n = O_m  \wedge \Gamma(O_n) = \Gamma(O_m) = L$. 
          By Definition~\ref{def:obj}, $O_n \eq O_m$.
      \end{itemize}
     Similarly, for function objects, the structure labels would remain $H$ if
     they were originally $H$ or will remain $L$ with the same CFGs
     and scope-chains. 
   \item $n = 1$ and $m > 1$: In case of \emph{jfalse},
     \emph{loop-if-less}, \emph{get-pnames}, $\theta_0' =
        \theta_1'$ and $\theta_0'' = \theta_1''$. Thus, $\theta_1' \eq
        \theta_1''$. From the case $(a)$ above, we know that if  $\theta_1' \eq
        \theta_1''$, then  $\theta_n' \eq \theta_m''$.
   \item $n > 1$ and $m = 1$: Symmetric case of the above. 
   \end{enumerate}

  \end{enumerate}
  
\end{proof}
%%End of supporting lemma 2

% Now, we define the notion of a trace, heap-trace and epoch-trace as given
% by definitions~\ref{def:trace}, \ref{def:htrace}  and \ref{def:etrace}.

\begin{mydef}[Trace]
\label{def:trace}
A trace is defined as a sequence of configurations or states resulting from a
program evaluation, i.e., for a program evaluation $\mathcal{P} = s_1 \leadsto s_2
\leadsto \ldots \leadsto s_n$ where $s_i = \langle
\iota_i, \theta_i, \sigma_i,\rho_i\rangle$, the corresponding trace is given
as $\mathcal{T}(\mathcal{P}) := s_1 :: s_2 :: \ldots :: s_n$.
\end{mydef}

% \begin{mydef}
% \label{def:htrace}
% \emph{Heap-trace:}
% A heap-trace ($\mH$) is defined as the heap projection of a trace $\tau$,
% i.e, $\mH:=\Pi_{\theta}(\tau)$.
% \end{mydef}

\begin{mydef}[Epoch-trace]
\label{def:etrace}
% An epoch-trace, $\mathcal{E}:\mathcal{H} \rightarrow \mathcal{H}$,  is a function
% that takes a heap trace and returns another heap trace by
% grouping the sequences of low equivalent heaps. 
An epoch-trace ($\mE$) over a trace $\mathcal{T} = s_1 :: s_2 :: \ldots :: s_n$ where $s_i = \langle
\iota_i, \theta_i, \sigma_i,\rho_i\rangle$  is defined inductively as:
\begin{align*}
  \mE(nil) &:=~nil \\
  \mE(s_i :: \mathcal{T}) &:=
  \begin{cases}
   s_i :: \mE(\mathcal{T})~ & \mathit{if}~~~\Gamma(!\rho_i) = L,\\
   \mE(\mathcal{T}) & \mathit{else~if} ~~\Gamma(!\rho_i) = H.
  \end{cases}
\end{align*}
\end{mydef}

% \begin{mydef}[Low prefix $\preceq_L$]
% \label{def:pre}
% Given two epoch-traces, $\mE$ and $\mE'$ with $n$ and $m$ states
% respectively such that $n \leq m$, $\mE \preceq_L \mE'$ iff $\forall
% s_i \in \mE.s_i \eq s_i'$ where $s_i' \in \mE'$. 
% \end{mydef}

%And, finally, in accordance with the attacker model, 
% \begin{mydef}\label{def:ni}
% \emph{Termination-insensitive non-interference:} 
% Two runs starting from low-equivalent call-stacks and
% low-equivalent global objects, beginning the execution from the same
% instruction, are non-interfering if the epoch-trace
% of one is a prefix of the epoch-trace of the other.
% \end{mydef}

\begin{theorem}[Termination-Insensitive Non-interference]
\label{thm:ni} 
\begin{tabbing}
Suppose $\mathcal{P}$ and  $\mathcal{P}'$ are two program
evaluations. \\
Then for their respective epoch-traces given by:\\
\hspace{5 mm} $\mE(\mathcal{T}(\mathcal{P})) = s_1 :: s_2 :: \ldots ::
s_n$,\\
\hspace{5 mm} $\mE(\mathcal{T}(\mathcal{P}')) = s_1' :: s_2' :: \ldots ::
s_m'$,\\
if $s_1 \eq s_1'$ and $n \leq m$,\\
then\\
\hspace{5 mm} $\exists \beta_n \supseteq \beta:~ s_n \sim^{\beta_n} s_n'$
%\hspace{5 mm} $\mE(\mathcal{T}(\mathcal{P})) \preceq_L \mE(\mathcal{T}(\mathcal{P}'))$.
\end{tabbing}
\end{theorem}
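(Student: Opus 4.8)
\medskip
\noindent\textbf{Proof proposal.}
The plan is to prove the theorem by induction on $n$, the length of the shorter epoch-trace, with Supporting Lemma~\ref{lem:sup2} discharging each inductive step. First I would make explicit how an epoch-trace sits inside the underlying trace. Writing $s_i = \langle\iota_i,\theta_i,\sigma_i,\rho_i\rangle$, Definition~\ref{def:etrace} tells us that $\mathcal{E}(\mathcal{T}(\mathcal{P})) = s_1 :: \cdots :: s_n$ retains exactly those states of $\mathcal{T}(\mathcal{P})$ with $\Gamma(!\rho_i) = L$, in order. Hence for every $i < n$ the real trace has the shape $s_i \leadsto w_1 \leadsto \cdots \leadsto w_{k-1} \leadsto s_{i+1}$ for some $k \geq 1$, where each $w_j$ carries $H$ on top of its $pc$-stack (these are precisely the states $\mathcal{E}$ discards) while $\Gamma(!\rho_i) = \Gamma(!\rho_{i+1}) = L$. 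Together with the symmetric decomposition for $\mathcal{P}'$ between $s_i'$ and $s_{i+1}'$ (of its own length $k'$), this realizes hypotheses (1), (2), (4) and (5) of Lemma~\ref{lem:sup2}, with its ``$n$'' and ``$m$'' instantiated to $k$ and $k'$.

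The induction then runs as follows. For $i = 1$ we are given $s_1 \eq s_1'$, so set $\beta_1 := \beta$. For the inductive step, assume $s_i \sim^{\beta_i} s_i'$ with $\beta_i \supseteq \beta$; unfolding Definition~\ref{def:ceq} this gives $\iota_i = \iota_i'$, $\rho_i \sim \rho_i'$, $\theta_i \sim^{\beta_i} \theta_i'$ and $\sigma_i \sim^{\beta_i}_{\rho_i,\rho_i'} \sigma_i'$, which supplies hypothesis (3) of Lemma~\ref{lem:sup2} for the two segments identified above (in particular $\iota_i = \iota_i'$ is what makes both segments start from the same instruction, as the lemma demands, and hypothesis (4) holds because $s_i, s_i', s_{i+1}, s_{i+1}'$ are all epoch-trace states). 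Applying Lemma~\ref{lem:sup2} yields $\iota_{i+1} = \iota_{i+1}'$, $\rho_{i+1} \sim \rho_{i+1}'$, $\sigma_{i+1} \sim^{\beta_{i+1}}_{\rho_{i+1},\rho_{i+1}'} \sigma_{i+1}'$ and $\theta_{i+1} \sim^{\beta_{i+1}} \theta_{i+1}'$ for some $\beta_{i+1} \supseteq \beta_i$ --- the extension records the heap locations allocated inside the segment, and is exactly the monotone $\beta$-growth produced by Supporting Lemma~\ref{lem:sup1}, which Lemma~\ref{lem:sup2} invokes on its leading low step. By Definition~\ref{def:ceq} this is $s_{i+1} \sim^{\beta_{i+1}} s_{i+1}'$. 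Iterating to $i = n$ --- well-defined precisely because $n \leq m$ guarantees that $s_n'$ exists --- yields $s_n \sim^{\beta_n} s_n'$ with $\beta_n \supseteq \beta$, as required.

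I expect the inductive skeleton above to be routine; essentially all of the difficulty is already absorbed into Supporting Lemma~\ref{lem:sup2}, and two aspects of applying it need care. The first is the $\beta$-bookkeeping: since Lemma~\ref{lem:sup2} is stated with a fixed bijection while successive epoch segments allocate fresh locations, the argument must thread a growing chain $\beta = \beta_1 \subseteq \beta_2 \subseteq \cdots \subseteq \beta_n$ and appeal to monotonicity of all of the equivalence relations in Definitions~\ref{def:val}--\ref{def:cs} under these extensions. The second and harder point lives inside Lemma~\ref{lem:sup2} itself: the asymmetric case where one run reaches the common IPD in a single low step while the other first takes a longer run of high steps. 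There one must show that the branch responsible for the divergence was taken in an $H$ context, so that the longer side only reads or writes $H$- or $\star$-labelled data until it rejoins at the IPD --- which is where the $\isIPD$ popping discipline, the Confinement Lemma~\ref{lem:confinement}, and Corollaries~\ref{cor:Evolution_callstacks} and~\ref{cor:Evolution_heap} carry the weight. Termination-insensitivity contributes nothing nontrivial: we never reason about whether $\mathcal{P}$ or $\mathcal{P}'$ halts, only about the finite prefixes ending in $s_n$ and $s_n'$, and the hypothesis $n \leq m$ is used solely to ensure $s_n'$ is defined.
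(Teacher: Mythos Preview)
Your proposal is correct and matches the paper's approach: induction on the length of the shorter epoch-trace, with the supporting lemmas discharging each step. The only presentational difference is that the paper makes an explicit case split in the inductive step --- when both epoch segments are a single transition it applies Lemma~\ref{lem:sup1} directly (obtaining $\beta_{k+1}\supseteq\beta_k$), and only when at least one segment contains intermediate $H$-states does it invoke Lemma~\ref{lem:sup2} (obtaining $\beta_{k+1}=\beta_k$, since the divergence-causing branch instruction allocates nothing) --- whereas you route every step through Lemma~\ref{lem:sup2} and recover the $\beta$-extension from the Lemma~\ref{lem:sup1} call inside it. That split is exactly what resolves the $\beta$-bookkeeping concern you flagged, so you may find it cleaner to adopt it.
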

\begin{proof}
% Assume $\mE(\mathcal{T}(\mathcal{P})) = s_1 :: s_2 :: \ldots ::
% s_n$ and $\mE(\mathcal{T}(\mathcal{P}')) = s_1' :: s_2' :: \ldots ::
% s_m'$.
Proof proceeds by induction on n.\\
Basis: $s_1 \eq s_1'$, by assumption. \\
IH: $s_k \sim^{\beta_k} s_k'$ where $\beta_k \supseteq \beta$. \\
To prove: $\exists \beta_{k+1} \supseteq \beta:~ s_{k+1}
\sim^{\beta_{k+1}} s_{k+1}'$. \\
Let $s_k \leadsto_i s_{k+1}$ and $s_k \leadsto_{i'} s'_{k+1}$, then:
\begin{itemize}
\item $i = i' = 1$: From Lemma~\ref{lem:sup1}, $s_{k+1}
  \sim^{\beta_{k+1}} s_{k+1}'$ where $\beta_{k+1} \supseteq \beta$.
\item $i > 1$ or $i' > 1$: From Lemma~\ref{lem:sup2}, $s_{k+1}
  \sim^{\beta_{k+1}} s_{k+1}'$ where $\beta_{k+1} = \beta$.
\end{itemize}
\end{proof}

\begin{myCor}
\label{cor:ni}
Suppose:
\begin{enumerate}
\item $\langle \iota_1, \theta_1, \sigma_1, \rho_1\rangle \sim^\beta
\langle\iota_2, \theta_2, \sigma_2, \rho_2\rangle$
\item $\langle \iota_1, \theta_1,\sigma_1,\rho_1\rangle
  \leadsto^{*} \langle \texttt{end} ,\theta_1', [],[]\rangle$
\item $\langle \iota_2, \theta_2,\sigma_2,\rho_2\rangle \leadsto^{*}
  \langle \texttt{end},\theta_2', [],[]\rangle$
\end{enumerate}
Then, $\exists \beta' \supseteq \beta$ such that $\theta_1'
\sim^{\beta'} \theta_2'$.
\end{myCor}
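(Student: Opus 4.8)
The plan is to read this corollary off Theorem~\ref{thm:ni}, the epoch-trace form of termination-insensitive non-interference, after checking that the two whole-program runs and their endpoints fit its hypotheses. Let $\mathcal{P}$ and $\mathcal{P}'$ be the evaluations of hypotheses (2) and (3). First I would note that a top-level interpreter invocation is entered with an empty pc-stack, and that by the convention under which the pc-stack's monotonicity and the push/pop rules are stated the top label of an empty pc-stack is $L$; hence $\Gamma(!\rho_1)=\Gamma(!\rho_2)=L$. By Definition~\ref{def:etrace} the states $s_1:=\langle\iota_1,\theta_1,\sigma_1,\rho_1\rangle$ and $s_1':=\langle\iota_2,\theta_2,\sigma_2,\rho_2\rangle$ are therefore exactly the first elements of the epoch-traces $\mathcal{E}(\mathcal{T}(\mathcal{P}))$ and $\mathcal{E}(\mathcal{T}(\mathcal{P}'))$, and hypothesis (1) gives $s_1\sim^\beta s_1'$.

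Next I would locate the terminal configurations inside these epoch-traces. The configuration $\langle\texttt{end},\theta_1',[],[]\rangle$ has an empty pc-stack, so $\Gamma(!\rho)=L$ and it is retained in $\mathcal{E}(\mathcal{T}(\mathcal{P}))$; being the last state of $\mathcal{P}$ it is the last epoch-trace element, call it $s_n$. Symmetrically $\langle\texttt{end},\theta_2',[],[]\rangle$ is the last element $s_m'$ of $\mathcal{E}(\mathcal{T}(\mathcal{P}'))$. Without loss of generality $n\le m$: otherwise swap the two runs and replace $\beta$ by $\beta^{-1}$, which is legitimate since partial bijections (Definition~\ref{def:pb}) and all the equivalence relations of Section~\ref{sec:app-ni} are symmetric under inverting the bijection. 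Applying Theorem~\ref{thm:ni} then yields a bijection $\beta_n\supseteq\beta$ with $s_n\sim^{\beta_n}s_n'$, where $s_n'$ is the $n$-th (not a priori the last) element of $\mathcal{E}(\mathcal{T}(\mathcal{P}'))$.

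The remaining step is to argue $s_n'=s_m'$, i.e. that the two epoch-traces in fact have equal length. By Definition~\ref{def:ceq}, $s_n\sim^{\beta_n}s_n'$ forces the current-instruction components of the two states to coincide, so the instruction of $s_n'$ is \texttt{end}. But the semantic rule for \texttt{end} produces no successor configuration, so any state on instruction \texttt{end} is necessarily the final state of its run; hence $s_n'$ is the last state of $\mathcal{T}(\mathcal{P}')$ and, carrying an empty (hence $L$-topped) pc-stack, the last element of $\mathcal{E}(\mathcal{T}(\mathcal{P}'))$ --- that is, $s_n'=s_m'=\langle\texttt{end},\theta_2',[],[]\rangle$ and $n=m$. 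Unfolding $s_n\sim^{\beta_n}s_n'$ once more via Definition~\ref{def:ceq} exposes $\theta_1'\sim^{\beta_n}\theta_2'$ as one of its conjuncts, so taking $\beta':=\beta_n\supseteq\beta$ completes the proof.

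I expect essentially no genuine difficulty here, since the real work --- the step-by-step case analyses of Lemmas~\ref{lem:sup1} and~\ref{lem:sup2}, the confinement argument for high contexts (Lemma~\ref{lem:confinement}), and the ``merge at the IPD'' reasoning --- is already encapsulated in Theorem~\ref{thm:ni}. The only delicate points are bookkeeping: (a) confirming that the start and end configurations of a whole-program run really do appear in their epoch-traces, which rests on the empty-pc-stack convention for top-level evaluation (if one wanted to permit non-empty but equivalent initial pc-stacks, one would need an extra appeal to confinement to show that nothing adversary-visible changes until the pc-stack top drops back to $L$); and (b) the determinism/termination observation that an \texttt{end} state is terminal, which is what lets us identify the $n$-th state of the longer epoch-trace with its last state. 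If there is a subtle gap, it is most plausibly in (a).
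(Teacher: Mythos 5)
Your proposal is correct and follows the same overall route as the paper: both reduce the corollary to Theorem~\ref{thm:ni} applied to the two epoch-traces, whose first elements are the given equivalent initial states and whose last elements are the two \texttt{end} configurations (retained in the epoch-traces because an empty pc-stack counts as an $L$ context). The one place where you genuinely diverge is the justification that the theorem is being applied to the \emph{final} state of the longer run. The paper asserts outright that in $L$ context both runs push and pop the same number of nodes and therefore take the same number of $L$-context steps, i.e., it claims $n=m=k$ up front as a fact about the semantics. You instead take $n\le m$ without loss of generality, invoke the theorem to obtain $s_n\sim^{\beta_n}s_n'$, and then observe that Definition~\ref{def:ceq} forces the instruction components to agree, so $s_n'$ sits on \texttt{end}; since \texttt{end} has no successor, $s_n'$ is terminal, hence the last epoch-trace element, giving $n=m$ a posteriori. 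Your derivation is the more self-contained of the two: the paper's lockstep claim is essentially a restatement of what the theorem's proof establishes rather than something independently argued in the corollary, whereas you extract $n=m$ from the theorem's conclusion itself. The only cost is the two small observations you already flag explicitly --- symmetry of all the equivalence relations under inverting $\beta$ (needed for the WLOG step) and terminality of \texttt{end} --- both of which hold in the paper's setup.
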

\begin{proof}
$\sigma_1, \sigma_2$ and $\rho_1, \rho_2$ are empty at the end of $*$
steps. From the semantics, we know that in $L$ context both runs would
push and pop the same number of nodes. Thus, both take same number of
steps in $L$ context. Let $k$ be the number of states in $L$
context. Then in Theorem~\ref{thm:ni}, $n = m = k$. Thus, $s_k
\sim^{\beta_k} s_k'$, where $s_k = \langle \mathit{end}, \theta_1',
[], []\rangle$ and $s_k' = \langle \mathit{end}, \theta_2',
[], []\rangle$. By Definition~\ref{def:ceq}, $\theta_1' \sim^{\beta'}
\theta_2'$, where $\beta' = \beta_k$.
\end{proof}

\end{document}